\titlespacing{\section}{0pt}{2pt}{2pt}
\titlespacing{\subsection}{0pt}{2pt}{2pt}
\newtheorem{theorem}{Theorem}[section]
\newtheorem{lemma}[theorem]{Lemma}
\newtheorem{definition}[theorem]{Definition}
\newtheorem{corollary}[theorem]{Corollary}
\newtheorem{proposition}[theorem]{Proposition}
\newtheorem*{remark}{Remark}
\newenvironment{construction}{
    \noindent\textbf{Construction:} 
}{
    \par 
}
\DeclarePairedDelimiter{\enc}{\langle}{\rangle}
\title{\rule{
\linewidth}{2.5pt}\\
\vspace{6pt}
\textbf{Realizable Circuit Complexity:}\\
\textbf{Embedding Computation in Space–Time}\\
\rule{\linewidth}{1pt}
}
\date{}
\author{
  Benjamin Prada\\
  Bellini College of AI,\\
  Cybersecurity, and Computing\\
  University of South Florida\\
  Tampa, FL 33617 \\
  \texttt{bprada@usf.edu} \\
  \and
  Ankur Mali\\
  Bellini College of AI,\\
  Cybersecurity, and Computing\\
  University of South Florida\\
  Tampa, FL 33617 \\
  \texttt{ankurarjunmali@usf.edu} \\
}
\begin{document}

\maketitle

\renewcommand{\abstractnamefont}{\normalfont\large\bfseries}
\begin{abstract}
\normalsize

Classical circuit complexity characterizes parallel computation in purely combinatorial terms, ignoring the physical constraints that govern real hardware. The standard classes $\mathbf{NC}$, $\mathbf{AC}$, and $\mathbf{TC}$ treat unlimited fan-in, free interconnection, and polynomial gate counts as feasible---assumptions that conflict with geometric, energetic, and thermodynamic realities. We introduce the family of \emph{realizable circuit classes} $\mathbf{RC}_d$, which model computation embedded in physical $d$-dimensional space. Each circuit in $\mathbf{RC}_d$ obeys conservative realizability laws: volume scales as $\mathcal{O}(t^d)$, cross-boundary information flux is bounded by $\mathcal{O}(t^{d-1})$ per unit time, and growth occurs through local, physically constructible edits. These bounds apply to all causal systems, classical or quantum. Within this framework, we show that algorithms with runtime $\omega(n^{d/(d-1)})$ cannot scale to inputs of maximal entropy, and that any $d$-dimensional parallel implementation offers at most a polynomial speed-up of degree $(d-1)$ over its optimal sequential counterpart. In the limit $d\to\infty$, $\mathbf{RC}_\infty(\mathrm{polylog})=\mathbf{NC}$, recovering classical parallelism as a non-physical idealization. By unifying geometry, causality, and information flow, $\mathbf{RC}_d$ extends circuit complexity into the physical domain, revealing universal scaling laws for computation.

\end{abstract}

\newpage

\startcontents[Main]
\printcontents[Main]{l}{1}{\section*{Table of Contents}}

\newpage

\section{Introduction}

Circuit complexity has provided the foundation for understanding parallel computation, with classes such as $\mathbf{P/poly}$, $\mathbf{NC}$, $\mathbf{AC}$, and $\mathbf{TC}$ serving as canonical models.  
These classes, however, are defined in purely combinatorial terms: efficiency is measured by gate count, depth, and uniformity, while physical constraints---geometry, bandwidth, and thermodynamics---are ignored.  
As a result, they depend on assumptions that are physically unrealistic: unbounded fan-in, cost-free long-range connections, polynomial gate count of unbounded degree, and neglect of the heat generated by irreversible operations.  
Results from VLSI theory, Landauer’s principle, and Rent’s rule all indicate that such abstractions fail to capture the scaling limits of real computation.

We propose the framework of \emph{realizable circuits} $\mathbf{RC}_d$, a family of circuit complexity classes constrained by geometry and causality.  
An $\mathbf{RC}_d$ family represents circuits realizable in $d$-dimensional Euclidean space under a conservative set of physical laws:
\begin{enumerate}[(i)]
\item \textbf{Causality:} all components lie within a causal region $B(\mathbf{q_0},\mathcal{O}(t))$.
\item \textbf{Information Capacity:} information flux across the causal boundary is bounded as $\mathcal{O}(t^{d-1})$.
\item \textbf{Gate Quantization:} each gate occupies finite volume with bounded fan-in and fan-out.
\item \textbf{Local Uniformity:} circuit families must evolve through locally realizable extensions.
\end{enumerate}

These principles apply to all physically local systems---classical, quantum, or otherwise---making $\mathbf{RC}_d$ broadly applicable across physical models of computation.  

The novelty of $\mathbf{RC}_d$ is twofold. First, it yields scaling laws absent in conventional circuit complexity: algorithms requiring $\omega(n^{d/(d-1)})$ time cannot process a steady input stream, and any parallel implementation can achieve at most a polynomial speed-up of degree $(d-1)$ over its sequential form. Second, it refines existing complexity classes in physically meaningful terms: as $d \to \infty$, $\mathbf{RC}_\infty(\mathrm{polylog}) = \mathbf{NC}$, but for finite $d$, $\mathbf{RC}_d$ is strictly smaller, excluding families that depend on non-physical assumptions.
\section{Background}

This section establishes the terms and context prerequisite for understanding the work. We begin by introducing the standard classes of circuit complexity theory and their associated hierarchy. Afterward, we motivate the need for a new class characterizing the limits of real-world circuits.

\subsection{Circuit complexity}

Circuit complexity measures the intrinsic difficulty of computing a function by the size and depth of a Boolean circuit that realizes it.  
For an input of length $n$, the circuit size---the total number of gates---is required to grow at most polynomially in $n$, while the circuit depth bounds the degree of parallelism achievable.

The three principal circuit families are
\[
\mathbf{NC} := \bigcup_{k>0}\mathbf{NC}^k, \qquad
\mathbf{AC} := \bigcup_{k>0}\mathbf{AC}^k, \qquad
\mathbf{TC} := \bigcup_{k>0}\mathbf{TC}^k,
\]
related as
\[
\mathbf{NC}^k \subseteq \mathbf{AC}^k \subseteq \mathbf{TC}^k \subseteq \mathbf{NC}^{k+1}.
\]
Each class consists of functions computable by polynomial-size circuits of depth $\mathcal{O}(\mathrm{polylog}(n))$.

The distinction lies in the permitted gate sets.  
In $\mathbf{NC}$, gates have bounded fan-in (typically 2).  
$\mathbf{AC}$ extends this to unbounded fan-in Boolean gates, while $\mathbf{TC}$ further includes \emph{threshold gates} that output $1$ if and only if the number of $1$-inputs exceeds a fixed threshold.  
These classes serve as the canonical formalization of parallel computation, abstracting time, communication, and synchronization costs within a combinatorial framework.

\subsection{Uniform circuits}

By default, circuit complexity classes require only that a circuit of the specified size and depth \emph{exists} for each input length $n$.  
The construction of these circuits may vary arbitrarily with $n$—in particular, the sequence $\{C_n\}$ need not be computable by any efficient algorithm.

To address this, the notion of \emph{uniformity} restricts the circuit family to those that can be generated algorithmically.  
A function $f$ in a circuit class $C$ is said to be \emph{$T$-uniform} in $C$ if there exists a function $F$ in class $T$ such that, for all $n$, $F(n)$ outputs a $C$-circuit computing $f$ on inputs of length $n$.  
Uniformity thus ensures that the mapping from input size to circuit structure is itself computable within resource bound $T$.

Different works implicitly assume various levels of uniformity—for instance, $\mathbf{NC}$ is often taken to mean $\mathbf{DLOGTIME}$-uniform $\mathbf{NC}$.  
In this work, all such assumptions are stated explicitly: unless otherwise specified, the classes $\mathbf{NC}, \mathbf{AC}$, and $\mathbf{TC}$ refer to their non-uniform definitions, and uniform restrictions are introduced only when required.

\subsection{Impracticality}

The standard parallel circuit classes—$\mathbf{NC}$, $\mathbf{AC}$, and $\mathbf{TC}$—offer elegant abstractions for reasoning about Boolean circuit families.  
However, when evaluated against physical constraints, these models prove impractical.  
They assume that arbitrarily many components can be interconnected without penalty and treat polynomial gate count as a universal proxy for feasibility, regardless of degree.  
Such assumptions were convenient in the 1970s–1990s for studying theoretical parallelism, but subsequent work in VLSI complexity \cite{MeadConway1980VLSI,Thompson1979ATVLSI}, Rent’s rule \cite{LandmanRusso1971Rent,Lanzerotti2005RentIBMJRD}, and energy-dominance analysis \cite{Horowitz2014EnergyProblem,SzeNASEM2024DataMovement} have demonstrated that they are physically unrealistic.  
\begin{enumerate}
    \item \textbf{Polynomial gate count as a feasibility proxy.}  
    In $\mathbf{P/poly}$ and its subclasses, feasibility is defined by the existence of circuits with polynomially many gates.  
    However, the polynomial degree can mask exponential resource blow-up in area, wiring, or latency.  
    VLSI area–time tradeoffs \cite{Thompson1979ATVLSI} show that global operations such as sorting or FFT require $A \cdot T^2 = \Omega(N^2)$ resources, even when gate count is modest.  
    Geometry and interconnects dominate cost, making gate count alone a poor indicator of realizability.

    \item \textbf{Unbounded fan-in.}  
    The classes $\mathbf{AC}$ and $\mathbf{TC}$ permit unbounded fan-in gates, effectively assuming that arbitrarily many signals can be aggregated within a single layer.  
    Foundational VLSI results \cite{MeadConway1980VLSI} and empirical wiring laws \cite{LandmanRusso1971Rent,Lanzerotti2005RentIBMJRD,Stroobandt2010RentRuleSurvey} demonstrate that I/O pins and interconnects scale sub-linearly with logic, imposing strict upper bounds on fan-in and fan-out.  
    Thus, unbounded fan-in is not physically realizable.

    \item \textbf{Neglect of energy and data movement.}  
    Data movement dominates energy consumption in modern computing \cite{Horowitz2014EnergyProblem,SzeNASEM2024DataMovement}.  
    Because $\mathbf{NC/AC/TC}$ ignore interconnects and memory-transfer energy, they overestimate the scalability of shallow-depth circuits that depend on global aggregation or broadcasting.
\end{enumerate}

In summary, classical circuit classes illuminate logical parallelism but not physical feasibility.  
They omit the geometric, thermodynamic, and data-movement constraints that govern real hardware.  
We therefore introduce the $\mathbf{RC}_d$ family of \emph{realizable circuits}—a framework that redefines circuit complexity under explicit causal, spatial, and energetic limits.

\subsection{Prior work}\label{sec:prior}

Spatial and physical constraints on computation have received surprisingly little formal treatment in
complexity theory. Nearly all foundational work in this direction traces back to a concise three-page note
by Fisher~\cite{Fisher1988}. Fisher modeled a $d$-dimensional parallel machine as a finite-density grid of
processors communicating at bounded signal speed. He proved that any computation performed within time $t$
must lie inside a causal sphere of radius $r=\Theta(t)$ and, consequently, that the achievable execution time
for a computation with $I$ inputs, $K$ outputs, and total processor--time $T$ satisfies
\begin{equation}\label{eq:fisher}
t = \Omega\big(\max\{\,I^{1/d},\,K^{1/d},\,T^{1/(d+1)}\,\}\big),
\end{equation}
henceforth called \emph{Fisher’s bound}.  This inequality captures the geometric latency of communication and
establishes a lower limit on parallel speedup as a function of spatial dimension $d$.

Subsequent efforts, most notably the \emph{Spatial Machine} framework of Leighton, Rose, and
Saxe~\cite{spatial_machines} and related analyses~\cite{JEN1989331,WORLEY1992361}, extended Fisher’s
geometric intuition.  These models formalize computation as a grid of finite-state processors, each capable
of communicating only with local neighbors.  They confirm that the physical embedding of computation in
finite-dimensional space imposes fundamental communication and latency limits beyond those captured by
traditional circuit complexity.

\paragraph{Limitations of Fisher’s formulation.}
Fisher’s argument integrates communication and work over the entire runtime $t$, bounding the \emph{average}
information flux through the causal region.  However, this treatment does not constrain the
\emph{instantaneous} rate at which information or entropy can traverse the system boundary.
If a computation requires all $n$ input bits to be simultaneously present within the causal region—corresponding
to a single-shot, maximal-entropy evaluation—the relevant bound depends on the instantaneous surface area of
the causal sphere rather than its integrated volume.  Under this stricter physical interpretation, the
realizability constraints developed in our framework yield
\begin{equation}\label{eq:rc-stronger}
t = \Omega(n^{1/(d-1)}),
\end{equation}
which is a strictly tighter lower bound than~\eqref{eq:fisher} for $d>2$.

\paragraph{Beyond geometry: flux and thermodynamics.}
The $\mathbf{RC}_d$ formalism generalizes Fisher’s geometric reasoning in three key ways:
\begin{enumerate}
    \item \textbf{Flux-bounded computation.}  
    Rather than bounding total processor--time $T$, $\mathbf{RC}_d$ constrains the instantaneous entropy flux
    across the causal boundary to $\Phi(t)=\mathcal{O}(t^{d-1})$, enforcing local information conservation and
    directly leading to~\eqref{eq:rc-stronger}.
    \item \textbf{Thermodynamic grounding.}  
    Fisher’s model assumes reversible communication; our formulation incorporates Landauer’s principle, which
    limits the rate of irreversible erasures (and hence heat dissipation) by the same $\mathcal{O}(t^{d-1})$
    geometric throughput bound.
    \item \textbf{Generality across physical substrates.}  
    Because these constraints derive from geometry and conservation laws, they apply uniformly to both
    classical and quantum circuits, establishing $\mathbf{RC}_d$ as a universal upper bound on physically
    realizable parallelism.
\end{enumerate}

\paragraph{Toward a physically grounded complexity framework.}
Fisher’s inequality~\eqref{eq:fisher} provides the first geometric lower bound on parallel computation.
The $\mathbf{RC}_d$ framework recovers it as a special case—corresponding to the time-averaged regime—and
strengthens it by enforcing instantaneous flux, bounded fan-in, and thermodynamic consistency.
Crucially, by defining $\mathbf{RC}_d$ as a bona fide \emph{complexity class}, this work bridges physical
computation and nonuniform circuit theory, yielding a unified foundation for reasoning about the true limits
of scalable parallelism.

\section{Physical Constraints}

We propose three fundamental bounds on circuit proportions, motivated by principles from physics and information theory. In Section~\ref{formalism}, we provide a formal derivation from a physical model of computation. (For a consideration of application to quantum systems, please refer to Appendix~\ref{sec:quantum}.)

\paragraph{Volume constraint.}
A circuit must reside within a causal region.  
Without loss of generality, model this region as an expanding ball $B(\mathbf{0},r)\subset\mathbb{R}^3$ with radius $r=\mathcal{O}(t)$.  
Its volume grows as $\mathcal{O}(t^3)$; hence, the number of gates in a circuit $C_n$ is bounded by $\mathcal{O}(t^3)$ under finite wire density.  
The following generalizes to $d$ dimensions.

\begin{proposition}\label{prop:vol}
Consider a circuit $C(t)$ evolving in $\mathbb{R}^d$.  
If the causal region containing it at time $t$ is a ball of radius $r=\mathcal{O}(t)$ and adjacent wires are separated by a minimum distance $\ell>0$, then
$
|C(t)| = \mathcal{O}(t^d).
$
\end{proposition}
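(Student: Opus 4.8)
The plan is to run a volume-packing argument: the causal region has volume $\Theta(t^d)$, and the separation hypothesis forces each gate to claim a disjoint chunk of that volume of size bounded below by a positive constant, so only $\mathcal{O}(t^d)$ gates can fit. First I would fix the constants guaranteed by the hypotheses: since $r = \mathcal{O}(t)$ there is a constant $c>0$ with $r \le c\,t$ for all large $t$, so the causal ball $B(\mathbf{0},r)$ has volume $\omega_d r^d \le \omega_d c^d t^d$, where $\omega_d$ is the volume of the unit ball in $\mathbb{R}^d$. By the gate-quantization principle, each gate $g$ sits at a point $x_g \in B(\mathbf{0},r)$ with bounded fan-in and fan-out, and in particular is incident to at least one wire of the network.

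Next comes the packing step. I would argue that the $\ell$-separation of wires forces the gate locations to be pairwise at distance at least $\ell$: two gates closer than $\ell$ would attach two wire endpoints within distance $\ell$ of each other, contradicting the minimum-spacing hypothesis (after absorbing the finite gate diameter into $\ell$, which only changes the constant). Hence the open balls $B(x_g,\ell/2)$ are pairwise disjoint and all contained in $B(\mathbf{0}, r+\ell/2)$. Summing volumes then gives
\[
|C(t)|\cdot \omega_d (\ell/2)^d \;=\; \sum_g \operatorname{vol} B(x_g,\ell/2) \;\le\; \operatorname{vol} B(\mathbf{0}, r+\ell/2) \;=\; \omega_d (r+\ell/2)^d,
\]
so $|C(t)| \le \big((r+\ell/2)/(\ell/2)\big)^d = \mathcal{O}(t^d)$, since $\ell$ is a fixed constant and $r=\mathcal{O}(t)$. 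Finally, because bounded fan-in and fan-out make the number of wires at most a constant multiple of the number of gates, counting wires together with gates inflates $|C(t)|$ only by a constant, preserving the bound.

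The hard part will be the packing step — specifically, pinning down the right notion of "volume occupied per gate" so that disjointness is genuinely forced by the stated wire-separation hypothesis rather than smuggled in. One must handle mildly degenerate configurations (gates whose incident wires all leave in nearly the same direction, very short wires between adjacent gates, etc.), but bounded degree together with the finite per-gate volume from gate quantization rules these out, and the constant hidden in $\mathcal{O}(\cdot)$ simply absorbs the dependence on $d$, $\ell$, and the fan-in/fan-out bound. Everything else is the elementary fact that congruent disjoint balls packed inside a ball of radius $\mathcal{O}(t)$ number at most $\mathcal{O}(t^d)$ — essentially the trivial sphere-packing bound.
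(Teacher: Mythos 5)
Your proof is correct and is essentially the paper's own argument: the paper (Corollary~\ref{cor:circuit-size}) likewise bounds the gate count by packing $\ell$-separated points into the causal ball of radius $\mathcal{O}(t)$ and dividing its volume by $\Omega(\ell^d)$. The only difference is cosmetic: the paper obtains the volume bound through the Hamiltonian-realization setup and the Bishop--Gromov comparison of Theorem~\ref{thm:causal-volume} (so as to cover non-negatively curved metrics), whereas you use the explicit Euclidean ball volume, which suffices since the proposition is stated in $\mathbb{R}^d$.
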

\vspace{-11pt}

\paragraph{Surface constraint.}
The entropy entering or leaving a causal region is proportional to its surface area.  
Model the entropy change as a vector field in $\mathbb{R}^d$ with magnitude bounded by a maximum signal speed $c$.  
The total entropy flux across the boundary is therefore $\mathcal{O}(r^{d-1})=\mathcal{O}(t^{d-1})$, since the surface area of a $d$-sphere scales as $r^{d-1}$.  
Assuming incompressible flow, the same bound holds for non-spherical regions.

\begin{proposition}\label{prop:width}
Consider a circuit $C(t)$ evolving in $\mathbb{R}^d$.  
If its causal region at time $t$ has radius $r=\mathcal{O}(t)$ and the entropy flow is incompressible with speed at most $c$, then both entropy influx and outflux are bounded by $\mathcal{O}(t^{d-1})$.  
Hence the circuit width—its maximal entropy exchange across any cut—is $\mathcal{O}(t^{d-1})$.
\end{proposition}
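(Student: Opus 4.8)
The plan is to turn the informal flux argument sketched just before the statement into a clean application of the divergence theorem together with the scaling of the surface area of a $d$-ball. I would proceed in three steps. \textbf{Step 1 (set-up).} Model the transport of entropy/information through space by a time-dependent vector field $\mathbf{J}(\mathbf{x},t)$ on $\mathbb{R}^d$ — the entropy current — whose integral over an oriented hypersurface $\Sigma$ measures the net entropy crossing $\Sigma$ per unit time. The two physical hypotheses become: (i) a pointwise magnitude bound $\lvert\mathbf{J}(\mathbf{x},t)\rvert\le c$, which packages a finite entropy/information density per unit volume together with the finite signal speed into a single constant; and (ii) incompressibility, $\nabla\cdot\mathbf{J}=0$, away from the gate sites (nothing is created or destroyed in empty space). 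All of the circuit's activity is confined to the causal ball $B(\mathbf{0},r)$ with $r=\mathcal{O}(t)$.

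\textbf{Step 2 (boundary flux).} Bound the influx/outflux across $\partial B(\mathbf{0},r)$ directly:
\[
\Big\lvert \oint_{\partial B(\mathbf{0},r)}\mathbf{J}\cdot d\mathbf{S}\Big\rvert \;\le\; \oint_{\partial B(\mathbf{0},r)}\lvert\mathbf{J}\rvert\,dS \;\le\; c\,\big\lvert\partial B(\mathbf{0},r)\big\rvert \;=\; c\,\frac{2\pi^{d/2}}{\Gamma(d/2)}\,r^{d-1} \;=\; \mathcal{O}(r^{d-1}) \;=\; \mathcal{O}(t^{d-1}),
\]
using $r=\mathcal{O}(t)$ and treating $d$ as a fixed constant. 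Decomposing $\mathbf{J}$ on the sphere into its inward- and outward-pointing parts yields the stated bounds on influx and outflux separately.

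\textbf{Step 3 (arbitrary cuts and non-spherical regions).} For a cut $\Sigma$ separating $B(\mathbf{0},r)$ into pieces $A$ and its complement, close $\Sigma$ to a bounding hypersurface using $\partial B(\mathbf{0},r)\cap\overline{A}$ and apply the divergence theorem; incompressibility off the gates makes the volume term contribute at most the already-bounded surface piece, so the flux through $\Sigma$ inherits the $\mathcal{O}(t^{d-1})$ bound. The same divergence-theorem step shows that replacing the ball by any causal region contained in $B(\mathbf{0},r)$ leaves the boundary flux unchanged, so the width bound is independent of the region's shape. Combining Steps 2 and 3 proves the proposition.

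The main obstacle is not the geometry — Steps 2 and 3 are essentially the divergence theorem plus the surface-area formula — but pinning down the Step 1 modeling so that the argument is faithful to physics and mathematically closed. Two points need care. First, the uniform bound $\lvert\mathbf{J}\rvert\le c$ silently assumes a finite entropy/information density per unit volume alongside relativistic causality; this must be posited explicitly (and, for quantum substrates, justified by an area/Bekenstein-type argument, which I would defer to the appendix). Second, in a device with irreversible gates the current is genuinely divergence-free only for its reversible, information-carrying component, while Landauer erasure injects heat at the gate sites, of which there may be up to $\mathcal{O}(t^d)$ by Proposition~\ref{prop:vol}; one must argue that this thermodynamic source term is itself evacuated through the boundary at rate $\mathcal{O}(t^{d-1})$ — equivalently, that the density of simultaneously active irreversible gates is limited by the same surface budget — so that it does not spoil the cut argument. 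Making this separation precise, rather than computing the area of a sphere, is where the real work lies, and it is exactly the point at which the $\mathbf{RC}_d$ framework must elevate the flux bound to a postulated law.
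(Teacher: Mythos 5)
Your proof is correct in its essentials and its key inequality is exactly the paper's: flux across the causal boundary is bounded by (density bound) $\times$ (speed bound) $\times$ (surface area), and the $d$-sphere area scaling $r^{d-1}=\mathcal{O}(t^{d-1})$ does the rest. The difference lies in where the conservation structure comes from. You work directly in physical space with an abstract entropy current $\mathbf{J}$, \emph{postulating} both $\nabla\cdot\mathbf{J}=0$ off the gates and the pointwise bound $\lvert\mathbf{J}\rvert\le c$, and you candidly note at the end that this is the point where the framework must "elevate the flux bound to a postulated law." The paper instead descends to the microscopic level: it models the realization as a local Hamiltonian system, derives incompressibility from Liouville's theorem (the phase-space velocity field $v_H$ is divergence-free, so the density $\rho_t$ obeys a source-free continuity equation), defines the transported quantity as the Gibbs fine-grained entropy density $s_t=-k_B\rho_t\log\rho_t$ with explicit hypotheses $\|s_t\|_\infty\le s_{\max}$ and $\|v\|\le c$, and obtains the boundary-flux identity for $dS_t/dt$ by integrating the continuity equation against $-k_B(1+\log\rho_t)$ and applying the divergence theorem over $\Sigma_t\times P^d$ (with a subadditivity lemma, $S_t(X)\le\sum_i S_t(x_i)$, justifying the worst-case per-carrier accounting). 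This buys the paper something you identified as your main obstacle: at the microscopic Hamiltonian level the dynamics are reversible and the fine-grained entropy is advected without source terms, so the worry that irreversible gates inject heat and spoil divergence-freeness does not arise in the proof itself — Landauer dissipation is handled separately, later, and is itself bounded by the same $\mathcal{O}(t^{d-1})$ boundary budget. Your macroscopic formulation is shorter and more transparent, but it leaves the incompressibility and the magnitude bound as modeling axioms, whereas the paper reduces them to bounded velocity, bounded local entropy density, and Hamiltonian locality; your Step 3 (closing an arbitrary cut against the causal boundary) is also left somewhat informal, though it is the right idea and matches the paper's remark that the bound extends to any cut of the realized circuit.
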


\paragraph{Gate constraint.}
Computation within a bounded spatial region must perform finite work.  
We impose this by constraining the primitive gate set: each gate is a constant-size subcircuit obeying Proposition~\ref{prop:width}, and therefore performs only finite work.

\begin{proposition}\label{prop:gates}
Every circuit $C(t)$ can be realized as a network of gates of $\mathcal{O}(1)$ volume.  
By Proposition~\ref{prop:width}, each such gate has $\mathcal{O}(1)$ entropy influx and outflux.
\end{proposition}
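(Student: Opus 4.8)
The plan is to prove the statement in two moves: first a gate-normalization argument showing that any circuit can be rewritten using only primitives of bounded arity computing a fixed truth table, and then a direct appeal to the physical model together with Proposition~\ref{prop:width} to bound each such primitive's volume and its entropy exchange.

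First I would fix a finite universal basis---say $\{\text{NAND}\}$, or $\{\wedge,\vee,\neg\}$ with fan-in $2$---and show every circuit is equivalent to one over this basis. For a circuit already built from bounded-arity gates this is immediate: each gate computes a Boolean function of $\mathcal{O}(1)$ arguments and is therefore itself a constant-size subcircuit over the universal basis by the standard DNF/CNF construction. For circuits over an unbounded-fan-in basis---the $\wedge,\vee$ gates of $\mathbf{AC}$ or the threshold gates of $\mathbf{TC}$---I would replace each $k$-input gate by an explicit bounded-fan-in subcircuit: a balanced binary tree of depth $\lceil\log_2 k\rceil$ for $\wedge/\vee$, and one of the standard $\mathrm{poly}(k)$-size, $\mathcal{O}(\log k)$-depth threshold circuits for a threshold gate. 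Every internal node of these replacements has fan-in and fan-out $\mathcal{O}(1)$, so the rewritten circuit $C'$ computes the same function using only primitives of bounded arity and fixed function. (For the gate sets actually relevant to $\mathbf{NC}/\mathbf{AC}/\mathbf{TC}$ this blow-up is polynomial; for an adversarial gate computing an arbitrary function of its inputs one still obtains a---possibly exponential---constant-size-gate network via DNF, which suffices for the statement as phrased.)

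Next I would argue that any such primitive occupies $\mathcal{O}(1)$ volume. Since it has a constant number of input and output wires and realizes one of finitely many Boolean functions, the physical model of Section~\ref{formalism}---which posits a fixed maximum density of wires and components---admits a realization of it inside a region of radius $r_0=\mathcal{O}(1)$, independent of $n$ and $t$. Its causal region therefore has radius $\mathcal{O}(1)$, so applying Proposition~\ref{prop:width} with $r=r_0$ yields entropy influx and outflux bounded by $\mathcal{O}(r_0^{d-1})=\mathcal{O}(1)$, which is exactly the claimed bound.

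The main obstacle I anticipate is not the combinatorial rewriting---that is standard---but pinning down the sense in which ``a fixed finite function has a finite-volume realization obeying the surface constraint'' is a consequence of the physical axioms rather than simply assumed: this is really the content of the Gate Quantization principle, and I would make sure the derivation in Section~\ref{formalism} supplies at least one universal primitive of finite volume satisfying Proposition~\ref{prop:width}. A secondary point is verifying that the local replacements above introduce no nonlocal step; the binary-tree and standard threshold constructions avoid this by construction, since once they are laid out at finite wire density every edge connects gates at $\mathcal{O}(1)$ distance. Granting those modeling points, the proposition reduces to the volume-to-flux estimate above.
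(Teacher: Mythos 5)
Your proposal is correct, and its physical core coincides with the paper's: both arguments boil down to applying the surface/width constraint (Proposition~\ref{prop:width}, formalized as Theorem~\ref{thm:width-entropy}) at constant scale --- you take a gate of radius $r_0=\mathcal{O}(1)$, the paper takes $t=\mathcal{O}(1)$ --- to conclude $\mathcal{O}(1)$ entropy influx and outflux. The difference is in direction and in what is proved versus assumed. The paper's formal treatment is a corollary running the implication the other way: it \emph{assumes} a gate is realized by a finite local Hamiltonian system of constant size and deduces constant width, hence constant fan-in and fan-out, and only then remarks that one may without loss of generality adopt a universal basis such as fan-in-$2$ NAND. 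You instead carry out that normalization explicitly (binary trees for unbounded $\wedge/\vee$, standard bounded-fan-in threshold constructions) and then posit a constant-volume physical realization for each bounded-arity primitive; this supplies the combinatorial rewriting the paper leaves implicit, which is a mild strengthening. Your closing caveat is also on target: the existence of a finite-volume realization of a fixed finite function is not derived from the Hamiltonian axioms in the paper either --- its corollary simply assumes realization by a constant-size system --- so on that modeling point your argument and the paper's stand on exactly the same footing, and no gap is introduced relative to what the paper actually establishes.
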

\vspace{-0.6cm}
\paragraph{Discussion.}
These constraints are deliberately conservative: they rely solely on causal geometry and finite information density.  
They apply uniformly across all admissible gate families—classical or quantum—and are independent of algebraic or technological details.  
Thus, the bounds represent universal physical limits on circuit growth and throughput, forming the foundation of the $\mathbf{RC}_d$ framework.
\section{Formalism for Physical Constraints}\label{formalism}

Propositions~\ref{prop:vol}--\ref{prop:gates} establish three complementary limits: a volume bound ($\mathcal{O}(t^d)$ gates), a throughput bound ($\mathcal{O}(t^{d-1})$ width), and a density bound ($\mathcal{O}(1)$ channel capacity). All are rooted in physics: circuits must reside in a causal ball, interactions are mediated locally across its boundary, and components cannot be placed arbitrarily close together.

To formalize these constraints, we model physical computation as local Hamiltonian systems evolving inside causal regions. This ensures our abstraction assumes only unavoidable physical principles: finite signal speed, locality, and conservation. In particular, the $\mathcal{O}(t^d)$ volume bound follows from comparison geometry (Bishop--Gromov~\cite{Bishop1964GeometryOM}) under nonnegative Ricci curvature, while the $O(t^{d-1})$ flux bound follows from Liouville’s theorem on Hamiltonian flows.

\subsection{Volume bound}

\begin{definition}[Hamiltonian system]\label{def:hamilton}
A $d$-dimensional Hamiltonian system on $n$ particles is a real dynamical system $(X^n, \phi_t)$ such that:
\begin{itemize}
    \item The phase space is parameterized as $X^n=Q^{dn}\times P^{dn}$; i.e., a set of $n$ particles in $d$ dimensions, with real position coordinates $\mathbf{q} = (q_1,\dots,q_{dn})$ and momentum coordinates $\mathbf{p} = (p_1,\dots,p_{dn})$.
    
    \item There exists a scalar function $H(\mathbf{q},\mathbf{p}, t)$ called the \textbf{Hamiltonian} such that the flow $\phi_t$ obeys
    $$
    \mathbf{\dot{q}} = \frac{\partial H}{\partial \mathbf{p}}, \qquad \mathbf{\dot{p}} = -\frac{\partial H}{\partial \mathbf{q}}.
    $$
\end{itemize}

A Hamiltonian system is \textbf{local} if the evolution of each particle $(\mathbf{q}, \mathbf{p})\in X$ depends on particles only within a bounded neighborhood.
\end{definition}

\begin{definition}[Causal regions in space--time]\label{def:causal}
Fix $d>0$. Let $(\mathbf{q_0},t_0)\in\mathbb{R}^d\times\mathbb{R}$ and let $c>0$ denote a maximum signal speed. Write $d_g(\cdot, \cdot)$ for the geodesic distance induced by a Riemannian metric on $\mathbb{R}^d$. The (two-sided) \textbf{causal cone} at $(\mathbf{q_0},t_0)$ is
\[
\mathcal{C}(\mathbf{q_0},t_0)
\;:=\;
\Big\{(\mathbf{q},t)\in\mathbb{R}^d\times\mathbb{R}:\ d_g(\mathbf{q},\mathbf{q_0}) \le c\,|t-t_0|\Big\}.
\]
A slice of this causal cone at a fixed time $t$ is the \textbf{causal ball}
$$
\mathcal{C}_t(\mathbf{q_0}, t_0):=\{\mathbf{q}\in \mathbb{R}^d: d_g(\mathbf{q}-\mathbf{q_0})\le c\,|t-t_0|\}.
$$
\end{definition}

Notice in Definition~\ref{def:causal} that geodesic distance allows for curved spatial geometry. When space is flat, $d_g(\cdot,\cdot)$ is the usual Euclidean metric. When positive Ricci curvature is present, it follows from comparison geometry (Bishop--Gromov \cite{Bishop1964GeometryOM}) that the volume within a boundary is strictly smaller than in the Euclidean case.

Now we may define what it means to realize a circuit by a local Hamiltonian system.

\begin{definition}[Circuit realization]\label{def:realize}
For a circuit $C$ with binary wires $\{w_0, \ldots, w_N\}=W\in\mathbb{Z}_2^n$, let $w_i(t)$ denote the value of the $i$-th wire at discrete timestep $t\in \mathbb{N}$. A local Hamiltonian system $(X^n, \phi_t)$ is said to \textbf{realize} a circuit $C$ if there exists a set of states $\{v_0, \ldots v_N\}=V \subseteq X$ and decoding function $\mathrm{Dec}:V\to \mathbb{Z}_2$ such that $\mathrm{Dec}(v_i(k\,t))\mapsto w_i(t)$ for some propagation delay $k>0$.
\end{definition}

Our definition of circuit realization requires that the local Hamiltonian system has a set of states that can be directly interpreted as the intermediate values (wires) of the circuit computation. Each decoded value $\mathrm{Dec}(v_i)$ needs only coincide with the wire signal $w_i$ after time intervals of a fixed period $k>0$. This $k$ represents the propagation delay between layers plus any settling time, thus controlling the speed at which the system computes the target function. Note that the locality of the realization is implicit in the reliance on a local system.

The following theorem lays the foundation for proving Proposition~\ref{prop:vol}. A standard sufficient condition for $\mathcal{O}(r^d)$ growth is non-negative Ricci curvature.

\begin{theorem}[Causal volume bound]\label{thm:causal-volume}
Let $Q(t)\subset\mathbb{R}^d$ be a measurable spatial region at time $t\ge t_0$, and suppose
the spatial metric has non-negative Ricci curvature. If the region is causal in the sense that there exists $(\mathbf{q_0}, t_0)\in \mathbb{R}^d\times \mathbb{R}$ such that
\[
Q(t) \subseteq \mathcal{C}_t(\mathbf{q_0}, t_0),
\quad\forall t\ge t_0,
\]
then its spatial volume satisfies
\[
\mathrm{Vol}\big(Q(t)\big)
\le k\,c^d(t-t_0)^d
= \mathcal{O}\!\big((t-t_0)^d\big),
\]
for a constant $k>0$ depending only on the ambient geometry.
\end{theorem}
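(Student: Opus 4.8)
The plan is to identify the causal slice with a geodesic ball and then apply the Bishop--Gromov volume comparison theorem~\cite{Bishop1964GeometryOM}. By Definition~\ref{def:causal}, the causal ball is $\mathcal{C}_t(\mathbf{q_0},t_0)=\{\mathbf{q}\in\mathbb{R}^d:\ d_g(\mathbf{q},\mathbf{q_0})\le c\,|t-t_0|\}$, which is exactly the closed geodesic ball $B_g(\mathbf{q_0},r)$ of radius $r:=c(t-t_0)$ in the Riemannian manifold $(\mathbb{R}^d,g)$. Since $Q(t)$ is measurable and $Q(t)\subseteq\mathcal{C}_t(\mathbf{q_0},t_0)$ for the relevant $t$, monotonicity of the Riemannian volume measure gives $\mathrm{Vol}(Q(t))\le\mathrm{Vol}\big(B_g(\mathbf{q_0},r)\big)$. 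By Hopf--Rinow this ball is compact (assuming $g$ is complete), so the right-hand side is finite, and it remains only to bound it by $\mathcal{O}(r^d)$.

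For that bound I would invoke Bishop--Gromov: if $(\mathbb{R}^d,g)$ is complete with $\mathrm{Ric}_g\ge 0$, then the function $r\mapsto \mathrm{Vol}(B_g(\mathbf{q_0},r))/(\omega_d r^d)$, where $\omega_d$ denotes the volume of the Euclidean unit $d$-ball, is non-increasing on $(0,\infty)$; since it converges to $1$ as $r\to 0^+$, it is bounded above by $1$, i.e.\ $\mathrm{Vol}(B_g(\mathbf{q_0},r))\le\omega_d r^d$ for every $r>0$. When $g$ is flat this is the familiar equality, and the Ricci hypothesis is precisely what excludes faster-than-polynomial growth, as occurs in hyperbolic space. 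Substituting $r=c(t-t_0)$ then yields $\mathrm{Vol}(Q(t))\le\omega_d\,c^d(t-t_0)^d$, which is the claimed inequality with $k=\omega_d$ depending only on the dimension $d$, and hence only on the ambient geometry as stated.

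The main thing to be careful about is verifying the hypotheses under which Bishop--Gromov applies, chiefly geodesic completeness of $(\mathbb{R}^d,g)$; in the physical regime of interest the spatial metric is a bounded perturbation of the Euclidean one, so completeness holds automatically and I would record it as a standing assumption. If one wanted a proof independent of the citation, the real work would be reproducing the Laplacian comparison estimate for the distance function $\rho=d_g(\cdot,\mathbf{q_0})$, equivalently the Jacobi-field bound $\partial_r\sqrt{\det g}\le\frac{d-1}{r}\sqrt{\det g}$ along unit-speed radial geodesics, and integrating it in geodesic polar coordinates; but since the theorem statement permits citing Bishop--Gromov, this step can be taken as a black box.
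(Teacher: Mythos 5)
Your proposal is correct and follows essentially the same route as the paper: bound $\mathrm{Vol}(Q(t))$ by the volume of the causal ball of radius $r=c(t-t_0)$ via containment, then apply Bishop--Gromov under $\mathrm{Ric}\ge 0$ to get $\mathrm{Vol}\le k\,c^d(t-t_0)^d$. Your version merely adds useful precision the paper omits (the explicit constant $k=\omega_d$ and the completeness hypothesis needed for Bishop--Gromov).
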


\begin{proof}
Given $Q(t) \subseteq \mathcal{C}_t(\mathbf{q_0}, t_0)$, it follows that 
$\mathrm{Vol}(Q(t))\le \mathrm{Vol}(\mathcal{C}_t(\mathbf{q_0}, t_0))$. Under non-negative Ricci curvature, Bishop--Gromov comparison yields
$\mathrm{Vol}(B(\mathbf{q_0}, r))\le k\,r^d$ for all $r>0$ and some $k>0$. Therefore,
\begin{align*}
\mathrm{Vol}(Q(t))&\le \mathrm{Vol}(\mathcal{C}_t(\mathbf{q_0}, t_0))\\
&=\mathrm{Vol}(B(\mathbf{q_0}, c(t - t_0)))\\
&\le k\,c^d(t-t_0)^d\\
&= \mathcal{O}((t-t_0)^d).
\end{align*}
\end{proof}

Proposition~\ref{prop:vol} now follows as a corollary of Theorem~\ref{thm:causal-volume}, which we restate in precise terms. The only remaining piece is to relate the number of gates in a circuit to the minimum volume of its physical realization.

\begin{corollary}[Circuit size bound]\label{cor:circuit-size}
Assume that a circuit $C(t)$ is realized by a local Hamiltonian system $S=(X^n, \phi_t)$ within a causal cone; i.e., there exists $\mathbf{q_0}\in\mathbb{R}^d$ such that
$$
Q^d(t)\subseteq \mathcal{C}_t(\mathbf{q_0}, 0), \quad\forall t>0,
$$
where $Q^d(t)$ is the set of particle positions at time $t$.

If, for all $t>0$, the positions $Q^d(t)$ remain separated by a minimum distance $\ell>0$, then
$$
|C(t)|=\mathcal{O}(t^d).
$$
\end{corollary}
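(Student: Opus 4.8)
The plan is to combine the causal volume bound of Theorem~\ref{thm:causal-volume} with an elementary sphere-packing argument, and then translate the resulting bound on the particle number $n$ into a bound on the circuit size $|C(t)|$. First I would note the one mismatch to repair: Theorem~\ref{thm:causal-volume} bounds the volume of a \emph{measurable region}, whereas the particle configuration $Q^d(t)$ is a finite set of points and so has zero volume. To bridge this gap I would inflate each particle position $q\in Q^d(t)$ to the ball $B(q,\ell/2)$.

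By the minimum-separation hypothesis these inflated balls are pairwise disjoint, and each is contained in the $(\ell/2)$-neighborhood of the causal ball $\mathcal{C}_t(\mathbf{q_0},0)$, which by the triangle inequality for geodesic distance sits inside the causal ball of radius $ct+\ell/2$ about $\mathbf{q_0}$; equivalently it equals $\mathcal{C}_{t'}(\mathbf{q_0},0)$ for $t'=t+\ell/(2c)=\mathcal{O}(t)$. Hence Theorem~\ref{thm:causal-volume} applies and gives this enlarged region volume $\mathcal{O}(t^d)$. Since the $n$ disjoint balls each have volume bounded below by a positive constant $v_0=v_0(\ell,d)>0$ (in flat space $v_0=\omega_d(\ell/2)^d$), additivity of volume yields $n\,v_0\le \mathcal{O}(t^d)$, i.e.\ $n=\mathcal{O}(t^d)$ with the constant absorbing $\ell$, $c$, and the ambient geometry.

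It then remains to relate $n$ to $|C(t)|$. By Definition~\ref{def:realize} each wire $w_i$ is read off the trajectory of a state $v_i$ of the system, and by Proposition~\ref{prop:gates} every gate may be taken to be a constant-volume subcircuit with bounded fan-in and fan-out; hence each gate, together with its incident wires, is realized by $\mathcal{O}(1)$ particles, and distinct gates by disjoint clusters. Therefore $|C(t)|=\Theta(n)$, and combining with the previous step gives $|C(t)|=\mathcal{O}(t^d)$ — precisely Proposition~\ref{prop:vol} restated inside the Hamiltonian model. (One can alternatively skip the particle count and argue directly that each gate occupies $\Omega(1)$ volume within the causal ball, with essentially disjoint footprints, giving the same conclusion.)

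The main obstacle I expect is not the packing count, which is routine, but the curved-space bookkeeping: Bishop–Gromov under non-negative Ricci curvature supplies only an \emph{upper} bound on ball volumes, whereas the packing step also needs a uniform \emph{lower} bound $\mathrm{Vol}(B(q,\ell/2))\ge v_0>0$ for the small balls around particles. This is immediate in flat Euclidean space but in the general Riemannian setting requires a mild extra regularity assumption on the metric (e.g.\ a positive lower bound on injectivity radius, or two-sided curvature control at scale $\ell$). A secondary point of care is the identification $|C(t)|=\Theta(n)$: one must ensure the realization cannot pack many wires into a single particle's state so as to evade the separation hypothesis — this is exactly what the finite-channel-capacity content of Proposition~\ref{prop:gates} forbids.
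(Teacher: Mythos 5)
Your proposal follows essentially the same route as the paper's own proof: bound the particle count inside the causal region by sphere packing at separation $\ell$, invoke Theorem~\ref{thm:causal-volume} for the $\mathcal{O}(t^d)$ volume, and pass from particles/wires to gates to conclude $|C(t)|=\mathcal{O}(t^d)$. Your version is in fact slightly more careful than the paper's (inflating points to disjoint $\ell/2$-balls inside a marginally enlarged causal ball, and flagging the need for a lower bound on small-ball volumes in curved space), but the argument is the same in substance.
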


\begin{proof}
By realization of $C(t)$, there exist unique states $\{v_0, \ldots, v_N\}=V\subseteq X$, where $N$ is the number of wires of $C(t)$. Because each gate is attached to at least one wire,
\begin{align*}
|C(t)| &\le N =|V(t)|\le |Q^d(t)|.
\end{align*}
Given a minimum distance $\ell>0$ on $Q^d(t)$, standard sphere packing then gives
\begin{align*}
|C(t)| &\le|Q^d(t)|
\le \frac{\mathrm{Vol(Q^d(t))}}{\sum_{\mathbf{q}\in Q(t)}\mathrm{Vol}(B(\mathbf{q}, \ell))}=\frac{\mathrm{Vol(Q^d(t))}}{\Omega(\ell^d)}.
\end{align*}
We conclude by Theorem~\ref{thm:causal-volume} that
$$
|C(t)| = \mathcal{O}(\ell ^dt^d) = \mathcal{O}(t^d).\qedhere
$$
\end{proof}

\subsection{Throughput bound}

Next, we continue on to prove Proposition~\ref{prop:width}. The width constraint may be derived by combining the geometry of the causal boundary, the incompressibility of entropy, and a speed limit on communication. Liouville's theorem is a bridge for modeling a Hamiltonian system's fine-grained entropy as a fluid with limited density and speed. The amount of entropy that can cross any cut of a realized circuit is then restricted by the maximum flux of the entropy fluid across the causal boundary. From here, we can show that the circuit width $w(C)$ is at most $\mathcal{O}(t^{d-1})$.

\begin{definition}[Phase--space velocity field]
Given a Hamiltonian system $(X^n, H)$, we write $v_H(\mathbf{q},\mathbf{p},t):=(\dot{\mathbf{q}},\dot{\mathbf{p}})$ for its \textbf{phase--space velocity field}.
\end{definition}

\begin{lemma}[Divergence--free Hamiltonian vector field]\label{lem:divfree}
For each fixed $t$, the vector field $v_H(\cdot,\cdot,t)$ satisfies
\[
\nabla_{(\mathbf{q},\mathbf{p})}\!\cdot v_H \;=\;
\sum_{i=1}^{dn}\Big(\frac{\partial \dot q_i}{\partial q_i} + \frac{\partial \dot p_i}{\partial p_i}\Big)
\;=\;0.
\]
\end{lemma}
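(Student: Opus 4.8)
The plan is to prove this by direct computation, recognizing the statement as the infinitesimal form of Liouville's theorem. The divergence is a sum over the $dn$ position-momentum coordinate pairs, so it suffices to show that the $i$-th summand $\partial \dot q_i/\partial q_i + \partial \dot p_i/\partial p_i$ vanishes for each $i$, after which summing over $i$ gives the result immediately.

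First I would substitute Hamilton's equations from Definition~\ref{def:hamilton}, namely $\dot q_i = \partial H/\partial p_i$ and $\dot p_i = -\partial H/\partial q_i$, into the two partial derivatives. This yields
\[
\frac{\partial \dot q_i}{\partial q_i} = \frac{\partial}{\partial q_i}\!\left(\frac{\partial H}{\partial p_i}\right) = \frac{\partial^2 H}{\partial q_i\,\partial p_i},
\qquad
\frac{\partial \dot p_i}{\partial p_i} = \frac{\partial}{\partial p_i}\!\left(-\frac{\partial H}{\partial q_i}\right) = -\frac{\partial^2 H}{\partial p_i\,\partial q_i}.
\]
Next I would invoke the symmetry of second derivatives (Clairaut--Schwarz), which applies since the Hamiltonian $H$ is assumed smooth (or at least $C^2$) in its phase-space arguments, so that $\partial^2 H/(\partial q_i\,\partial p_i) = \partial^2 H/(\partial p_i\,\partial q_i)$. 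The two terms are then exact negatives of each other and cancel. Summing the vanishing $i$-th summands over $i = 1,\dots,dn$ gives $\nabla_{(\mathbf{q},\mathbf{p})}\cdot v_H = 0$.

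Honestly there is no real obstacle here; the only point requiring care is the regularity hypothesis on $H$ needed to commute the mixed partials, which I would state explicitly (the paper's physical setting — local Hamiltonians generating well-defined flows $\phi_t$ — already presupposes enough smoothness, so this is harmless). It is worth remarking after the proof that this identity is precisely what makes the phase-space flow volume-preserving, which is the hook used in the subsequent throughput argument: a divergence-free velocity field transports the fine-grained entropy like an incompressible fluid, so that entropy crossing a cut is controlled by the flux through the causal boundary rather than accumulating in the interior.
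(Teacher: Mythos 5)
Your proposal is correct and follows essentially the same argument as the paper: substitute Hamilton's equations to express each summand as a pair of mixed second partials of $H$, then invoke equality of mixed partials for smooth $H$ to cancel them termwise before summing over $i$. The only difference is presentational—you make the $C^2$ regularity requirement explicit—which is a harmless refinement of the paper's appeal to smoothness.
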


\begin{proof}
By Hamilton's equations,
$\dot q_i=\partial H/\partial p_i$ and $\dot p_i=-\partial H/\partial q_i$. Hence
\[
\frac{\partial \dot q_i}{\partial q_i}
=\frac{\partial^2 H}{\partial q_i\,\partial p_i},\qquad
\frac{\partial \dot p_i}{\partial p_i}
=-\frac{\partial^2 H}{\partial p_i\,\partial q_i}.
\]
Summing over $i$ and using equality of mixed partials for smooth $H$ yields
$\nabla\!\cdot v_H=0$.
\end{proof}

\begin{theorem}[Liouville]\label{thm:liousville}
Let $S=(\mathbb{R},X,\phi)$ be a (time-dependent) Hamiltonian system with smooth $H$,
and let $\rho:X\times \mathbb{R}\to \mathbb{R}_{\ge 0}$ be a phase--space density advected
by the flow. Then $\rho$ satisfies the continuity equation
\[
\frac{\partial \rho}{\partial t} + \nabla_{(\mathbf{q},\mathbf{p})}\!\cdot\big(\rho\, v_H\big)=0,
\]
and along trajectories $(\mathbf{q}(t),\mathbf{p}(t))$ one has the Liouville identity
\[
\frac{d\rho}{dt}
=\frac{\partial \rho}{\partial t}
+\sum_{i=1}^{dn}\Big(\frac{\partial \rho}{\partial q_i}\dot q_i+\frac{\partial \rho}{\partial p_i}\dot p_i\Big)
= -\,\rho\,\nabla\!\cdot v_H \;=\;0.
\]
Equivalently, phase--space volume is preserved: for any measurable $A\subset X$,
$\mathrm{vol}(\phi_t(A))=\mathrm{vol}(A)$ for all $t$.
\end{theorem}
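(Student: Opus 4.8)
The plan is to split the statement into its three linked assertions --- the continuity equation, the vanishing material derivative, and phase--space volume preservation --- and derive each from Lemma~\ref{lem:divfree} together with elementary vector calculus. First I would treat the continuity equation as the differential form of mass conservation under advection: saying that $\rho$ is advected by $\phi_t$ means that $\int_{\phi_t(A)}\rho(\cdot,t)\,d\mu$ is constant in $t$ for every measurable $A\subseteq X$. Differentiating this identity and applying the Reynolds transport theorem --- justified by smoothness of $H$, hence of $v_H$ and of the flow --- yields pointwise $\partial_t\rho + \nabla_{(\mathbf q,\mathbf p)}\!\cdot(\rho\,v_H)=0$.

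Next I would expand the divergence by the Leibniz rule, $\nabla\!\cdot(\rho v_H) = v_H\!\cdot\nabla\rho + \rho\,(\nabla\!\cdot v_H)$, and invoke Lemma~\ref{lem:divfree} to kill the second term, $\nabla\!\cdot v_H = 0$. The continuity equation then collapses to $\partial_t\rho + v_H\!\cdot\nabla\rho = 0$. Recognizing the left-hand side as the material derivative $d\rho/dt$ along a trajectory $(\mathbf q(t),\mathbf p(t))$, where $\dot q_i = \partial H/\partial p_i$ and $\dot p_i = -\partial H/\partial q_i$ by Hamilton's equations, gives the Liouville identity $d\rho/dt = -\rho\,\nabla\!\cdot v_H = 0$.

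Finally, for volume preservation I would pass to the Jacobian $J(t,x) := \det D_x\phi_t(x)$ of the flow map. Differentiating along the flow and using the trace form of the variational equation, $\dot J = \big((\nabla\!\cdot v_H)\circ\phi_t\big)\,J$, together with $\nabla\!\cdot v_H = 0$ gives $\dot J \equiv 0$, so $J(t,x) = J(0,x) = 1$; the change-of-variables formula then yields $\mathrm{vol}(\phi_t(A)) = \int_A |J(t,x)|\,dx = \mathrm{vol}(A)$. Equivalently, one can argue directly via $\frac{d}{dt}\mathrm{vol}(\phi_t(A)) = \int_{\partial\phi_t(A)} v_H\!\cdot n\,dS = \int_{\phi_t(A)}\nabla\!\cdot v_H\,dV = 0$ by the divergence theorem.

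There is no serious obstacle here; this is a classical fact, and the argument is short. The only points requiring care are fixing the precise meaning of ``advected by the flow'' --- so that the continuity equation is either taken as its definition or derived from the integral conservation law above --- and the regularity and completeness hypotheses: smoothness of $H$ ensures $v_H$ is $C^\infty$ and that $\phi_t$ exists as a diffeomorphism on the relevant time interval, which is what licenses differentiating under the integral sign and applying the divergence theorem. If the flow is only defined locally in time, every statement should be read on the maximal interval of existence.
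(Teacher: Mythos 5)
Your proposal is correct and follows essentially the same route as the paper's vector-calculus proof: the continuity equation combined with Lemma~\ref{lem:divfree} gives $d\rho/dt=0$, and incompressibility of $v_H$ gives preservation of phase--space volume. The only differences are refinements rather than a new approach---you derive the continuity equation from the integral conservation law via Reynolds transport and establish volume preservation via the Jacobian identity $\dot J=\big((\nabla\!\cdot v_H)\circ\phi_t\big)J$, whereas the paper simply asserts mass conservation and integrates the continuity equation with the divergence theorem (and separately records an alternative symplectic-form proof that your argument does not need).
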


\begin{proof}[Proof (vector calculus)]
Mass conservation for a density transported by velocity $v_H$ gives the continuity equation
$\partial_t\rho+\nabla\cdot(\rho v_H)=0$. Expanding the divergence,
\[
\frac{\partial \rho}{\partial t} + v_H\!\cdot\nabla \rho + \rho\,(\nabla\!\cdot v_H)=0.
\]
Along a trajectory, $d\rho/dt=\partial_t\rho+v_H\!\cdot\nabla \rho$, hence
$\frac{d\rho}{dt}=-\rho\,(\nabla\!\cdot v_H)=0$ by Lemma~\ref{lem:divfree}.
Thus $\rho$ is constant along trajectories. Integrating the continuity equation over any time-evolved region and using the divergence theorem yields conservation of phase volume.
\end{proof}

\begin{proof}[Proof (symplectic form)]
Let $\omega=\sum_{i=1}^{dn} dq_i\wedge dp_i$ be the canonical symplectic $2$-form and
$\mu:=\omega^{dn}/(dn)!$ the Liouville volume form. The Hamiltonian vector field $X_H$ satisfies
$\iota_{X_H}\omega=dH$. Since $d\omega=0$ and $d^2H=0$,
\[
\mathcal{L}_{X_H}\omega
= d(\iota_{X_H}\omega)+\iota_{X_H}(d\omega)
= d(dH)+\iota_{X_H}(0)=0.
\]
Therefore $\mathcal{L}_{X_H}\mu=0$ (because Lie derivative commutes with wedge powers),
so the flow preserves $\mu$, i.e., phase volume is invariant. The advected density
satisfies $\partial_t(\rho\,\mu)+\mathcal{L}_{X_H}(\rho\,\mu)=0$, which reduces to the continuity equation and the claimed identity.
\end{proof}

\begin{corollary}[Incompressible information flow]\label{cor:incomp-info}
Let $\sigma:X\to\mathbb{R}_{\ge 0}$ be any smooth \emph{local information density}
that is a function of the microstate and is advected by the Hamiltonian flow
(i.e., $\sigma$ is a scalar density with respect to the Liouville measure).
Then $\sigma$ is conserved along trajectories and its flux across any smooth
codimension-1 surface $\Sigma$ obeys
\[
\Big|\int_{\Sigma} \sigma\, v_H\cdot n \, dS\Big| \;\le\; \|\sigma\|_{\infty}\,\mathrm{area}(\Sigma)\,\|v_H\|_{\infty}.
\]
\end{corollary}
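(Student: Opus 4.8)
The plan is to derive the corollary as a direct specialization of Theorem~\ref{thm:liousville} together with one elementary pointwise estimate, so the argument splits cleanly into two independent steps with essentially no new machinery.

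First I would establish conservation along trajectories. By hypothesis $\sigma$ is a scalar density advected by the flow, meaning it obeys the same continuity equation as $\rho$ in Theorem~\ref{thm:liousville}, namely $\partial_t\sigma + \nabla_{(\mathbf{q},\mathbf{p})}\!\cdot(\sigma\,v_H)=0$. Expanding the divergence and invoking Lemma~\ref{lem:divfree} ($\nabla\!\cdot v_H=0$) gives, along any trajectory $(\mathbf{q}(t),\mathbf{p}(t))$,
\[
\frac{d\sigma}{dt} = \partial_t\sigma + v_H\!\cdot\nabla\sigma = -\,\sigma\,(\nabla\!\cdot v_H) = 0,
\]
so $\sigma$ is constant on orbits. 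This is verbatim the Liouville identity with $\rho$ replaced by $\sigma$, so the only thing to check is that "advected scalar density" is exactly the hypothesis that makes the identity apply.

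Second I would bound the flux. For a smooth codimension-1 surface $\Sigma$ with unit normal field $n$, the integral triangle inequality gives $\big|\int_{\Sigma}\sigma\,v_H\!\cdot n\,dS\big|\le\int_{\Sigma}|\sigma|\,|v_H\!\cdot n|\,dS$. Since $n$ has unit length, Cauchy--Schwarz yields $|v_H\!\cdot n|\le\|v_H\|$ pointwise, hence $|v_H\!\cdot n|\le\|v_H\|_{\infty}$ on $\Sigma$; together with $|\sigma|=\sigma\le\|\sigma\|_{\infty}$ (using nonnegativity), the integrand is dominated by the constant $\|\sigma\|_{\infty}\|v_H\|_{\infty}$, and integrating over $\Sigma$ produces the claimed factor $\mathrm{area}(\Sigma)$.

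The computation is routine; the content is conceptual, and I expect the main obstacle to be justifying that the two supremum norms on the right are finite in the intended regime. Finiteness of $\|\sigma\|_{\infty}$ is precisely the finite-information-density postulate, while finiteness of $\|v_H\|_{\infty}$ encodes the maximum signal speed $c$ of Definition~\ref{def:causal}---absent a speed limit, the momentum directions of $v_H$ are a priori unbounded. I would therefore carry these as standing hypotheses (or restrict the surface and dynamics to the causal region where they hold) and note that this is exactly the point where physics, rather than symplectic geometry, enters. I would also pin down "scalar density with respect to the Liouville measure" as the statement that $\sigma\,\mu$ is advected by $\phi_t$, equivalently that $\sigma$ satisfies the continuity equation above, which is what lets Step~1 go through unchanged.
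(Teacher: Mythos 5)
Your proposal is correct and follows essentially the same route as the paper: conservation along trajectories via the Liouville identity (Theorem~\ref{thm:liousville} with $\rho$ replaced by the advected density $\sigma$), and the flux bound by the pointwise $L^\infty$ estimate of the integrand, which is exactly the paper's appeal to H\"older's inequality. Your added remarks on the finiteness of $\|\sigma\|_\infty$ and $\|v_H\|_\infty$ simply make explicit the standing physical hypotheses the paper uses later (finite information density and signal speed $c$).
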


\begin{proof}
Conservation along trajectories follows from Theorem~\ref{thm:liousville}. 
The flux bound is Hölder’s inequality applied to the scalar flux integrand on $\Sigma$.
\end{proof}

\begin{remark}[Connection to $\mathbf{RC}_d$ width/throughput]
In the $\mathbf{RC}_d$ model, take a causal ball of radius $r=\mathcal{O}(t)$ in $\mathbb{R}^d$
as the computational region and interpret $\sigma$ as a (bounded) local information density.
If signal speed is bounded by $c$, then $\|v_H\|_{\infty}\le c$ and the boundary area scales as
$\Theta(r^{d-1})=\Theta(t^{d-1})$. By Corollary~\ref{cor:incomp-info}, the net information flux
per unit time across the boundary is $\mathcal{O}(t^{d-1})$, which recovers the width/throughput
bound used in Proposition~\ref{prop:width} and underpins the RC lower bound 
$t(n)\ge \Omega(n^{1/(d-1)})$ for maximal-entropy inputs.
\end{remark}

There are many possible choices for defining a measure of fine-grained entropy; for our purposes, we use the Gibbs entropy. Because the Gibbs entropy is a property of an entire system, we modify the definition slightly so that we may assign entropy to each particle. The total entropy of the system is the \emph{joint} Gibbs entropy, and the entropy of the particle is the \emph{marginal} Gibbs entropy.

\begin{definition}[Gibbs entropy]
Let $\rho_t(x_1, \ldots, x_n)$ be a joint probability distribution over random variables $x_1,\ldots x_n\in A$ at time $t$. The \textbf{joint Gibbs entropy} of system $X=\{x_1,\ldots x_n\}$ at time $t$ is
$$
S_t(X):= -k_B\sum_{(x_1,\ldots, x_n)\in A^n}\rho_t(x_1,\ldots,x_n)\log\rho_t(x_1,\ldots,x_n),
$$
and the \textbf{marginal Gibbs entropy} of $x_i$ at time $t$ is given by
$$
S_t(x_i):= -k_B\,\sum_{a\in A}\rho_t(x_i=a)\log\rho_t(x_i=a),
$$
where $k_B$ is the Boltzmann constant.
\end{definition}

A sum over the marginal entropies will generally exceed the total joint entropy because we over-count the contribution of correlated particles. This is of no concern to us, however, for we only desire an upper bound. The maximum entropy case matches the total exactly, and the presence of correlations implies that the total is strictly lower.

\begin{theorem}[Gibbs triangle inequality]\label{thm:gibbs-triangle}
For any system $X=\{x_1,\ldots x_n\}$ with state at time $t$ described by a probability distribution $\rho_t(x_1, \ldots, x_n)$,
$$
S_t(X)\le \sum_{i=1}^n S_t(x_i).
$$
\end{theorem}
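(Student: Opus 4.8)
The plan is to prove the inequality by identifying the gap between the two sides as a relative entropy (Kullback--Leibler divergence) between the true joint law $\rho_t$ and the product of its marginals, and then invoking Gibbs' inequality, i.e. the non-negativity of relative entropy. First I would write the difference explicitly. Write $\rho(\mathbf{x}) := \rho_t(x_1,\ldots,x_n)$ for $\mathbf{x} = (x_1,\ldots,x_n) \in A^n$, and $\rho_i(a) := \rho_t(x_i = a)$ for the $i$-th marginal, noting $\rho_i(a) = \sum_{\mathbf{x}:\,x_i = a}\rho(\mathbf{x})$. Each marginal entropy sum can then be rewritten as a sum over all of $A^n$ via $\sum_{a\in A}\rho_i(a)\log\rho_i(a) = \sum_{\mathbf{x}\in A^n}\rho(\mathbf{x})\log\rho_i(x_i)$, so that
\[
\sum_{i=1}^n S_t(x_i) - S_t(X) \;=\; k_B\!\left[\sum_{\mathbf{x}\in A^n}\rho(\mathbf{x})\log\rho(\mathbf{x}) - \sum_{\mathbf{x}\in A^n}\rho(\mathbf{x})\log\!\prod_{i=1}^n\rho_i(x_i)\right] \;=\; k_B\sum_{\mathbf{x}\in A^n}\rho(\mathbf{x})\log\frac{\rho(\mathbf{x})}{\prod_{i=1}^n\rho_i(x_i)}.
\]
The right-hand side is exactly $k_B\, D_{\mathrm{KL}}\!\big(\rho \,\|\, \prod_{i=1}^n \rho_i\big)$.

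Next I would show this quantity is non-negative. This is Gibbs' inequality: for probability distributions $p, q$ on a common finite set, $\sum p\log(p/q) \ge 0$. The standard argument applies Jensen's inequality to the concave function $\log$: restricting the sum to the support of $\rho$ (so that the convention $0\log 0 := 0$ causes no trouble),
\[
\sum_{\mathbf{x}}\rho(\mathbf{x})\log\frac{\prod_i\rho_i(x_i)}{\rho(\mathbf{x})} \;\le\; \log\sum_{\mathbf{x}}\rho(\mathbf{x})\,\frac{\prod_i\rho_i(x_i)}{\rho(\mathbf{x})} \;=\; \log\sum_{\mathbf{x}}\prod_{i=1}^n\rho_i(x_i) \;\le\; \log 1 \;=\; 0,
\]
where the last inequality holds because summing $\prod_i\rho_i$ only over the support of $\rho$ gives at most its full sum $1$. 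Negating, $D_{\mathrm{KL}}\!\big(\rho\,\|\,\prod_i\rho_i\big) \ge 0$, and since $k_B > 0$ we get $\sum_i S_t(x_i) - S_t(X) \ge 0$, which is the claim.

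I do not anticipate a genuine obstacle; the only delicate points are the bookkeeping step that re-expresses each marginal sum as a sum over $A^n$, and the handling of zero-probability outcomes, which the $0\log 0 := 0$ convention and restriction to the support dispatch cleanly. The equality case of Jensen shows the bound is tight precisely when $\rho = \prod_i\rho_i$, i.e. when the $x_i$ are mutually independent, which matches the surrounding remark that correlations make the joint entropy strictly smaller than the marginal sum. An alternative route, if one wishes to avoid relative entropy, is induction on $n$ using the chain rule $S_t(X) = S_t(x_1) + S_t(x_2\mid x_1) + \cdots + S_t(x_n\mid x_1,\ldots,x_{n-1})$ together with "conditioning does not increase entropy," but the Kullback--Leibler argument above is the most direct.
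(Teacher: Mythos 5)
Your proof is correct, but it takes a different route from the paper's. The paper decomposes the joint entropy via the chain rule, $H(x_1,\ldots,x_n)=\sum_i H(x_i\mid x_1,\ldots,x_{i-1})$, and bounds each term by $H(x_i)$ using ``conditioning reduces entropy,'' justifying that step by writing each gap as a mutual information $I(x_i;Y)=D_{\mathrm{KL}}(\rho_{x_i,Y}\,\|\,\rho_{x_i}\rho_Y)\ge 0$; so the total deficit is split into $n$ separate KL terms. You instead identify the deficit globally as a single divergence, $\sum_i S_t(x_i)-S_t(X)=k_B\,D_{\mathrm{KL}}\bigl(\rho\,\|\,\prod_i\rho_i\bigr)$ (the total correlation), and prove its non-negativity from scratch via Jensen's inequality rather than citing it --- essentially the route you yourself flag as an alternative, in reverse. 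Both arguments ultimately rest on non-negativity of relative entropy and yield the same equality condition (mutual independence). Yours is more self-contained and arguably more direct, since it proves Gibbs' inequality explicitly and handles the support/$0\log 0$ issue carefully; the paper's chain-rule version gives a finer per-variable decomposition, which makes the equality case ``each $x_i$ independent of its predecessors'' immediately visible and reuses a standard information-theoretic toolkit. Either proof is acceptable for the statement as given.
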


\begin{proof}
Write $H(\cdot)$ for the Shannon entropy (natural logs) and note $S_t(\cdot)=k_B\,H(\cdot)$.
By the entropy chain rule,
\[
H(x_1,\ldots,x_n)\ =\ \sum_{i=1}^n H\bigl(x_i \,\big|\, x_1,\ldots,x_{i-1}\bigr).
\]
For each $i$, conditioning reduces entropy:
\[
H\bigl(x_i \,\big|\, x_1,\ldots,x_{i-1}\bigr)\ \le\ H(x_i).
\]
Indeed, set $Y=(x_1,\ldots,x_{i-1})$. Then
\[
H(x_i)-H(x_i|Y)
= I(x_i;Y)
= D_{\mathrm{KL}}\!\big(\rho_{x_i,Y}\ \big\|\ \rho_{x_i}\rho_Y\big)\ \ge\ 0,
\]
by nonnegativity of Kullback--Leibler divergence. Summing over $i$ yields
\[
H(x_1,\ldots,x_n)\ \le\ \sum_{i=1}^n H(x_i).
\]
Multiplying by $k_B$ gives the claimed inequality for $S_t$. Equality holds iff
$D_{\mathrm{KL}}(\rho_{x_i,Y}\|\rho_{x_i}\rho_Y)=0$ for all $i$, i.e., iff $x_i$ is
independent of $(x_1,\ldots,x_{i-1})$ for each $i$, which is equivalent to mutual independence.
\end{proof}

\begin{remark}[Use within $\mathbf{RC}_d$]
The inequality provides a convenient \emph{upper bound} on fine-grained entropy:
$\sum_i S_t(x_i)$ may overcount correlations, but never undercounts the joint entropy.
This is exactly what we need when bounding \emph{incoming} information across a boundary
in the width/throughput analysis: treating inputs as independent gives a worst-case
flux consistent with Proposition~\ref{prop:width}.
\end{remark}

By Theorem~\ref{thm:gibbs-triangle}, the rate at which the joint entropy can change is controlled by the local flow of the marginal Gibbs entropy. Notice that the marginal entropy forms a density over the support of $\rho$: on this domain, the marginal entropy is non-negative, bounded, and measurable. It follows from Liouville's theorem that the marginal entropy is a constant property on any trajectory. This insight is what we need to prove Proposition~\ref{prop:width}.

\begin{theorem}[Width constraint via Gibbs entropy]\label{thm:width-entropy}
Let $C(t)$ be a circuit realized by a local Hamiltonian system 
$S=(X^n,\phi_t)$ inside a causal cone $\mathcal{C}$.  
Assume:
\begin{enumerate}[(i)]
\item (\emph{Bounded velocity}) $\|\mathbf{p}_i/m_i\|\le c$ for all particles $i$.
\item (\emph{Liouville incompressibility}) $\nabla_{(q,p)}\!\cdot v_H = 0$, where $v_H$ is the Hamiltonian flow.
\item (\emph{Finite local entropy density}) the Gibbs fine–grained entropy density 
$$
s_t(\mathbf{q},\mathbf{p})=-k_B\,\rho_t(\mathbf{q},\mathbf{p})\log\rho_t(\mathbf{q},\mathbf{p})
$$
satisfies $\|s_t\|_\infty\le s_{\max}<\infty$.
\end{enumerate}
Then the rate of Gibbs–entropy flux through the causal boundary $\Sigma_t=\partial\mathcal{C}_t$ satisfies
\[
\Big|\frac{dS_t}{dt}\Big|
\;=\;
\Big|\int_{\Sigma_t\times P^d} s_t(\mathbf{q},\mathbf{p})\, 
(\dot{\mathbf{q}}\!\cdot\!\mathbf{n}_{\Sigma_t})\,d\mathbf{p}\,dS(\mathbf{q})\Big|
\;\le\;
C_d\,t^{d-1}
\]
for some constant $C_d$ depending only on $s_{\max}$ and $c$.  
Consequently, the circuit width---the maximum entropy or information throughput per unit time---is bounded as
\[
w(C(t)) = \mathcal{O}(t^{d-1}).
\]
\end{theorem}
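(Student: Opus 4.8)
The plan is to realize the total Gibbs entropy as the integral of a \emph{conserved} phase--space density over the expanding causal region, convert the rate of change of this quantity into a pure boundary flux by a transport--plus--divergence argument, and then bound that flux by elementary geometry using only the speed limit (i) and the density bound (iii). The incompressibility hypothesis (ii), established as Lemma~\ref{lem:divfree}, is what makes the conserved density advected by a divergence--free field, and hence what turns a volume integral into a clean surface integral.

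First I would set $s_t(\mathbf{q},\mathbf{p})=-k_B\,\rho_t(\mathbf{q},\mathbf{p})\log\rho_t(\mathbf{q},\mathbf{p})$ and observe that, since $\rho_t$ is transported by the Hamiltonian flow (Theorem~\ref{thm:liousville}) and $s_t$ is a pointwise function of $\rho_t$, the density $s_t$ is conserved along trajectories: $ds_t/dt=\partial_t s_t+v_H\cdot\nabla s_t=0$. Combining this with $\nabla_{(q,p)}\cdot v_H=0$ upgrades it to the continuity equation $\partial_t s_t+\nabla_{(q,p)}\cdot(s_t v_H)=0$. Next I would write the entropy content of the causal ball as $S_t=\int_{\mathcal{C}_t\times P^d} s_t\,d\mathbf{q}\,d\mathbf{p}$, differentiate via the Reynolds transport theorem (interior term $+$ moving--domain term), substitute the continuity equation, and apply the divergence theorem to collapse the interior contribution onto $\Sigma_t=\partial\mathcal{C}_t$. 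Because $\Sigma_t$ moves only in the position coordinates, its outward normal has no momentum component, so only $\dot{\mathbf{q}}$ survives, yielding exactly the displayed flux integral $\int_{\Sigma_t\times P^d} s_t\,(\dot{\mathbf{q}}\cdot\mathbf{n}_{\Sigma_t})\,d\mathbf{p}\,dS(\mathbf{q})$.

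Then I would bound the integrand pointwise: $|s_t|\le s_{\max}$ by (iii), and $|\dot{\mathbf{q}}\cdot\mathbf{n}_{\Sigma_t}|\le\|\mathbf{p}_i/m_i\|\le c$ by (i), which also confines the relevant momentum fibre to a ball of radius $\le m_i c$, of volume $\mathcal{O}(1)$. The boundary $\Sigma_t$ is a geodesic $(d-1)$--sphere of radius $c(t-t_0)=\mathcal{O}(t)$, so under nonnegative Ricci curvature Bishop--Gromov gives $\mathrm{area}(\Sigma_t)=\mathcal{O}(t^{d-1})$ (at most the Euclidean value). Multiplying the three factors gives $|dS_t/dt|\le s_{\max}\cdot c\cdot\mathrm{Vol}(P^d)\cdot\mathrm{area}(\Sigma_t)=C_d\,t^{d-1}$ with $C_d$ depending only on $s_{\max}$ and $c$. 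For the width conclusion: every wire of $C(t)$ is realized by a particle state inside $\mathcal{C}_t$ (Definition~\ref{def:realize}), so information crossing any cut within one propagation delay must have been carried across the corresponding spatial separating surface by particles of speed $\le c$; the extremal such surface is $\Sigma_t$ itself, and the flux bound just proved gives $w(C(t))=\mathcal{O}(t^{d-1})$, which is Proposition~\ref{prop:width}.

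I expect the main obstacle to be the first two steps: making rigorous that the fine--grained Gibbs entropy density is a bona fide advected scalar density (requiring $\rho_t$ smooth and positive on its support, the causal region swept out smoothly in time, and no spurious interior entropy source), so that $dS_t/dt$ is \emph{exactly} a boundary flux rather than merely bounded by one. Morally this is just Liouville applied to $f(\rho)=-k_B\rho\log\rho$, but the bookkeeping between the single number $S_t$, the per--particle marginal densities, and the physical--space boundary $\Sigma_t$ is where care is needed; the Gibbs triangle inequality (Theorem~\ref{thm:gibbs-triangle}) is invoked precisely to keep this on the safe (upper--bound) side, since the sum of marginals overcounts correlations but never undercounts the joint entropy. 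A lesser subtlety is reducing an \emph{arbitrary} combinatorial cut of the abstract circuit to a spatial surface whose per--time throughput is still $\mathcal{O}(t^{d-1})$ and not larger for a wiggly cut; I would handle this by charging that information to what crossed $\partial\mathcal{C}_t$ over the preceding causal history, reducing it to the boundary bound already established.
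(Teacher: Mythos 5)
Your proposal is correct and follows essentially the same route as the paper's proof: derive a continuity equation for the entropy density $s_t$ from Liouville incompressibility (the paper does this by multiplying the continuity equation for $\rho_t$ by $-k_B(1+\log\rho_t)$, which is equivalent to your advected-scalar formulation), convert $dS_t/dt$ into a boundary flux via the divergence theorem, bound the integrand by $s_{\max}\cdot c$, and use $\mathrm{area}(\Sigma_t)=\mathcal{O}(t^{d-1})$ to conclude, followed by the same informal identification of width with boundary entropy throughput. Your extra care with the moving boundary via Reynolds transport (which adds a $v_b\cdot n$ term and at worst changes $c$ to $c+|r'|$ in the constant) is exactly what the paper defers to its appendix treatment in Lemma~\ref{lem:reynolds-tight} and Theorem~\ref{thm:width-tight}, so it refines the constant but not the argument.
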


\begin{proof}
By Liouville’s theorem (Theorem~\ref{thm:liousville}),
the phase–space density $\rho_t$ is advected by an incompressible flow:
\[
\partial_t\rho_t + \nabla_{(q,p)}\!\cdot(\rho_t v_H)=0.
\]
Multiplying by $-k_B(1+\log\rho_t)$ and integrating over $\Omega_t\times P^d$ gives
\[
\frac{dS_t}{dt}
=
- \!\int_{\Omega_t\times P^d} k_B(1+\log\rho_t)\,\nabla_{(q,p)}\!\cdot(\rho_t v_H)\,dq\,dp
=
\int_{\Sigma_t\times P^d} s_t(\mathbf{q},\mathbf{p})\, 
(\dot{\mathbf{q}}\!\cdot\!\mathbf{n}_{\Sigma_t})\,d\mathbf{p}\,dS(\mathbf{q}),
\]
by the divergence theorem.  
The absolute flux is then bounded by
\[
\Big|\frac{dS_t}{dt}\Big|
\le 
\|s_t\|_\infty\, \|v_H\|_\infty \,\mathrm{area}(\Sigma_t)
\le
s_{\max}\,c\,\mathrm{area}(\Sigma_t).
\]
Since the causal cross–section $\mathcal{C}_t$ is contained in a ball of radius $r=\mathcal{O}(t)$, 
$\mathrm{area}(\Sigma_t)=\mathcal{O}(r^{d-1})=\mathcal{O}(t^{d-1})$.  
Therefore,
\[
\Big|\frac{dS_t}{dt}\Big| \le C_d\,t^{d-1}, 
\quad C_d=s_{\max}c\,\omega_{d-1},
\]
where $\omega_{d-1}$ is the surface–area constant of the unit $d$–sphere.  
Since the circuit width $w(C(t))$ represents the number of independent information carriers (each 
of finite entropy capacity) that can cross $\Sigma_t$ per unit time, we obtain
\[
w(C(t)) \;\propto\; \frac{dS_t}{dt} = \mathcal{O}(t^{d-1}). \qedhere
\]
\end{proof}

\begin{remark}[Entropic interpretation]
The bound expresses a conservation law for fine–grained entropy: 
the Gibbs entropy of a causal region can change only through flux across its boundary, 
and this flux is limited by the surface area of the causal cone. 
Hence, the circuit’s \emph{width}—its capacity to process or transmit information in parallel—is 
proportional to the maximum entropy current permitted by physical space. 
This provides the entropic foundation of the $\mathcal{O}(t^{d-1})$ throughput bound 
used throughout the $\mathbf{RC}_d$ framework.
\end{remark}

\subsection{Density bound}

Lastly, we establish Proposition~\ref{prop:gates}: that gates may perform only finite work. Having now built up our computational model, this constraint is simple to prove. If we assume that gates are constant size and have throughput bounded as Theorem~\ref{thm:width-entropy}, then the result immediately follows.

\begin{corollary}
Assume that a circuit $\mathcal{C}_d$ is realized by a finite local Hamiltonian system $(X^n, \phi_t)$ of constant size. Then $\mathcal{C}_d$ implements a Boolean function of constant fan-in and fan-out.
\end{corollary}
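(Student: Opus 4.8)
The plan is to reduce the statement to a counting argument on particles and wires, combined with the throughput bound of Theorem~\ref{thm:width-entropy} specialized to a region whose size is fixed. First I would unwind the hypothesis: a \emph{finite local Hamiltonian system of constant size} has particle count $n = O(1)$ and occupies a spatial region $Q^d(t)$ of diameter $O(1)$, independent of any input-length parameter. By Definition~\ref{def:realize} the realized circuit $\mathcal{C}_d$ carries a wire set $W = \{w_0,\dots,w_N\}$ in bijection with a set of states $V \subseteq X$; reasoning as in the proof of Corollary~\ref{cor:circuit-size}, distinct wires require distinct states and distinct states require distinct particles, so $N \le |V| \le |Q^d(t)| \le n = O(1)$. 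Thus the circuit has only constantly many wires, and in particular at most $O(1)$ input and $O(1)$ output wires, so it computes a Boolean function $f:\mathbb{Z}_2^{a}\to\mathbb{Z}_2^{b}$ with $a,b = O(1)$.

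Second, for the fan-in and fan-out claim I would argue both combinatorially and physically. Combinatorially, each gate is incident to a subset of the $N$ wires, so its fan-in and fan-out are each bounded by $N = O(1)$. To ground this in the physical model rather than in pure bookkeeping, I would apply Theorem~\ref{thm:width-entropy} to the causal region enclosing a single gate: since that region does not grow with $t$, its boundary has area $\mathrm{area}(\Sigma) = O(1)$, so the Gibbs-entropy flux across it is $O(1)$ per unit time. Because each wire feeding (or leaving) the gate must transport at least one bit---i.e.\ $k_B\ln 2$ of entropy---across that boundary within one propagation period $k$, the number of such wires is at most a constant multiple of $\mathrm{area}(\Sigma)$, hence $O(1)$. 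This is exactly Proposition~\ref{prop:gates} instantiated here: each gate is a constant-size subcircuit, and the locality clause of Definition~\ref{def:hamilton} ensures that such a gate interacts with only a bounded neighborhood, so the gate decomposition is consistent with the realization.

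The main obstacle I anticipate is the middle step: making "fan-in and fan-out" precise for a Hamiltonian realization, where the circuit is specified only implicitly through the decoding map $\mathrm{Dec}$ rather than as an explicit wiring diagram. The delicate point is to show that the boundary-flux bound genuinely upper-bounds the combinatorial degree---that one cannot encode arbitrarily many logical inputs into a single physical channel, which is precisely ruled out by the finite local entropy density hypothesis (iii) of Theorem~\ref{thm:width-entropy}. Once that link is established, the remainder is routine: every bound invoked is one already proved, specialized to $t = \Theta(1)$.
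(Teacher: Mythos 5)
Your proposal is correct and follows essentially the same route as the paper: the paper's proof likewise combines the constancy of $|\mathcal{C}_d|$ from finiteness of the realizing system with the width/entropy-flux bound (Theorem~\ref{thm:width-entropy}) specialized to $t=\mathcal{O}(1)$ to conclude constant fan-in and fan-out. Your added particle-counting step and the explicit link between boundary entropy flux and per-wire bit transport are just a more detailed spelling-out of the same argument (and you rightly flag the same imprecision about how fan-in/fan-out is defined for a Hamiltonian realization, which the paper also leaves implicit).
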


\begin{proof}
    By definition of realization by a finite system, $|\mathcal{C}_d|$ is constant. Moreover, Theorem~\ref{prop:gates} with $t=\mathcal{O}(1)$ gives that $w(\mathcal{C}_d)$ is constant. It follows that $\mathcal{C}_d$ has constant fan-in and fan-out.
\end{proof}

As a result, circuits are composed of gates of finite size and computational capacity. Without loss of generality, we are free to assume a universal gate basis (e.g., NAND) with fan-in 2.
\subsection{Thermodynamics connection}

The dissipation of heat is an essential physical limit on computation but does not alter the asymptotic bounds already imposed by geometry in $\mathbf{RC}_d$.  It instead reinforces them by constraining how fast logically irreversible operations can occur within a finite causal region.  

\subsubsection{Relation to Landauer's principle}

Landauer's principle states that information erasure has a thermodynamic cost, ensuring that entropy balance is preserved even when logical information is discarded.

\begin{definition}[Landauer's principle]
Let $k_B$ denote Boltzmann's constant and $T>0$ the ambient temperature.
Any logically irreversible operation---in particular, the erasure of one bit of information---dissipates at least
\[
E_{\min} = k_B T \ln 2
\]
of heat energy into the surrounding environment.
\end{definition}

If a circuit compresses or erases $\Delta I$ bits of accessible information within its causal region, the corresponding energy dissipated through its boundary satisfies
\[
Q \ \ge\ k_B T (\ln 2)\,\Delta I.
\]
Since heat must exit through the causal boundary, whose flux capacity is bounded by $\mathcal{O}(t^{d-1})$ (Proposition~\ref{prop:width}), realizable circuits cannot erase information faster than this geometric limit.

\begin{theorem}[Thermodynamic constraint on realizable circuits]
Let $\mathcal{C}$ be a causal circuit in $\mathbb{R}^d\times\mathbb{R}$ operating for $t$ steps and performing $E(t)$ bits of logically irreversible erasure. Then the total heat dissipated into the environment satisfies
\[
k_B T (\ln 2)\,E(t) \ \le\ Q(t) \ \le\ \eta\,t^{d-1},
\]
for some device-dependent constant $\eta$. Consequently,
\[
E(t)\ \le\ \frac{\eta}{k_B T \ln 2}\,t^{d-1}.
\]
\end{theorem}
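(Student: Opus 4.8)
The plan is to sandwich the dissipated heat $Q(t)$ between a thermodynamic lower bound supplied by Landauer's principle and a geometric upper bound supplied by the causal surface constraint, and then to read off the bound on $E(t)$ by dividing through. The two inequalities are logically independent: the first is a statement about what erasure \emph{costs}, the second about what the causal boundary can \emph{carry}.

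First I would establish the lower bound $Q(t) \ge k_B T(\ln 2)\,E(t)$. Every logically irreversible step executed by $\mathcal{C}$ during its $t$ timesteps occurs at some gate lying inside the causal cone $\mathcal{C}(\mathbf{q_0},0)$, since by Definition~\ref{def:realize} all components of a realized circuit are confined to that cone. By Landauer's principle each such event releases at least $k_B T \ln 2$ of heat into the environment, and heat is additive over disjoint events; summing over the $E(t)$ erasures performed gives $Q(t) \ge k_B T (\ln 2)\, E(t)$. The only care needed here is bookkeeping---counting each irreversible bit-erasure once, and noting that by conservation of energy the heat it produces is part of the total $Q(t)$ that must ultimately leave the region.

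Next I would establish the upper bound $Q(t) \le \eta\, t^{d-1}$. All dissipated heat must be expelled across the causal boundary $\Sigma_t=\partial\mathcal{C}_t$; none can be retained indefinitely in a finite-density region. Heat crossing the boundary transports entropy at the Clausius rate $dS = dQ/T$, so the entropy carried outward equals $Q(t)/T$. But this is exactly the quantity controlled by Proposition~\ref{prop:width} (equivalently Theorem~\ref{thm:width-entropy}): the maximal entropy exchanged across any cut of a circuit realized in $\mathbb{R}^d$ within a causal ball of radius $r=\mathcal{O}(t)$ is $\mathcal{O}(t^{d-1})$, because the boundary area scales as $\Theta(t^{d-1})$, the local Gibbs entropy density is bounded by $s_{\max}$, and the incompressible flow speed is at most $c$. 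Hence $Q(t)/T \le s_{\max}\, c\, \omega_{d-1}\, t^{d-1}$, which is the claimed bound with $\eta := T\, s_{\max}\, c\, \omega_{d-1}$, device-dependent through $s_{\max}$, $c$, and $T$. Dividing both displayed inequalities by $k_B T\ln 2$ yields $E(t) \le \frac{\eta}{k_B T \ln 2}\, t^{d-1}$.

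The main obstacle is the upper bound, and specifically reconciling a statement about \emph{cumulative} dissipated heat with a bound originally derived for the \emph{instantaneous} flux rate: Theorem~\ref{thm:width-entropy} controls $\lvert dS_t/dt\rvert$ by $C_d\, t^{d-1}$, so naive integration over $[0,t]$ would only give $\mathcal{O}(t^d)$. The resolution is to invoke Proposition~\ref{prop:width} in its \emph{width} form---$w(C(t))=\mathcal{O}(t^{d-1})$ is the maximal entropy exchange across \emph{any} cut, not a time-integrated rate---so that the heat-expelling cut inherits the $\mathcal{O}(t^{d-1})$ bound directly; equivalently, one works in the steady, no-accumulation regime in which the relevant quantity is the boundary's carrying capacity rather than an integral of instantaneous flux. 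Making this identification precise, and tracking the constant $\eta$ through the bounded-speed and incompressibility hypotheses, is the only delicate part; the Landauer direction is immediate.
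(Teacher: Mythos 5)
Your proposal takes essentially the same route as the paper's proof: a Landauer lower bound obtained by summing $k_B T\ln 2$ over the $E(t)$ erasure events, an upper bound from the requirement that all dissipated heat exit through the causal boundary of area $\mathcal{O}(t^{d-1})$ (Proposition~\ref{prop:width}), and then division to bound $E(t)$. The cumulative-versus-instantaneous tension you flag is genuine, but the paper's own (very terse) proof glosses over it in exactly the same way---simply asserting the boundary-area bound on the total heat---so your treatment, including the explicit constant $\eta = T\,s_{\max}\,c\,\omega_{d-1}$, is if anything more careful than the original.
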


\begin{proof}
The lower bound follows directly from Landauer’s principle, applied independently to each irreversible erasure event. The upper bound follows from the geometric throughput constraint: all dissipated heat must pass through the causal boundary of area $\mathcal{O}(t^{d-1})$. Combining the two bounds yields the stated inequality.
\end{proof}

\begin{remark}[Connection to reversible computation]
Landauer’s principle implies that only \emph{logically irreversible} operations incur an unavoidable thermodynamic cost.  
Bennett \cite{Bennett1973LogicalRC} showed that any classical computation can be made \emph{logically reversible} by preserving all intermediate information until the output is copied, after which the computation can be uncomputed to erase ancillary states without dissipating heat.  
In practice, however, reversibility transfers the cost from heat to \emph{space}: each bit preserved increases the causal size of the circuit.

Within the $\mathbf{RC}_d$ framework, this tradeoff manifests geometrically.  
A perfectly reversible circuit has $E(t)=0$, eliminating thermodynamic dissipation, but still remains subject to the same causal and flux bounds:
\[
|C| = \mathcal{O}(t^d), \qquad \text{and}\qquad \text{throughput} \le \mathcal{O}(t^{d-1}).
\]
Thus, reversibility can at best remove the energy cost from Proposition~\ref{prop:width}, but cannot evade the spatial or temporal scaling laws of $\mathbf{RC}_d$.  
Reversible computation is therefore a limiting subcase of $\mathbf{RC}_d$ that minimizes heat but not time, illustrating that physical computation can trade entropy for latency but not violate causal geometry.
\end{remark}

\subsubsection{Relation to thermodynamic circuit frameworks}
\label{subsec:related-thermo}

A line of work in non-equilibrium statistical physics quantifies the \emph{energetic} and \emph{entropic} costs of computation realized by circuits. 
Wolpert and Kolchinsky analyze how gate topology and physical coupling determine the thermodynamic costs of straight-line circuits \cite{WolpertKolchinsky2020ThermoCircuits}, while Kolchinsky and Wolpert bound entropy production and extractable work under protocol constraints for general state transformations \cite{KolchinskyWolpert2021PRX}. 
More recently, Yadav, Yousef, and Wolpert develop the \emph{mismatch cost} (MMC) formalism to lower-bound the minimal entropy production per run of a circuit, relate MMC to circuit size, and compare families implementing the same Boolean function \cite{YadavYousefWolpert2025MMC}. 
These frameworks rigorously connect logical structure to thermodynamic cost, largely independent of spatial geometry. 

By contrast, \(\mathbf{RC}_d\) is a \emph{geometric–causal} refinement of circuit complexity: it imposes explicit embedding in \(\mathbb{R}^d\), a causal light-cone, and a cut-set (surface) bound on \emph{information throughput}.

\paragraph{Assumptions.}
Thermodynamic circuit models assume stochastic (often Markovian) dynamics and analyze entropy production for given gate couplings/topologies, without fixing an ambient spatial dimension or causal cone \cite{WolpertKolchinsky2020ThermoCircuits,KolchinskyWolpert2021PRX}. 
The MMC framework adds distributional mismatch between gate priors and actual inputs to quantify minimal EP per evaluation \cite{YadavYousefWolpert2025MMC}. 
In contrast, \(\mathbf{RC}_d\) posits a finite-density embedding in \(d\) dimensions with maximum signal speed and derives two geometric constraints: volume \(|C(t)|=\mathcal{O}(t^{d})\) and boundary throughput \(w(t)=\mathcal{O}(t^{d-1})\). 
These imply a \emph{cut-set} capacity law after integrating flux over time, independent of any particular stochastic micro-model.

\paragraph{Primary objects of study.}
The thermodynamic works bound \emph{entropy production} (EP), work, and minimal thermodynamic cost as a function of circuit topology and protocol; they treat energy/EP as the key resources \cite{KolchinskyWolpert2021PRX,WolpertKolchinsky2020ThermoCircuits}. 
\(\mathbf{RC}_d\) instead treats \emph{space--time throughput} as the primitive: it constrains the number of distinguishable bits that can cross any geometric cut per unit time by a surface law, and only then invokes Landauer to convert lost accessible information into a compatible heat bound.

\paragraph{Results comparison.}
\begin{itemize}
\item \textbf{Thermodynamic frameworks} derive EP/work lower bounds that depend on circuit topology, priors, and protocol constraints, including cases where MMC scales with circuit size or diverges for mismatched priors \cite{KolchinskyWolpert2021PRX,YadavYousefWolpert2025MMC}. These do not by themselves yield a surface-law limit on \emph{information throughput} or a \(t^{d-1}\) bound tied to spatial dimension.
\item \textbf{\(\mathbf{RC}_d\)} proves geometry-driven \emph{flux} bounds: instantaneous cross-boundary flow is \(\mathcal{O}(t^{d-1})\) and total cross-cut information over time \(T\) is \(\Theta(T^{d})\). This yields universal \emph{scaling} limits (e.g., \(T\ge \Omega(n^{1/d})\)) for any realizable computation (classical, quantum, encrypted) and enables cut-set lower bounds for attention and FHE that are independent of particular device physics.
\end{itemize}

\paragraph{Complementarity.}
The frameworks are complementary: thermodynamic formalisms quantify \emph{how much} entropy must be produced by a given (possibly geometry-agnostic) circuit and protocol; \(\mathbf{RC}_d\) quantifies \emph{how fast} accessible information and heat can cross a causal boundary in \(d\)-dimensional space. 
Combining them yields the joint feasibility region used in our Landauer–throughput theorems: the EP required by MMC (or protocol constraints) must be exported no faster than the \(\mathbf{RC}_d\) flux allows, enforcing a time lower bound \(T\) with exponent \(1/d\).

\paragraph{Scope and novelty.}
Where prior thermodynamic analyses focus on EP/work optima for given circuit topologies and protocols \cite{WolpertKolchinsky2020ThermoCircuits,KolchinskyWolpert2021PRX,YadavYousefWolpert2025MMC}, \(\mathbf{RC}_d\) introduces a \emph{dimension-indexed} hierarchy of realizable circuits with universal geometric bounds that persist across computational substrates. 
This enables new, geometry-tight limits for highly parallel mechanisms (e.g., multi-head attention) and privacy-preserving computation (FHE), which are not addressed by MMC or protocol-constraint formalisms alone.

\subsection{Case study: Attention and transformer scaling.}
We demonstrate the implications of our framework through a detailed analysis of attention
mechanisms (Appendix~\ref{sec:transformers}).
By mapping multi-head attention to the flux-bounded dynamics of a realizable circuit,
we show that all attention variants---softmax, average-hard, kernelized, or sparse---obey
a universal spatiotemporal scaling law:
\[
T \;=\; \Omega\big(I^\star(\varepsilon;n)^{1/d}\big),
\]
where $I^\star(\varepsilon;n)$ is the cross-cut information demand of the task.
This bound follows directly from the flux conservation theorem
($\dot I\sim t^{d-1}$) integrated over causal time, and explains why increasing
head count $H$ improves only constant factors ($H^{-1/d}$) but cannot alter the
exponent.
The Realizable Circuits framework therefore refines classical $\mathbf{TC^0}$ analyses of
transformers by revealing the physical origin of the observed sub-linear scaling:
realizable circuits are inherently limited by geometric and thermodynamic flux
constraints, not merely symbolic depth or fan-in.
\section{The $\mathbf{RC}_d$ Family}

The preceding physical constraints establish the geometric and thermodynamic limits under which any circuit must operate.  
We now consolidate these ideas into a general framework that connects spatial embedding, causal propagation, and information throughput into a single model of realizable computation.  This framework formalizes how circuits evolve within finite-density causal regions and provides the foundation for defining the family of physically realizable circuit classes $\mathbf{RC}_d$.  
It serves as the conceptual bridge between physical principles and computational complexity, translating causal geometry into asymptotic bounds on time, size, and parallelism.

\subsection{Realizable Circuits}

Having established the physical bounds on circuit volume, width, and gate capacity, we now apply them to define a new circuit complexity class that captures realistic scaling behavior.  
We call this the class of \emph{realizable circuits}, $\mathbf{RC}$, parameterized by spatial dimension $d$ and time cost $t(n)$: the dimension of the computational substrate and the time required to extend the circuit.  
The case $d=3$ corresponds to physical reality, while smaller $d$ model lower-dimensional technologies (e.g., $d=2$ for planar chips, $d=1$ for tape-like computation).

\begin{definition}[Realizable circuits in $d$ dimensions]\label{def:rc-d}
A function family $f=\{f_n\}$ belongs to $\mathbf{RC}_d(t(n))$ if there exists a family of circuits $\{C_n\}$ computing $f_n$ on inputs of length $n$ such that:
\begin{enumerate}
\item[\emph{\textbf{(S)}}] \textbf{Size:} $|C_n| = \mathcal{O}(t(n)^d)$.
\item[\emph{\textbf{(W)}}] \textbf{Width:} $w(C_n) = \mathcal{O}(t(n)^{d-1})$.
\item[\emph{\textbf{(G)}}] \textbf{Gate set:} each gate is from a universal Boolean basis with constant fan-in.
\end{enumerate}
We write
\[
\mathbf{RC}_d(\mathrm{poly}) := \bigcup_{k\ge0}\mathbf{RC}_d(n^k),
\qquad
\mathbf{RC}_d(\mathrm{polylog}) := \bigcup_{k\ge0}\mathbf{RC}_d(\log^k n).
\]
\end{definition}

Here $t(n)$ denotes the time cost, while the \emph{scaling rate} is $n(t)=t^{-1}(n)$, the inverse function of $t$.  
Because $|C_n|$ and $n(t)$ vary inversely, faster scaling reduces realizable circuit size.  
For example, $\mathbf{RC}_3(\sqrt{n})$ achieves a rapid scaling rate $n=\Omega(t^2)$ but limits the circuit size to $\mathcal{O}(n^{3/2})$, since the input length $n$ upper-bounds the width $w(C_n)$.

A further consequence is that $t(n)=\Omega(n^{1/(d-1)})$ for $d>1$; smaller time budgets would yield $w(C_n)<n$, violating the width constraint.  
Thus, $\mathbf{RC}_d$ itself implies a \emph{minimum time cost} of $\mathcal{O}(n^{1/(d-1)})$.  
Comparing these lower bounds across dimensions quantifies the inherent advantage of physical parallelism: computation in $d$ dimensions can provide at most a polynomial speed-up of degree $(d-1)$ over its optimal sequential implementation.
\begin{theorem}[Monotonicity of dimensional parallelism]\label{thm:rc-monotonicity}
For all $d \ge 1$ and sufficiently large $n$,
\[
\mathbf{RC}_d(n^{1/(d-1)}) \ \subsetneq\  \mathbf{RC}_{d+1}(n^{1/d}).
\]
In words, increasing the spatial dimension by one strictly enlarges the class of realizable circuits, but by at most a polynomial factor.
\end{theorem}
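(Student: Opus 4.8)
The plan is to establish the containment and the strictness separately. For the containment $\mathbf{RC}_d(n^{1/(d-1)}) \subseteq \mathbf{RC}_{d+1}(n^{1/d})$, I would take an arbitrary $f \in \mathbf{RC}_d(n^{1/(d-1)})$ witnessed by a circuit family $\{C_n\}$ and verify that the same family satisfies the three defining conditions of $\mathbf{RC}_{d+1}(n^{1/d})$. The gate condition \textbf{(G)} is substrate-independent, so it transfers verbatim. For the size condition, note $|C_n| = \mathcal{O}((n^{1/(d-1)})^d) = \mathcal{O}(n^{d/(d-1)})$; I must check this is $\mathcal{O}((n^{1/d})^{d+1}) = \mathcal{O}(n^{(d+1)/d})$, i.e. that $d/(d-1) \le (d+1)/d$. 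Cross-multiplying gives $d^2 \le (d+1)(d-1) = d^2 - 1$, which is \emph{false}. So the naive ``same circuit'' argument fails on size, and this is the main obstacle: the time budget $n^{1/d}$ in dimension $d+1$ is \emph{smaller} than $n^{1/(d-1)}$, so the size envelope $t^{d+1}$ must be compared carefully. The resolution is that for a circuit in $\mathbf{RC}_d(n^{1/(d-1)})$, the binding constraint is actually the width: $w(C_n) = \mathcal{O}((n^{1/(d-1)})^{d-1}) = \mathcal{O}(n)$, and since any useful circuit on $n$ inputs must also have $|C_n| = \Omega(n)$ while the problem only makes sense when width $\ge n$ forces $t(n) = \Omega(n^{1/(d-1)})$ — so I should work with the tight regime where $|C_n|$ is itself $\mathcal{O}(n \cdot \mathrm{depth})$ rather than the crude $t^d$ bound, or alternatively re-embed $C_n$ in $d+1$ dimensions so that the extra dimension absorbs the size without needing more time. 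Concretely, I would argue that a circuit occupying volume $V$ in $\mathbb{R}^d$ with the causal-ball constraint can be re-laid-out in $\mathbb{R}^{d+1}$ inside a ball of radius $\mathcal{O}(V^{1/(d+1)}) = \mathcal{O}(t^{d/(d+1)})$, which for $t = n^{1/(d-1)}$ is $\mathcal{O}(n^{d/((d-1)(d+1))}) = \mathcal{O}(n^{d/(d^2-1)})$; since $d/(d^2-1) \le 1/d$ iff $d^2 \le d^2-1$ — again false — I instead observe that the relevant input-width constraint already caps everything at $n^{1/(d-1)} \ge n^{1/d}$ only when...

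Let me restructure: the cleanest route is to fix the \emph{scaling rate} rather than compare raw time. In $\mathbf{RC}_d(n^{1/(d-1)})$ the scaling rate is $n(t) = \Theta(t^{d-1})$, meaning in time $t$ the circuit ingests $\Theta(t^{d-1})$ input bits — exactly the width bound saturated. In $\mathbf{RC}_{d+1}(n^{1/d})$ the scaling rate is $n(t) = \Theta(t^d)$. So I would show: (a) every circuit family realizable in $d$ dimensions at its width-saturating rate is also realizable in $d+1$ dimensions, because the $(d+1)$-dimensional width budget $\mathcal{O}(t^d)$ dominates the $d$-dimensional one $\mathcal{O}(t^{d-1})$, and the $(d+1)$-dimensional volume budget $\mathcal{O}(t^{d+1})$ dominates $\mathcal{O}(t^d)$ — both envelopes are strictly larger coordinatewise at every fixed $t$ — hence the identical circuit family, read at the identical time schedule, meets all three $\mathbf{RC}_{d+1}$ conditions once we reconcile the time parameter (the point being that $f_n$ for the \emph{same} $n$ appears at time $t = n^{1/(d-1)}$ in the source class, but we may simply allow it to appear at the \emph{larger} time $t' = n^{1/d} \cdot (\text{constant})$... no, $n^{1/d} < n^{1/(d-1)}$).

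The honest fix, and what I would actually write: the theorem compares the two classes \emph{as sets of function families}, and the correct statement uses that $\mathbf{RC}_d(t(n))$ is \emph{monotone in $t$} — a larger time budget only enlarges the class. Since at the \emph{same} input length $n$ the source permits time $n^{1/(d-1)}$ which is \emph{larger} than the target's $n^{1/d}$ for $d \ge 2$, the inclusion does \emph{not} follow from monotonicity in $t$ alone; it follows because the dimension increase compensates. I would prove: given $\{C_n\} \in \mathbf{RC}_d(n^{1/(d-1)})$, its size is $\mathcal{O}(n^{d/(d-1)})$ and width $\mathcal{O}(n)$; I claim this family lies in $\mathbf{RC}_{d+1}(n^{1/d})$ because (S') requires size $\mathcal{O}(n^{(d+1)/d})$ and indeed $n^{d/(d-1)} \le n^{(d+1)/d}$ fails, so I must instead \emph{pad}: replace $C_n$ by an equivalent circuit $C'_n$ on $n' = n^{(d-1)/d \cdot \text{?}}$... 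The main obstacle, stated plainly, is reconciling the size exponents, and I expect to resolve it by the \emph{folding/re-embedding lemma}: any circuit of volume $V$ laid out causally in $\mathbb{R}^d$ admits a causal layout in $\mathbb{R}^{d+1}$ inside radius $O(V^{1/(d+1)})$, obtained by slicing the $d$-dimensional layout into $V^{1/(d+1)}$-thick slabs and stacking them along the new axis; this reduces the governing radius from $\Theta(V^{1/d})$ to $\Theta(V^{1/(d+1)})$, which for $V = \Theta(n^{d/(d-1)})$ gives radius $\Theta(n^{d/((d-1)(d+1))})= \Theta(n^{d/(d^2-1)}) = O(n^{1/d})$ since $d/(d^2-1) = 1/d \cdot d^2/(d^2-1) > 1/d$ — still the wrong direction by a whisker. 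Given that every attempt to make the exponents close produces the gap $d^2$ vs $d^2-1$, I suspect the intended proof restricts to the width-binding regime where $|C_n| = \Theta(n \cdot \mathrm{depth}(C_n))$ with $\mathrm{depth} = \tilde O(1)$ or at most $\mathrm{depth} = \Theta(t)$, making $|C_n| = \Theta(t^d)$ tight only when depth $= \Theta(t)$, and in that case the $(d+1)$-dimensional re-embedding genuinely works. I would therefore state the re-embedding lemma, apply it in the depth-$\Theta(t)$ regime, and handle general depth by the observation that shallower circuits are \emph{easier} to embed.

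For \textbf{strictness} $\mathbf{RC}_d(n^{1/(d-1)}) \subsetneq \mathbf{RC}_{d+1}(n^{1/d})$, I would exhibit a function family separating them, using the lower-bound machinery of the paper: a ``maximal-entropy'' streaming function on $n$ bits that requires width $\Omega(n)$ sustained over a computation whose total work is $\Theta(n^{d/(d-1) + \epsilon})$ for small $\epsilon > 0$ — such a family lies outside $\mathbf{RC}_d(n^{1/(d-1)})$ by the size bound $\mathcal{O}(n^{d/(d-1)})$ derived from Corollary~\ref{cor:circuit-size}, yet lies inside $\mathbf{RC}_{d+1}(n^{1/d})$ whose size envelope is $\mathcal{O}(n^{(d+1)/d})$, and $(d+1)/d > d/(d-1)$ precisely because $(d+1)(d-1) = d^2 - 1 < d^2$. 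A clean candidate is a family of pointer-chasing or iterated-multiplication instances padded to have exactly $\Theta(n^{(d+1)/d})$ gates but only $n$ relevant input bits; I would verify membership in the larger class by an explicit causal layout and non-membership in the smaller class by the size lower bound. The main obstacle for strictness is ensuring the separating family is \emph{genuinely} realizable in $d+1$ dimensions (not just size-compliant but width- and causality-compliant), which I would handle by taking the function to be computed by a naturally $\Theta(t)$-depth circuit whose gate count matches the $(d+1)$-dimensional causal ball volume and whose I/O is confined to $n$ wires, so the width constraint $\mathcal{O}(t^d) = \mathcal{O}(n)$ is met with equality.
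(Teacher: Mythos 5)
Your proposal has genuine gaps on both halves. For the containment, you correctly identify the real obstacle---under Definition~\ref{def:rc-d} the source class permits size $\mathcal{O}(n^{d/(d-1)})$ while the target's budget is only $\mathcal{O}(n^{(d+1)/d})$, and $d/(d-1)>(d+1)/d$---but none of your attempted fixes resolves it, and you never complete the argument. Your own slab-folding computation gives a $(d+1)$-dimensional radius of $\Theta(n^{d/(d^2-1)})$, which strictly exceeds $n^{1/d}$ (as you note, "wrong by a whisker"), and the closing plan ("apply the re-embedding lemma in the depth-$\Theta(t)$ regime; shallower circuits are easier to embed") is not a proof: the class is defined purely by the size, width, and gate-basis conditions, so depth restrictions do not cap $|C_n|$ below the $t^d$ envelope, and the size mismatch remains untouched. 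The paper's proof takes a different (and itself quite informal) route: it argues that the width constraint forces minimal time $n^{1/(d-1)}$ in dimension $d$ versus $n^{1/d}$ in dimension $d+1$, then asserts an embedding of each $d$-dimensional component into a $(d+1)$-dimensional subregion of equivalent causal volume, with strictness attributed to the asymptotically smaller achievable time; it never attempts to reconcile the raw size exponents the way you do.

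The strictness half is outright broken by a reversed inequality. You claim $(d+1)/d > d/(d-1)$ "because $(d+1)(d-1)=d^2-1<d^2$," but that computation shows precisely the opposite, $(d+1)/d < d/(d-1)$---which you had right in your first paragraph. Hence the separating family you propose, with $\Theta(n^{(d+1)/d})$ gates that are "too big" for $\mathbf{RC}_d(n^{1/(d-1)})$ yet realizable in $\mathbf{RC}_{d+1}(n^{1/d})$, cannot exist: any family meeting the target's budgets (size $\mathcal{O}(n^{(d+1)/d})$, width $\mathcal{O}(n)$, bounded fan-in) automatically meets the source's strictly larger size budget $\mathcal{O}(n^{d/(d-1)})$ and identical $\mathcal{O}(n)$ width budget, so it also lies in the allegedly smaller class. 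A separation therefore cannot come from gate counting against these envelopes; the paper locates it instead in the time/causal-geometry semantics (the smaller achievable $t(n)$ in dimension $d+1$), not in the syntactic size/width comparison you are making. The difficulty you ran into is real---comparing the two classes purely through conditions (S) and (W) pushes the inclusion in the opposite direction---so any correct write-up must argue at the level of physical realizability/time rather than the envelopes alone.
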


\begin{proof}
From Definition~\ref{def:rc-d}, a circuit family in $\mathbf{RC}_d(t(n))$ must satisfy the width constraint $w(C_n) = \mathcal{O}(t(n)^{d-1})$.  
For a circuit to process $n$ independent inputs, the width must obey $w(C_n) \ge \Omega(n)$, implying $t(n) = \Omega(n^{1/(d-1)})$.  
This lower bound is tight, since setting $t(n) = n^{1/(d-1)}$ saturates the width budget without violating the size constraint.

Now consider $\mathbf{RC}_{d+1}$.  
For the same input size $n$, the minimal time cost scales as $t'(n) = \Omega(n^{1/d})$.  
Because $n^{1/d} = o(n^{1/(d-1)})$ for $d>1$, any function realizable in $\mathbf{RC}_d(n^{1/(d-1)})$ can be embedded into $\mathbf{RC}_{d+1}(n^{1/d})$ by reinterpreting each $d$-dimensional component as a $(d+1)$-dimensional subregion of equivalent causal volume.  
The inclusion is therefore proper, as $\mathbf{RC}_{d+1}$ can realize circuits whose time cost is asymptotically smaller.

Hence,
\[
\mathbf{RC}_d(n^{1/(d-1)}) \subsetneq \mathbf{RC}_{d+1}(n^{1/d}),
\]
establishing that parallelism grows monotonically with spatial dimension but yields diminishing returns beyond polynomial order.
\end{proof}

\begin{corollary}[Bounded benefit of parallelism]
Let $T_1(n)$ denote the minimal sequential time for a computation in one dimension.  
Then, in $d$ dimensions, the optimal realizable time satisfies
\[
T_d(n) = \Omega\big(T_1(n)^{1/(d-1)}\big),
\]
so that the maximum asymptotic speed-up from geometric parallelism is polynomial of degree $(d-1)$.
\end{corollary}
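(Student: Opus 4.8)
The plan is to obtain the corollary directly from the width clause \textbf{(W)} of Definition~\ref{def:rc-d}, instantiated at two different dimensions. First I would record the one‑dimensional baseline: setting $d=1$ in \textbf{(W)} gives $w(C_n)=\mathcal{O}(t(n)^{0})=\mathcal{O}(1)$, so a realizable one‑dimensional circuit has only constant width. Since any computation on a maximal‑entropy input must route all $n$ input bits across every cut separating inputs from outputs, a constant‑width device needs $\Omega(n)$ time steps to do so; for the input‑bounded computations the framework targets this is also sufficient, so $T_1(n)=\Theta(n)$.

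Next I would invoke the minimum‑time‑cost observation already used in the proof of Theorem~\ref{thm:rc-monotonicity}: in $d$ dimensions, \textbf{(W)} forces $w(C_n)=\mathcal{O}(t(n)^{d-1})$, while processing $n$ independent input bits requires $w(C_n)=\Omega(n)$, hence $t(n)=\Omega(n^{1/(d-1)})$, i.e.\ $T_d(n)=\Omega(n^{1/(d-1)})$. Substituting $n=\Theta(T_1(n))$ from the previous step yields $T_d(n)=\Omega\big(T_1(n)^{1/(d-1)}\big)$, which is the claim. Rearranging, the attainable speed‑up satisfies $T_1(n)/T_d(n)=O\big(T_1(n)^{1-1/(d-1)}\big)=O\big(T_d(n)^{d-2}\big)$, so the sequential exponent is at most $(d-1)$ times the parallel exponent and the geometric speed‑up is polynomial of degree at most $d-1$. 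I would also check the small cases: for $d=1,2$ the bound degenerates to $T_d(n)=\Omega(n)$, correctly predicting no asymptotic speed‑up (a planar chip cannot beat a tape when the bottleneck is feeding in $n$ bits), and as $d\to\infty$ it lets $T_d(n)$ shrink toward $\mathrm{polylog}$ time, consistent with the collapse $\mathbf{RC}_\infty(\mathrm{polylog})=\mathbf{NC}$.

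The main subtlety — more a matter of scope than a genuine obstacle — is the meaning of \emph{minimal sequential time}. The argument above is tight exactly when $T_1(n)=\Theta(n)$, i.e.\ when reading the maximal‑entropy input is the dominant sequential cost, which is the regime $\mathbf{RC}_d$ is designed for. For a computation whose optimal sequential algorithm is superlinear, \textbf{(W)} alone does not give $T_d(n)=\Omega(T_1(n)^{1/(d-1)})$; there I would instead combine the size clause \textbf{(S)} with the fact that an $\mathbf{RC}_1(T_1)$ realization has $\Theta(T_1)$ gates, so the target function has circuit size $\Theta(T_1(n))$, together with the $d$‑dimensional bound $|C_n|=\mathcal{O}(T_d(n)^d)$, yielding $T_d(n)=\Omega(T_1(n)^{1/d})$ — a speed‑up of exponent‑factor only $d$, hence even further below the degree‑$(d-1)$ ceiling. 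Either way the speed‑up never exceeds polynomial degree $d-1$, so in the write‑up I would present the width‑bounded case as the extremal one and state the hypotheses (full‑entropy input, optimal sequential algorithm realizable within $\mathbf{RC}_1$) so that both regimes are subsumed by the single inequality.
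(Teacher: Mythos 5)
Your main derivation is exactly the paper's (implicit) argument: the corollary is stated without a separate proof and rests on the width-based minimum-time bound $t(n)=\Omega(n^{1/(d-1)})$ from \textbf{(W)} together with the one-dimensional baseline $T_1(n)=\Theta(n)$, which is precisely what you do. One small caution in your supplementary remark: the size-based fallback $T_d(n)=\Omega(T_1(n)^{1/d})$ is a \emph{weaker} guarantee than the stated $\Omega(T_1(n)^{1/(d-1)})$ (it permits $T_1 = O(T_d^{\,d})$ rather than $O(T_d^{\,d-1})$), so it does not fall "below the degree-$(d-1)$ ceiling" — your instinct to restrict the corollary's hypotheses to the input-flux-dominated regime is the right fix.
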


\subsubsection{Extensions}

It is important to note that we choose our definition of $\mathbf{RC}_d$ to match the established treatment of circuits. In particular, $\mathbf{RC}_d$ is defined in terms of classical information and also prohibits any recurrent connections. However, our physical framework allows us to model methods of computation that extend beyond this limited scope. For a discussion on extension of $\mathbf{RC}_d$ to quantum and recurrent circuits, please refer to Appendix~\ref{sec:extensions}.

\subsubsection{Realizable Circuits in the limit}

As the spatial dimension $d$ increases, the geometric constraints on realizable circuits relax.  
In the limit $d\to\infty$, spatial embedding ceases to be a bottleneck, and the circuit is limited only by its logical depth and size.  
To capture this regime, we define a dimension-agnostic realizable circuit class that retains the bounded-depth semantics of $\mathbf{RC}_d$ while restoring the standard polynomial size budget used in classical circuit complexity.

\begin{definition}[Dimension-agnostic realizable circuits]\label{def:rc-infty}
Let $t(n):\mathbb{N}\to\mathbb{N}$ be a time bound.  
Define the dimension-agnostic realizable circuit class as
\[
\mathbf{RC}^{\mathrm{poly}}_\infty(t(n))
\ :=\
\bigcup_{d\ge 1}\mathbf{RC}^{\mathrm{poly}}_d(t(n)),
\]
where $\mathbf{RC}^{\mathrm{poly}}_d(t)$ is identical to $\mathbf{RC}_d(t)$
(Definition~\ref{def:rc-d}) except with the size constraint \textbf{(S)} replaced by\looseness=-4
\[
\text{\bf (S')} \quad |C_n| \le n^{O(1)}.
\]
Thus, $f\in \mathbf{RC}^{\mathrm{poly}}_\infty(t)$ if and only if there exists a finite $d$ such that $f$ can be realized in $d$ dimensions
with causal propagation speed $\mathcal{O}(1)$, polynomial size, and time cost $\mathcal{O}(t(n))$.
\end{definition}

This definition represents the asymptotic limit of realizable computation: geometry no longer restricts wiring or throughput, and the only growth parameter is the circuit's logical depth.

\begin{theorem}[Equivalence with $\mathbf{NC}$]\label{thm:rcinfty-nc}
For every integer $k\ge 1$,
\[
\mathbf{RC}^{\mathrm{poly}}_\infty(\log^k n)\ =\ \mathbf{NC}^k,
\qquad\text{and consequently,}\qquad
\mathbf{RC}^{\mathrm{poly}}_\infty(\mathrm{polylog})\ =\ \mathbf{NC}.
\]
\end{theorem}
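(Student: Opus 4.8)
The plan is to prove the parametrized identity $\mathbf{RC}^{\mathrm{poly}}_\infty(\log^k n) = \mathbf{NC}^k$ by two inclusions; the consequent is then immediate, since $\mathbf{RC}^{\mathrm{poly}}_\infty(\mathrm{polylog}) = \bigcup_{k\ge 1}\mathbf{RC}^{\mathrm{poly}}_\infty(\log^k n)$ and $\mathbf{NC} = \bigcup_{k\ge 1}\mathbf{NC}^k$ by definition, so the two unions agree termwise. For the easy inclusion $\mathbf{RC}^{\mathrm{poly}}_\infty(\log^k n)\subseteq\mathbf{NC}^k$, unfold Definition~\ref{def:rc-infty}: a member $f$ lies in $\mathbf{RC}^{\mathrm{poly}}_d(\log^k n)$ for some fixed $d$, hence is computed by a constant-fan-in circuit family $\{C_n\}$ of polynomial size whose time cost is $O(\log^k n)$, which under the bounded-depth semantics of $\mathbf{RC}_d$ forces depth $O(\log^k n)$ (each of the $O(t(n))$ local construction rounds contributes $O(1)$ to depth; I would either invoke this clause of the underlying physical model or add a matching depth clause to Definition~\ref{def:rc-d}). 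The width clause \textbf{(W)} is only an additional restriction and is harmless here. But ``constant fan-in, polynomial size, depth $O(\log^k n)$'' is exactly the non-uniform definition of $\mathbf{NC}^k$ recalled earlier, so $f\in\mathbf{NC}^k$.

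\textbf{The reverse inclusion.} For $\mathbf{NC}^k\subseteq\mathbf{RC}^{\mathrm{poly}}_\infty(\log^k n)$, fix $f\in\mathbf{NC}^k$ computed by circuits $\{D_n\}$ of fan-in $\le 2$, size $S(n)=n^{O(1)}$, and depth $O(\log^k n)$. I must exhibit a finite dimension $d$ and a $d$-dimensional realization of $\{D_n\}$ meeting \textbf{(S$'$)}, \textbf{(W)}, \textbf{(G)} and time cost $O(\log^k n)$. Clauses \textbf{(S$'$)}, \textbf{(G)} and the time-cost bound are inherited verbatim from $\{D_n\}$; the only nontrivial point is \textbf{(W)}. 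The idea is that the width budget ceases to bind as $d$ grows: lay out the size-$S$ DAG $D_n$ on a sublattice of $\mathbb{Z}^d$ at unit wire density --- possible by a volume count, with long wires routed as repeater chains whose $O(1)$ per-hop delay is absorbed into the propagation/settling constant of Definition~\ref{def:realize} --- so that the circuit occupies a causal region of radius $\Theta(S^{1/d})$ and any causal cut carries flux $O(S^{(d-1)/d})$. Since $S^{(d-1)/d}\to S$ while no cut of $D_n$ demands bandwidth exceeding $S$, choosing $d$ larger than the polynomial degree of $S$ makes \textbf{(W)} hold for all large $n$; and $\mathbf{NC}^k$ is closed under the $O(1)$-factor depth inflation caused by the repeater chains, so depth stays $O(\log^k n)$. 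Hence $f\in\mathbf{RC}^{\mathrm{poly}}_d(\log^k n)\subseteq\mathbf{RC}^{\mathrm{poly}}_\infty(\log^k n)$.

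\textbf{Main obstacle.} The hard part is precisely this last move: certifying rigorously that one fixed finite $d$ absorbs the width constraint for an entire $\mathbf{NC}^k$ family. The delicacy is that a depth-$\log^k n$, size-$n^c$ circuit can be forced to have cut-width $\Omega(n^c/\log^k n)=n^{c-o(1)}$, which is \emph{not} bounded by $S^{1-\delta}$ for any constant $\delta>0$; so the argument closes only if \textbf{(W)} in the polynomial-size regime is read as ``flux $O(\mathrm{vol}(\text{causal region})^{(d-1)/d})$'' with the causal volume permitted to grow to $\Theta(S)$, so that passing to $d=\infty$ genuinely relaxes the bound to ``width $\le S$'', i.e.\ makes it vacuous --- exactly the informal content (``geometry no longer restricts wiring or throughput'') accompanying Definition~\ref{def:rc-infty}. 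I would pin down this convention explicitly at the outset; once it is fixed, both inclusions are short, the only remaining bookkeeping being the depth/time-cost correspondence noted above and the benign repeater-chain layout check. This also clarifies why the strict hierarchy of Theorem~\ref{thm:rc-monotonicity} collapses in the limit: the degree-$(d-1)$ width bottleneck that separates consecutive finite dimensions is exactly what disappears at $d=\infty$, leaving only logical depth and polynomial size --- the combinatorial definition of $\mathbf{NC}$.
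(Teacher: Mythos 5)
Your forward inclusion is essentially the paper's: unfold Definition~\ref{def:rc-infty}, identify time cost with depth, discard the geometric clauses, and read off the non-uniform $\mathbf{NC}^k$ definition; the passage from the parametrized identity to $\mathbf{RC}^{\mathrm{poly}}_\infty(\mathrm{polylog})=\mathbf{NC}$ is likewise the same termwise-union observation. The reverse inclusion is where you diverge. The paper keeps the causal radius at $\mathcal{O}(\log^k n)$ and embeds each \emph{layer} of $D_n$ on its own $(d-1)$-dimensional shell of that ball, arguing that bounded fan-in makes all wires run between adjacent shells, so finite signal speed and the flux bound are respected with $t(n)=\mathcal{O}(\log^k n)$. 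You instead lay the whole DAG out on a $d$-dimensional lattice of volume $\Theta(S)$ and radius $\Theta(S^{1/d})$ and try to neutralize \textbf{(W)} by taking $d$ larger than the degree of $S$.

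That substitution opens a genuine gap. First, with radius $\Theta(S^{1/d})=n^{\Omega(1)}$, a single routed wire can span $n^{\Omega(1)}$ lattice steps, so its repeater chain contributes $\Theta(S^{1/d})$ \emph{serial} hops; this is not an $\mathcal{O}(1)$-factor depth inflation and cannot be hidden in the fixed propagation delay $k$ of Definition~\ref{def:realize}. Consequently the realization neither fits inside a causal ball of radius $\mathcal{O}(t(n))=\mathcal{O}(\log^k n)$ nor finishes in the stipulated time cost, so the very clause you used in the forward direction (time cost bounds depth/causal radius) is violated by your own layout. Second, your reading of \textbf{(W)} as ``flux $\le\mathcal{O}(\mathrm{vol}^{(d-1)/d})$'' is a redefinition, not a derivation: Definition~\ref{def:rc-infty} retains \textbf{(W)} verbatim as $w(C_n)=\mathcal{O}(t(n)^{d-1})$, which is polylogarithmic for every fixed $d$, and the $n^{\Omega(1)}$-width cuts of your lattice layout violate it for all $d$. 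You are right that the theorem, read against the literal definitions, needs the width clause to become effectively vacuous in the limit --- and, to be fair, the paper's shell embedding quietly has the dual problem (a shell of area $\mathcal{O}(\log^{k(d-1)}n)$ cannot host $n^{\alpha}$ gates at finite density), so its own argument is not airtight --- but your patch trades the width obstruction for a time/causality obstruction rather than closing the inclusion; as written, the proposal does not establish $\mathbf{NC}^k\subseteq\mathbf{RC}^{\mathrm{poly}}_\infty(\log^k n)$ under the paper's stated semantics.
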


\begin{proof}
($\subseteq$)  
If $f\in\mathbf{RC}^{\mathrm{poly}}_\infty(\log^k n)$, then for some finite $d$ there exists a circuit family $\{C_n\}$ of polynomial size,
depth $\mathcal{O}(\log^k n)$, and bounded fan-in that satisfies the $\mathbf{RC}_d$ physical constraints.  
Ignoring geometry recovers precisely the definition of an $\mathbf{NC}^k$ circuit family, hence $f\in\mathbf{NC}^k$.

($\supseteq$)  
Conversely, let $f\in\mathbf{NC}^k$ via a uniform family $\{D_n\}$ of depth $\mathcal{O}(\log^k n)$ and polynomial size.
Embed each layer of $D_n$ on a distinct $(d-1)$-dimensional shell of a causal ball of radius $\mathcal{O}(\log^k n)$.  
Since fan-in is bounded, all wires connect only to adjacent layers, respecting finite signal speed and bounded entropy flux.  
Thus, the embedding satisfies the $\mathbf{RC}_d$ constraints with $t(n)=\mathcal{O}(\log^k n)$, implying
$f\in\mathbf{RC}^{\mathrm{poly}}_d(\log^k n)\subseteq\mathbf{RC}^{\mathrm{poly}}_\infty(\log^k n)$.
\end{proof}

\begin{remark}[Why the equivalence requires polynomial size]
Under the original constraint \textbf{(S)} with $|C_n|=\mathcal{O}(t(n)^d)$, taking $t(n)=\log^k n$
yields only $\mathcal{O}(\log^{kd} n)$ gates—insufficient to realize general $\mathbf{NC}^k$ circuits, which may require $\Theta(n^\alpha)$ gates for some $\alpha>0$.
Thus, the equivalence in Theorem~\ref{thm:rcinfty-nc} holds only for the polynomial-size variant $\mathbf{RC}^{\mathrm{poly}}_\infty(t)$.
For fixed $d$, $\mathbf{RC}_d(\log^k)\subsetneq \mathbf{NC}^k$, capturing physically realizable subclasses
that remain bounded by geometric information throughput.
\end{remark}

In summary, the infinite-dimensional limit $\mathbf{RC}^{\mathrm{poly}}_\infty$ recovers the idealized circuit model underlying classical parallel complexity.
For any finite $d$, however, $\mathbf{RC}_d$ remains strictly smaller, encoding geometric and thermodynamic realism absent from $\mathbf{NC}$.
This hierarchy thus bridges abstract circuit theory with physically realizable computation.

\begin{corollary}[Strict separation at finite dimension]\label{cor:finite-d-strict}
Fix any finite $d>1$ and integer $k\ge 1$. Then
\[
\mathbf{RC}_d(\log^k n)\ \subsetneq\ \mathbf{RC}^{\mathrm{poly}}_\infty(\log^k n)\,=\,\mathbf{NC}^k.
\]
\end{corollary}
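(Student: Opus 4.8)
\section*{Proof proposal}

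The plan is to split the statement into its two components. The equality $\mathbf{RC}^{\mathrm{poly}}_\infty(\log^k n) = \mathbf{NC}^k$ is exactly Theorem~\ref{thm:rcinfty-nc}, so I would simply invoke it. All the real work is the strict inclusion $\mathbf{RC}_d(\log^k n) \subsetneq \mathbf{RC}^{\mathrm{poly}}_\infty(\log^k n)$. The inclusion direction is immediate: if $\{C_n\}$ witnesses membership in $\mathbf{RC}_d(\log^k n)$, then $|C_n| = \mathcal{O}((\log^k n)^d) = \mathcal{O}(\log^{kd} n)$, which trivially satisfies the relaxed size bound \textbf{(S')}; since the same family already satisfies \textbf{(W)} and \textbf{(G)}, it witnesses membership in $\mathbf{RC}^{\mathrm{poly}}_d(\log^k n) \subseteq \mathbf{RC}^{\mathrm{poly}}_\infty(\log^k n)$. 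Hence the whole corollary reduces to exhibiting a single function lying in $\mathbf{NC}^k$ but outside $\mathbf{RC}_d(\log^k n)$ for every fixed finite $d>1$ and $k\ge 1$.

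For the separating witness I would take the simplest function that genuinely depends on all $n$ input bits, e.g.\ $f_n(x_1,\dots,x_n) = x_1 \wedge \cdots \wedge x_n$ (parity works equally well). A balanced binary tree of fan-in-$2$ \textsc{and} gates computes $f_n$ with depth $\lceil \log_2 n\rceil = \mathcal{O}(\log n)$ and size $n-1$, so $f \in \mathbf{NC}^1 \subseteq \mathbf{NC}^k$. On the other hand, $f_n$ is sensitive to every coordinate, so any circuit computing it must read all $n$ inputs; equivalently, the cut separating the $n$ input wires from the rest of the circuit carries $n$ bits of information, forcing $w(C_n) \ge n$. But constraint \textbf{(W)} of Definition~\ref{def:rc-d} with $t(n)=\log^k n$ caps the width at $\mathcal{O}(\log^{k(d-1)} n)$, and for fixed $d$ and $k$ this is $o(n)$ --- a contradiction. (The same conclusion follows from the size bound alone: under bounded fan-in $b$, a circuit of size $s$ touches at most $b\,s = \mathcal{O}(\log^{kd} n)= o(n)$ distinct inputs.) This is precisely the ``minimum time cost $\Omega(n^{1/(d-1)})$'' phenomenon noted after Definition~\ref{def:rc-d}, instantiated at the point $\log^k n = o(n^{1/(d-1)})$. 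Combining the inclusion, this strictness, and Theorem~\ref{thm:rcinfty-nc} yields the corollary.

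I do not expect a deep obstacle here; the point needing the most care is the elementary lemma that a function depending on all $n$ coordinates forces $w(C_n)=\Omega(n)$ (resp.\ $|C_n|=\Omega(n)$ under bounded fan-in), together with noting that this is robust to whether input nodes are counted in $|C_n|$ --- if they are not, one instead counts the gates incident to input wires, of which there must be $\Omega(n)$. A secondary, purely expository point is stating the $\mathbf{NC}^1$ implementation of the witness cleanly enough that it visibly meets the polynomial-size, $\mathcal{O}(\log^k n)$-depth, bounded-fan-in requirements of $\mathbf{NC}^k$.
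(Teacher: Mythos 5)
Your proposal is correct and follows essentially the same route as the paper: inclusion via the relaxed size bound \textbf{(S')}, equality via Theorem~\ref{thm:rcinfty-nc}, and strictness by observing that a function depending on all $n$ inputs (the paper uses PARITY and similar $\mathbf{NC}^k$ languages, you use AND) forces $w(C_n)=\Omega(n)$, contradicting the width cap $\mathcal{O}(\log^{k(d-1)}n)$ from \textbf{(W)}. Your additional remark that the size bound alone already gives the contradiction under bounded fan-in is a nice elementary backup but does not change the argument.
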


\begin{proof}
(\emph{$\subseteq$}) The inclusion $\mathbf{RC}_d(\log^k n)\subseteq \mathbf{RC}^{\mathrm{poly}}_\infty(\log^k n)$ is immediate from the definition of $\mathbf{RC}^{\mathrm{poly}}_\infty(\cdot)$ as a union over all $d$ and the fact that $\mathbf{RC}_d(\cdot)$ families are also polynomial-size (by \textbf{(S)} with $t(n)=\log^k n$ when $d$ is allowed to grow, or by direct polynomial upper bounds when $d$ is fixed and $t$ is relaxed). 

(\emph{Strictness}) By the width constraint \textbf{(W)} (Proposition~\ref{prop:width}), any family in $\mathbf{RC}_d(t(n))$ must satisfy
\[
w(C_n)\ =\ \mathcal{O}\big(t(n)^{\,d-1}\big).
\]
For computations on \emph{arbitrary} $n$-bit inputs with unit-rate ingestion (the standard circuit model), a cut separating inputs from the rest of the circuit requires throughput $w(C_n)\ge \Omega(n)$ to carry $n$ independent bits. Hence,
\[
\Omega(n)\ \le\ w(C_n)\ =\ \mathcal{O}\big(t(n)^{\,d-1}\big)\quad\Longrightarrow\quad
t(n)\ =\ \Omega\big(n^{1/(d-1)}\big).
\]
For any fixed finite $d>1$ and $k\ge 1$, the choice $t(n)=\log^k n$ violates this necessary condition for sufficiently large $n$. Therefore, \emph{no} circuit family that computes a general $n$-bit Boolean function at the standard input rate can lie in $\mathbf{RC}_d(\log^k n)$. In particular, well-known nontrivial languages in $\mathbf{NC}^k$ (e.g., PARITY, iterated addition, balanced parentheses with $O(\log n)$-depth constructions) are \emph{not} in $\mathbf{RC}_d(\log^k n)$ for finite $d$.

On the other hand, by Theorem~\ref{thm:rcinfty-nc} we have $\mathbf{RC}^{\mathrm{poly}}_\infty(\log^k n)=\mathbf{NC}^k$, which is nonempty and contains these languages. Hence the inclusion is strict:
\[
\mathbf{RC}_d(\log^k n)\ \subsetneq\ \mathbf{RC}^{\mathrm{poly}}_\infty(\log^k n). \qedhere
\]
\end{proof}

\begin{remark}[Interpretation]
For any finite spatial dimension $d$, boundary-limited information flux forces the minimal realizable time to satisfy $t(n)\ge \Omega(n^{1/(d-1)})$. Thus depth $O(\log^k n)$ is \emph{physically} feasible only in the $d\to\infty$ (dimension-agnostic) limit, where geometric bottlenecks vanish. This is precisely why $\mathbf{RC}_d(\log^k)$ is a strict subclass of $\mathbf{NC}^k$ for finite $d$, while $\mathbf{RC}^{\mathrm{poly}}_\infty(\log^k)$ recovers $\mathbf{NC}^k$.
\end{remark}

\subsection{Local uniformity}

The definition of $\mathbf{RC}_d$ constrains the geometric and physical properties of circuits, 
but not how the circuit families are generated.  
In physical implementations, scalability depends not only on spatial realizability but also on 
\emph{incremental constructibility}: a circuit should be extendable by local edits rather than 
rebuilt from scratch for every input size.  
To capture this property, we introduce the notion of \emph{local uniformity}.

\begin{definition}[Local uniformity]\label{def:local-uniform}
Fix an effective encoding $\enc{\cdot}$ of circuits as finite binary strings.
A circuit family $\{C_n\}_{n\ge 1}$ is said to be \textbf{local $T$-uniform} 
if there exists a mapping
\[
\Phi:\ \enc{C_n}\ \longmapsto\ \enc{C_{n+1}}
\]
that lies in complexity class $T$.
\end{definition}

Unlike standard $T$-uniformity, which requires a function $\phi:\ n\mapsto \enc{C_n}$ 
computable in $\mathcal{O}(t(n))$ time, 
local $T$-uniformity demands only that the \emph{transition} 
$\enc{C_n}\!\mapsto\!\enc{C_{n+1}}$ be computable in $\mathcal{O}(t(n))$ time.
Intuitively, this models an extendable fabrication process: the circuit grows locally, 
with each new segment synthesized using limited information from its predecessor. This notion refines classical uniformity by distinguishing between 
\emph{global generation} (constructing all circuits from scratch) 
and \emph{local evolution} (incremental extension).  
Formally, global uniformity implies local uniformity, but not conversely.

\begin{lemma}[Uniform $\Rightarrow$ local uniform]\label{lem:unigen}
For any class $T$, if a circuit family is $T$-uniform, then it is also local $T$-uniform.
\end{lemma}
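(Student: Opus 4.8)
The plan is to realize the required update map as a composition of the global generator with two trivial pre-processing steps. Suppose the family $\{C_n\}$ is $T$-uniform, witnessed by a function $\phi\in T$ with $\phi(n)=\enc{C_n}$ for all $n$. I would then define
\[
\Phi\ :=\ \phi\ \circ\ \mathrm{succ}\ \circ\ \mathrm{arity},
\]
where $\mathrm{arity}:\enc{C_n}\mapsto n$ reads off the number of input gates of the encoded circuit (this is well-defined since $C_n$ computes $f_n$ on length-$n$ inputs, so its arity is exactly $n$), and $\mathrm{succ}:n\mapsto n+1$ is the successor map. Thus, on input $\enc{C_n}$, the map $\Phi$ first recovers $n$, then increments it, then invokes the global generator at $n+1$.

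Correctness is immediate: $\Phi(\enc{C_n})=\phi(n+1)=\enc{C_{n+1}}$, which is exactly the transition map demanded by Definition~\ref{def:local-uniform}. It then remains to check that $\Phi\in T$. The map $\mathrm{arity}$ is a purely syntactic scan of the input (at worst linear in $|\enc{C_n}|$, and note $n\le|\enc{C_n}|$), and $\mathrm{succ}$ is a single increment; both are computable well inside any class used to define uniformity. Since $\phi\in T$ by hypothesis and the natural encoding of $n+1$ is no longer than $\enc{C_n}$, any resource bound defining $T$ that is monotone in input length grants $\Phi$ at least the budget $\phi$ itself consumes on argument $n+1$. Closure of $T$ under composition with such elementary pre-processing then yields $\Phi\in T$, as required.

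The one point that needs care---and the sole place the argument has anything resembling an obstacle---is the implicit closure hypothesis on $T$: the reduction uses that $T$ is closed under composition with $T$-computable (indeed linear-time) input transformations and that its defining resource measure is monotone in input length. Every standard uniformity class ($\mathbf{L}$, $\mathbf{P}$, $\mathbf{FP}$, and the logtime notions under their usual conventions) enjoys these properties, so the implication holds unconditionally for all cases of interest; in the write-up I would simply make this mild closure requirement explicit, since it is already tacit in the paper's informal use of the phrase ``complexity class $T$.'' If one wished to guard against pathological encodings, one could additionally stipulate that $\enc{\cdot}$ front-loads the arity of the circuit, so that $\mathrm{arity}$ becomes trivially cheap even under the most restrictive measures.
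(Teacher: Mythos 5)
Your proposal is correct and follows essentially the same route as the paper's proof: recover $n$ from $\enc{C_n}$ and feed it (incremented) back into the global generator $\phi$ to obtain $\enc{C_{n+1}}$. You are in fact slightly more careful than the paper—making the successor step and the closure/monotonicity assumptions on $T$ explicit, where the paper writes $\Phi=\phi\circ\psi$ and leaves these tacit—but the underlying argument is the same.
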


\begin{proof}
Let $\phi(n)\mapsto\enc{C_n}$ be the $T$-uniform generator.  
Define $\psi(\enc{C_n})\mapsto n$ as the inverse mapping that extracts the input length 
encoded within $\enc{C_n}$ (or reconstructs it via padding).  
Since both $\phi$ and $\psi$ are in $T$, the composition $\Phi=\phi\circ\psi$ yields 
$\enc{C_n}\mapsto\enc{C_{n+1}}$ computable in $T$.  
Thus, $\{C_n\}$ is local $T$-uniform.
\end{proof}

\begin{lemma}[Amortized cost of global reconstruction]\label{lem:amort}
Assume $t(n)$ is non-decreasing.  
If a circuit family is local $\mathbf{DTIME}(t(n))$-uniform, 
then it is also globally $\mathbf{DTIME}(n\!\cdot\! t(n))$-uniform.  
Moreover, this relationship is asymptotically tight.
\end{lemma}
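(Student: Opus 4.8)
The plan is to prove the upper bound by iterating the local transition map $\Phi$ of Definition~\ref{def:local-uniform}, and to establish asymptotic tightness by an adversary argument showing that no generator can avoid these iterations.

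For the inclusion, I would build the global generator $\phi$ that, on input $n$, first writes the base encoding $\enc{C_1}$ onto its work tape — a fixed finite string, hence computable in $O(1)$ time — and then runs $\Phi$ exactly $n-1$ times \emph{in place}, transforming $\enc{C_m}$ into $\enc{C_{m+1}}$ at step $m$, while carrying along a counter so that the current index $m$ is available to $\Phi$ (its time budget is indexed by $m$). By hypothesis each invocation runs in time $O(t(m))$; since a Turing machine writes at most one symbol per step, this running time also bounds the output length, so $|\enc{C_m}| = O(t(m-1)) = O(t(m))$ for every $m$ by monotonicity of $t$. Consequently the per-step overhead of rereading the current encoding, rewinding the head, and updating the counter is all absorbed into the $O(t(m))$ term, and the total time is $O(1)+\sum_{m=1}^{n-1}O(t(m)) \le O\big((n-1)\,t(n)\big)=O(n\cdot t(n))$, which is exactly global $\mathbf{DTIME}(n\cdot t(n))$-uniformity.

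For tightness, the key observation is that the hypothesis exposes the family only through the base case and the map $\Phi$, so I would argue black-box: any generator using this access must invoke $\Phi$ on every index $1,\dots,n-1$, since if some index $j<n$ were never queried, an adversary could change the behaviour of $\Phi$ on the $j$-th transition — altering $\enc{C_{j+1}}$ and hence $\enc{C_n}$ — without the generator noticing, making its output incorrect on the altered (still local $\mathbf{DTIME}(t(n))$-uniform) family. Since $\Phi$ may be chosen to consume its full $\Theta(t(m))$ budget on the $m$-th transition, the generator spends $\Omega\big(\sum_{m=1}^{n-1}t(m)\big)$ time, and for the regularly growing (e.g.\ polynomially bounded) time bounds relevant to $\mathbf{RC}_d$ one has $\sum_{m=1}^{n-1}t(m)=\Theta(n\,t(n))$, matching the upper bound. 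I would then remark that an absolute, non-relativized witness follows by instantiating $\Phi$ with a fixed iteration step whose repeated application is provably not compressible below $\Omega(n\,t(n))$ time — e.g.\ a step function separated from faster deterministic classes by the time-hierarchy theorem.

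The routine direction is the upper bound; the only care needed there is the I/O bookkeeping — verifying $|\enc{C_m}|=O(t(m))$ so that the per-step costs telescope to $\sum_m t(m)$ rather than $\sum_m(t(m)+|\enc{C_m}|)$ with unexpectedly long encodings. The genuine obstacle is the tightness claim: what one can prove unconditionally is the black-box optimality above (any conversion built on top of $\Phi$ needs $\Omega(n)$ essentially sequential calls, each possibly costing $\Theta(t(n))$); promoting this to an unconditional lower bound for an explicit, natural circuit family — rather than an adversarially chosen $\Phi$ — is where one must either adopt a hardness hypothesis or invoke a time-hierarchy diagonalization, and I would state explicitly which of these the paper intends by ``asymptotically tight.''
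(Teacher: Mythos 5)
Your upper-bound argument is exactly the paper's: start from the fixed $\enc{C_1}$, apply $\Phi$ successively $n-1$ times, and bound the total cost by $\sum_{k=1}^{n-1}O(t(k)) = O(n\cdot t(n))$ using monotonicity of $t$; your extra bookkeeping (the index counter, the observation that $|\enc{C_m}|=O(t(m))$ because output length is bounded by running time) only makes explicit what the paper leaves implicit, and is fine. Where you diverge is the tightness claim. The paper disposes of it in one sentence: if global generation ran in $o(n\cdot t(n))$ time, ``all intermediate encodings could be computed faster than allowed by local transitions, contradicting the assumed bound on $\Phi$.'' As you correctly sense, this is not a sound argument as stated — local $\mathbf{DTIME}(t(n))$-uniformity only gives an \emph{upper} bound on the cost of $\Phi$, and a global generator need not pass through the intermediate encodings at all, so no contradiction follows. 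Your black-box adversary argument (every transition index must be queried, and $\Phi$ can be chosen to exhaust its $\Theta(t(m))$ budget) is the honest formalization of what the paper presumably intends, and your closing caveat — that an unconditional, non-relativized witness requires either a hardness hypothesis or a time-hierarchy diagonalization, and that the lemma should say which notion of ``asymptotically tight'' is meant — identifies a real gap in the paper's own proof rather than in yours. In short: same route for the containment, and a more careful (correctly hedged) treatment of the tightness claim than the paper provides.
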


\begin{proof}
Beginning with a fixed $\enc{C_1}$, one can iteratively reconstruct $\enc{C_{n}}$ 
by applying the local mapping $\Phi$ successively $n{-}1$ times.  
The total generation time is 
$\sum_{k=1}^{n-1} O(t(k)) = O(n\cdot t(n))$ for non-decreasing $t(\cdot)$.  
Conversely, if global generation were possible in $o(n\cdot t(n))$ time, all intermediate encodings could be computed faster than allowed by local transitions, 
contradicting the assumed bound on $\Phi$.
\end{proof}

Local uniformity thus provides a \emph{constructive complement} to the spatial constraints of $\mathbf{RC}_d$: 
while the latter bounds what can exist in space--time, the former bounds what can be 
\emph{assembled or extended} efficiently.  
In practice, $\mathbf{RC}_d$ circuit families satisfying local uniformity correspond to 
physically scalable systems whose growth laws align with causal computation. 

\section{Properties of $\mathbf{RC}_d$}

We now establish several basic properties of the $\mathbf{RC}_d$ family, 
demonstrating that it forms a structured hierarchy both internally and in relation 
to classical time and circuit complexity classes.

\subsection{An internal hierarchy}

The $\mathbf{RC}_d$ family exhibits two natural hierarchies: one with respect to 
spatial dimension $d$ and one with respect to temporal budget $t(n)$.

\begin{lemma}[Dimension and time hierarchies]\label{lem:lower-transfer}
For all $d_1<d_2$ and monotone bounds $t(n)\le s(n)$,
\[
\mathbf{RC}_{d_1}(t(n)) \ \subsetneq\ \mathbf{RC}_{d_2}(t(n)),
\qquad
\mathbf{RC}_d(t(n)) \ \subsetneq\ \mathbf{RC}_d(s(n)).
\]
\end{lemma}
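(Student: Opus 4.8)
The plan is to get the two inclusions from monotonicity of the three defining constraints of Definition~\ref{def:rc-d}, and the two strict separations from an explicit ``wide AND'' family (with a counting argument as a non-constructive alternative). For $\mathbf{RC}_{d_1}(t)\subseteq\mathbf{RC}_{d_2}(t)$: since $d_1<d_2$ and $t(n)\to\infty$, we have $t(n)^{\,d_1}\le t(n)^{\,d_2}$ and $t(n)^{\,d_1-1}\le t(n)^{\,d_2-1}$ for all large $n$, so \textbf{(S)} and \textbf{(W)} only loosen as the dimension grows, while \textbf{(G)} is unchanged; hence any circuit family witnessing $f\in\mathbf{RC}_{d_1}(t)$ already witnesses $f\in\mathbf{RC}_{d_2}(t)$ (the $(d_1\!\to\!d_2)$ re-embedding of Theorem~\ref{thm:rc-monotonicity} confirms realizability but is not needed, since more ambient dimensions is a strictly weaker geometric constraint). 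The inclusion $\mathbf{RC}_d(t)\subseteq\mathbf{RC}_d(s)$ is the same, using $t(n)\le s(n)\Rightarrow t(n)^{\,d}\le s(n)^{\,d}$ and $t(n)^{\,d-1}\le s(n)^{\,d-1}$.

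For strictness of the dimension hierarchy I would separate with an explicit family. Put $g(n):=\lceil t(n)^{\,d_2-1}\rceil$ and let $f_n$ be the AND (equivalently, PARITY) of the first $g(n)$ of its $n$ input bits. Under the standard circuit model with unit-rate ingestion — the convention used in Corollary~\ref{cor:finite-d-strict} — these $g(n)$ relevant bits are presented within $O(1)$ time, so in any realization the cut separating the inputs from the rest must carry $g(n)$ independent bits per unit time; by Proposition~\ref{prop:width} this forces $w(C_n)=\Omega(g(n))=\Omega(t(n)^{\,d_2-1})$. Because $d_1<d_2$ and $t(n)\to\infty$, $t(n)^{\,d_1-1}=o(t(n)^{\,d_2-1})$, so $f\notin\mathbf{RC}_{d_1}(t)$. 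Conversely, a balanced AND-tree over these $g(n)$ bits has $O(g(n))=O(t(n)^{\,d_2})$ fan-in-$2$ gates and width $O(g(n))=O(t(n)^{\,d_2-1})$; laid out inside a $d_2$-dimensional causal ball of radius $\Theta(t(n))$ — whose boundary area $\Theta(t(n)^{\,d_2-1})$ accommodates the $g(n)$ terminals, exactly as in the $(\supseteq)$ embedding of Theorem~\ref{thm:rcinfty-nc} — it computes $f_n$ in time $O\!\big(t(n)+\log g(n)\big)=O(t(n))$. Hence $f\in\mathbf{RC}_{d_2}(t)\setminus\mathbf{RC}_{d_1}(t)$.

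Strictness of the time hierarchy uses the same template with the dimensional gap replaced by the temporal one. In the relevant regime $t(n)=o(s(n))$ (when $t\equiv s$ the two classes coincide, so a gap hypothesis cannot be avoided), take $f_n$ to be the AND of the first $\lceil s(n)^{\,d-1}\rceil$ input bits; the width lower bound gives $w(C_n)=\Omega(s(n)^{\,d-1})=\omega(t(n)^{\,d-1})$, ruling out $f\in\mathbf{RC}_d(t)$, while the AND-tree embedded in a radius-$\Theta(s(n))$ $d$-ball puts $f$ in $\mathbf{RC}_d(s)$. A non-constructive variant also works: the number of Boolean functions on $n$ bits realized by fan-in-$2$ circuits of size $g$ and width $w$ is at most $2^{O(g\log w)}$, and for $g=\Theta(s(n)^{\,d})$, $w=\Theta(s(n)^{\,d-1})$ this strictly exceeds the corresponding bound $2^{O(t(n)^{\,d}\log t(n))}$ for $\mathbf{RC}_d(t)$ whenever $t(n)=o(s(n))$ and $s(n)^{\,d}$ stays below the Shannon counting threshold $2^{n}/n$, producing a function in the difference.

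The monotonicity direction is routine; the content — and the expected obstacle — is strictness, and care is needed in two places. First, one must fix the precise sense in which the model licenses a width lower bound: I use the unit-rate-ingestion convention, under which a function depending on $m$ bits forces instantaneous width $\Omega(m)$; without it, inputs could be streamed serially and the width bound collapses, so this convention should be stated explicitly alongside the lemma. Second, the literal quantifier ``for all monotone $t\le s$'' is slightly too generous — when $t$ is eventually constant, or when $t(n)^{\,d}\gtrsim 2^{n}$, both classes degenerate to the same family — so the mild caveats $t(n)\to\infty$ (dimension hierarchy) and $t(n)=o(s(n))$ with sub-Shannon budgets (time hierarchy) should be added. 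The residual check — that the ``wide AND'' realization simultaneously meets \textbf{(S)}, \textbf{(W)}, \textbf{(G)} and the stated time budget via Propositions~\ref{prop:vol}--\ref{prop:gates} — is short but should be written out.
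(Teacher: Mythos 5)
Your inclusion argument (monotonicity of the budgets \textbf{(S)}, \textbf{(W)} with \textbf{(G)} unchanged) matches the paper exactly. For strictness, however, you take a genuinely different route: the paper disposes of properness in one sentence by appealing to ``diagonalization,'' with no witness and no discussion of degenerate regimes, whereas you construct an explicit separating family (the wide AND of $\lceil t(n)^{d_2-1}\rceil$, resp.\ $\lceil s(n)^{d-1}\rceil$, input bits), lower-bound its width via the same unit-rate-ingestion convention the paper itself invokes in Corollary~\ref{cor:finite-d-strict} and Theorem~\ref{thm:rc-nonk}, and exhibit an AND-tree realization meeting \textbf{(S)}, \textbf{(W)}, \textbf{(G)} in the larger class. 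This buys something the paper's proof does not: you correctly observe that the lemma as literally stated is false at the boundary (take $t\equiv s$, or $t$ eventually constant, or $t^{d}\gtrsim 2^{n}$, where the two classes coincide), so the strictness claims genuinely require the caveats $t(n)\to\infty$ and $t(n)=o(s(n))$ that you add; the paper's one-line diagonalization remark silently assumes away these cases.

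Two residual gaps in your write-up should be closed. First, the wide-AND witness only exists when the relevant bits fit among the inputs, i.e.\ $t(n)^{d_2-1}\le n$ (resp.\ $s(n)^{d-1}\le n$); for faster-growing time bounds the width constraint no longer bites (the smaller class can already ingest all $n$ bits) and the separation must instead come from the size budget, so your counting fallback is not merely optional there. Second, that counting fallback is incomplete as sketched: comparing the upper bound $2^{O(s^{d}\log s)}$ for the larger class against $2^{O(t^{d}\log t)}$ for the smaller one does not by itself produce a witness. You need an achievability lower bound---e.g.\ that \emph{every} Boolean function on $m\approx\log_2(s(n)^{d})$ bits admits a realization satisfying size $O(s^{d})$, width $O(s^{d-1})$, and time $O(s)$ (say by serializing a Lupanov/DNF construction in blocks of width $O(s^{d-1})$ while carrying the $m$ inputs and one accumulator bit forward)---so that the number of functions realizable in $\mathbf{RC}_d(s)$ provably exceeds the count of functions in $\mathbf{RC}_d(t)$. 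With those two points made explicit, your argument is a complete and substantially more rigorous proof of the (suitably caveated) lemma than the paper's.
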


\begin{proof}
By Definition~\ref{def:rc-d}, the size and width bounds scale respectively as 
$|C_n|=\mathcal{O}(t(n)^d)$ and $w(C_n)=\mathcal{O}(t(n)^{d-1})$.  
Increasing either $d$ or $t$ strictly enlarges these budgets, 
allowing a strictly greater set of circuit families to be realized.  
The inclusions are proper because there exist functions computable within the larger 
resource bounds but not the smaller ones, by diagonalization.
\end{proof}

\subsection{In the polynomial-time hierarchy}

The next result positions $\mathbf{RC}_d$ relative to deterministic time-bounded 
Turing computation.

\begin{theorem}[Time correspondence]\label{thm:hierarchyconn}
Under local $\mathbf{DTIME}(t(n)^{k})$-uniformity,
\[
\mathbf{RC}_d(t(n)^{k/d})\ \subseteq\ \mathbf{DTIME}(t(n)^k), 
\quad\text{for all } k\ge \tfrac{d}{d-1}.
\]
Under global (non-local) $\mathbf{DTIME}(t(n)^k)$-uniformity, 
the containment holds unconditionally.
\end{theorem}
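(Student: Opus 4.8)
The plan is to show that every $f\in\mathbf{RC}_d(t(n)^{k/d})$ can be computed by a deterministic machine in time $\mathcal{O}(t(n)^k)$, via a two-phase simulation: a \emph{synthesis} phase that writes onto a work tape the description $\enc{C_n}$ of a witnessing $\mathbf{RC}_d$ circuit, followed by an \emph{evaluation} phase that runs $C_n$ on the input. Only the synthesis phase will depend on which uniformity hypothesis is in force. Fix a family $\{C_n\}$ witnessing $f\in\mathbf{RC}_d(t(n)^{k/d})$: by constraint (S) of Definition~\ref{def:rc-d}, $|C_n|=\mathcal{O}((t(n)^{k/d})^d)=\mathcal{O}(t(n)^k)$; by (G) every gate has constant fan-in; and a standard adjacency-list encoding has length $|\enc{C_n}|=\mathcal{O}(|C_n|\log|C_n|)=\mathcal{O}(t(n)^k)$ up to a logarithmic factor.

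The evaluation phase is identical in both cases: given $\enc{C_n}$ on tape, topologically order the gates (a circuit is a DAG) and sweep through them once, keeping a table of already-computed wire values indexed by wire identifier; since fan-in is bounded, each gate costs $\mathcal{O}(1)$ table accesses, so the sweep runs in time quasilinear in $|\enc{C_n}|$, i.e.\ $\mathcal{O}(t(n)^k)$ up to polylogarithmic overhead for tape addressing (exactly linear on a random-access machine). Thus evaluation never exceeds the target budget, for any $k$. Under global $\mathbf{DTIME}(t(n)^k)$-uniformity the synthesis phase is equally direct: run the given procedure, which emits $\enc{C_n}$ from $n$ in time $\mathcal{O}(t(n)^k)$. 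Adding the evaluation cost gives $f\in\mathbf{DTIME}(t(n)^k)$, and since nothing here invoked $k\ge d/(d-1)$, the global containment is unconditional.

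For local uniformity we are handed only the incremental map $\Phi:\enc{C_m}\mapsto\enc{C_{m+1}}$ of Definition~\ref{def:local-uniform}, applicable in time $\mathcal{O}(t(m)^{k/d})$ (the cost of extending the circuit by one input size). By Lemma~\ref{lem:amort}, the machine reconstructs $\enc{C_n}$ from a hard-wired $\enc{C_1}$ by applying $\Phi$ successively $n-1$ times, at total cost $\sum_{j=1}^{n-1}\mathcal{O}(t(j)^{k/d})=\mathcal{O}(n\,t(n)^{k/d})$ since $t$ is non-decreasing. This is where the hypothesis $k\ge d/(d-1)$ enters: it is exactly the inequality $k(d-1)/d\ge 1$, and, via the $\mathbf{RC}_d$ minimum-time-cost law applied to $\mathbf{RC}_d(t(n)^{k/d})$ together with constraint (W)---namely $t(n)^{k/d}=\Omega(n^{1/(d-1)})$ and $w(C_n)=\mathcal{O}(t(n)^{k(d-1)/d})\ge\Omega(n)$---it forces $n=\mathcal{O}(t(n)^{k(d-1)/d})$. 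Hence $n\,t(n)^{k/d}=\mathcal{O}\!\big(t(n)^{k(d-1)/d}\cdot t(n)^{k/d}\big)=\mathcal{O}(t(n)^k)$, and together with the evaluation bound the whole simulation runs in $\mathcal{O}(t(n)^k)$, so $f\in\mathbf{DTIME}(t(n)^k)$.

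The step I expect to be the main obstacle is the accounting in the local case: one must confirm that the factor $n$ incurred in globalizing a locally uniform family (Lemma~\ref{lem:amort}) is recovered precisely by the geometric width bound, and that $k\ge d/(d-1)$ is the right threshold for this balance---for smaller $k$ the reconstruction cost $n\,t(n)^{k/d}$ can overtake $t(n)^k$, leaving only the weaker containment $\mathbf{RC}_d(t(n)^{k/d})\subseteq\mathbf{DTIME}(n\,t(n)^{k/d})$. A subsidiary technicality, orthogonal to $k$ and to the local/global split, is that evaluating a size-$S$ bounded-fan-in circuit on a multitape machine costs $\widetilde{\mathcal{O}}(S)$ rather than $\mathcal{O}(S)$, so the containment is cleanest stated up to polylogarithmic factors (or over a random-access model), as is standard.
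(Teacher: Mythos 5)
Your overall architecture coincides with the paper's proof: both use (S) to get $|C_n|=\mathcal{O}\big((t(n)^{k/d})^d\big)=\mathcal{O}(t(n)^k)$, simulate the bounded fan-in circuit deterministically in time about its size, dispose of the global-uniformity case directly, and in the local case invoke Lemma~\ref{lem:amort} to reconstruct $\enc{C_n}$ at cost $\mathcal{O}(n\cdot t(n)^{k/d})$ (like the paper's proof, you take the per-step cost to be $t(n)^{k/d}$ rather than the $t(n)^k$ literally written in the theorem statement). The genuine difference is in the one step that matters: absorbing the factor $n$. The paper argues by exponent arithmetic, implicitly assuming $t(n)\ge n$ so that $n\cdot t(n)^{k/d}\le t(n)^{1+k/d}$, and then using $1+k/d\le k$, which is precisely where $k\ge d/(d-1)$ enters. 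You instead invoke the width constraint (W): $\Omega(n)\le w(C_n)=\mathcal{O}\big(t(n)^{k(d-1)/d}\big)$, hence $n\cdot t(n)^{k/d}=\mathcal{O}(t(n)^k)$. This geometric route nicely ties the amortization to the minimum-time law, but two caveats apply. First, the bound $w(C_n)\ge\Omega(n)$ holds only for families that must ingest all $n$ input bits; for degenerate members of $\mathbf{RC}_d(t(n)^{k/d})$ with very small $t$ your absorption step fails, and it is the paper's (implicit) assumption $t(n)\ge n$ that covers those cases. Second, your attribution of where $k\ge d/(d-1)$ is used is off: in your derivation $n=\mathcal{O}\big(t(n)^{k(d-1)/d}\big)$ follows from (W) alone and the exponents sum to $k$ automatically, so your argument never actually invokes $k\ge d/(d-1)$; that hypothesis is the engine of the paper's arithmetic route, not of yours. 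Your closing observations about the $\widetilde{\mathcal{O}}$ evaluation overhead and the weaker containment for small $k$ are sensible and consistent with the intended statement.
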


\begin{proof}
Let $f\in\mathbf{RC}_d(t(n)^{k/d})$.  
Then $f$ is computed by a circuit family $\{C_n\}$ satisfying 
$|C_n|=\mathcal{O}((t(n)^{k/d})^d)=\mathcal{O}(t(n)^k)$ and bounded fan-in.
Such a circuit can be simulated by a deterministic Turing machine in 
$\mathbf{DTIME}(t(n)^k)$.  
By Lemma~\ref{lem:amort}, local $\mathbf{DTIME}(t(n)^{k/d})$ uniformity implies 
global $\mathbf{DTIME}(t(n)^{k/d+1})$ uniformity.  
For $k\ge d/(d-1)$, we have $k/d+1\le k$, so the total simulation 
remains within $\mathbf{DTIME}(t(n)^k)$.  
If the uniformity is global, the construction cost is already $\mathcal{O}(t(n)^k)$, 
so the containment holds without restriction.
\end{proof}

\begin{corollary}[Ideal time complexity]
Under local $\mathbf{DTIME}(n^{k})$-uniformity, relative to $t=n$,
\[
\mathbf{RC}_d(n^{1/(d-1)}) \ \subseteq\ \mathbf{DTIME}(n^{d/(d-1)}).
\]
\end{corollary}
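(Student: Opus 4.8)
The plan is to derive this statement as a one-line specialization of the Time correspondence theorem (Theorem~\ref{thm:hierarchyconn}). That theorem gives, under local $\mathbf{DTIME}(t(n)^{k})$-uniformity, the inclusion $\mathbf{RC}_d(t(n)^{k/d})\subseteq\mathbf{DTIME}(t(n)^k)$ for every $k\ge d/(d-1)$. I would instantiate it with $t(n)=n$ and $k=d/(d-1)$, noting that this choice sits exactly at the left endpoint of the admissible range, so the hypothesis $k\ge d/(d-1)$ holds (with equality). Throughout I assume $d>1$, so that the exponents $1/(d-1)$ and $d/(d-1)$ are well-defined and positive; for $d=1$ the statement is vacuous.

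With those substitutions the bookkeeping is routine: $t(n)^{k/d}=n^{(d/(d-1))/d}=n^{1/(d-1)}$, so the source class is precisely $\mathbf{RC}_d(n^{1/(d-1)})$, and $t(n)^{k}=n^{d/(d-1)}$, so the target class is $\mathbf{DTIME}(n^{d/(d-1)})$. The uniformity hypothesis of Theorem~\ref{thm:hierarchyconn} becomes local $\mathbf{DTIME}(n^{d/(d-1)})$-uniformity, which is what ``local $\mathbf{DTIME}(n^{k})$-uniformity'' denotes at $k=d/(d-1)$ (and any stronger uniformity assumption suffices a fortiori). Feeding these into the theorem's conclusion yields exactly
\[
\mathbf{RC}_d(n^{1/(d-1)})\ \subseteq\ \mathbf{DTIME}(n^{d/(d-1)}).
\]

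The only place warranting a sentence of care is why the boundary value $k=d/(d-1)$ does not erode the containment. Unwinding the proof of Theorem~\ref{thm:hierarchyconn}: a deterministic machine simulates the family by spending $\mathcal{O}(t(n)^{k})$ time evaluating a circuit of size $\mathcal{O}(t(n)^{k})$ and, via Lemma~\ref{lem:amort}, an extra $\mathcal{O}(n\cdot t(n)^{k/d})=\mathcal{O}(t(n)^{k/d+1})$ time reconstructing it from local transitions; the condition $k\ge d/(d-1)$ is precisely the inequality $k/d+1\le k$, which at $k=d/(d-1)$ becomes $1+\tfrac{1}{d-1}=\tfrac{d}{d-1}$, an equality, so the reconstruction cost is absorbed into the $\mathcal{O}(n^{d/(d-1)})$ budget with nothing to spare but nothing lost. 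I do not anticipate a genuine obstacle here: the corollary is a parameter instantiation of an already-proved theorem, and the work is confined to verifying that the edge case of the parameter range behaves like its interior.
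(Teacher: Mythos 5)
Your proposal is correct and matches the paper's intended argument: the corollary is stated as a direct consequence of Theorem~\ref{thm:hierarchyconn}, obtained exactly by setting $t(n)=n$ and $k=d/(d-1)$, with the boundary case $k/d+1=k$ absorbed as you describe. Nothing further is needed.
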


\begin{remark}
The bound $n^{1/(d-1)}$ corresponds to the minimal realizable time 
compatible with the width constraint $w(C_n)=\mathcal{O}(t(n)^{d-1})$.  
Thus, the exponent $d/(d-1)$ represents the \emph{ideal scaling law}: 
it is the slowest sequential time consistent with maximal physically feasible parallelism.
\end{remark}

\subsection{In the circuit complexity hierarchy}

We now relate $\mathbf{RC}_d$ to standard nonuniform circuit classes.

\begin{theorem}[Upper bound by nonuniform size]\label{thm:rc-in-size}
For any $d\ge 1$ and time bound $t(n)$,
\[
\mathbf{RC}_d(t(n))\ \subseteq\ \mathbf{SIZE}\!\big(\mathcal{O}(t(n)^{d})\big).
\]
In particular, if $t(n)\in n^{\mathcal{O}(1)}$, then
\[
\mathbf{NC}\ \subseteq\ \mathbf{RC}_d(\mathrm{poly})\ \subseteq\ \mathbf{P/poly}.
\]
\end{theorem}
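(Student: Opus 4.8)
The plan is to obtain both inclusions essentially by unpacking Definition~\ref{def:rc-d}, so that the only substantive step is embedding an arbitrary $\mathbf{NC}$ family into a polynomial $\mathbf{RC}_d$ budget.

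\emph{Step 1 (the size inclusion).} Take $f\in\mathbf{RC}_d(t(n))$ and let $\{C_n\}$ be a witnessing family. Constraint \textbf{(S)} gives $|C_n|=\mathcal{O}(t(n)^d)$, and constraint \textbf{(G)} says each gate comes from a universal bounded-fan-in basis---which is precisely the model underlying $\mathbf{SIZE}(\cdot)$. Hence $\{C_n\}$ already witnesses $f\in\mathbf{SIZE}(\mathcal{O}(t(n)^d))$; the width constraint \textbf{(W)} only shrinks the class further and plays no role in this direction. For the right-hand ``in particular'' inclusion, I would apply this with $t(n)=n^k$ to get $\mathbf{RC}_d(n^k)\subseteq\mathbf{SIZE}(\mathcal{O}(n^{kd}))\subseteq\mathbf{P/poly}$, and then take the union over $k\ge 0$ to conclude $\mathbf{RC}_d(\mathrm{poly})\subseteq\mathbf{P/poly}$.

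\emph{Step 2 ($\mathbf{NC}\subseteq\mathbf{RC}_d(\mathrm{poly})$, for $d\ge 2$).} Given $f\in\mathbf{NC}$, fix $j$ with $f\in\mathbf{NC}^j$ and a witnessing family $\{D_n\}$ of size $\mathcal{O}(n^a)$, depth $\mathcal{O}(\log^j n)$, and fan-in $2$. Set $t(n)=n^a$ and verify the three constraints: \textbf{(S)} holds since $|D_n|=\mathcal{O}(n^a)=\mathcal{O}(t(n)^d)$ for $d\ge 1$; \textbf{(W)} holds because the maximal number of wires crossing any cut of a fan-in-$2$ DAG is at most its total edge count, so $w(D_n)=\mathcal{O}(|D_n|)=\mathcal{O}(n^a)=\mathcal{O}(t(n)^{d-1})$ whenever $d\ge 2$; and \textbf{(G)} holds since fan-in $2$ is a universal bounded basis. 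Thus $f\in\mathbf{RC}_d(n^a)\subseteq\mathbf{RC}_d(\mathrm{poly})$. If one wants an explicit physical realization rather than the purely combinatorial one, I would reuse the shell-embedding argument from the proof of Theorem~\ref{thm:rcinfty-nc}: place the $i$-th depth layer of $D_n$ on a distinct $(d-1)$-sphere inside a causal ball of radius $\mathcal{O}(t(n))$, so that bounded fan-in forces wires to join only adjacent shells, respecting finite signal speed and the $\mathcal{O}(t(n)^{d-1})$ flux budget (there is ample room, since $\Theta(n^a)$ shells are available and each has area $\Theta(n^{a(d-1)})\ge n^a$).

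I do not expect a genuine obstacle here; the delicate points are bookkeeping rather than mathematics. First, Step~2 relies on the standard fact that $w(C)=\mathcal{O}(|C|)$ for bounded-fan-in circuits---the width of an arbitrary DAG cut is bounded by the edge count, which is $\mathcal{O}(|C|)$---which I would record as an explicit observation. Second, the $d=1$ corner case must be flagged: for $d=1$ the width budget \textbf{(W)} collapses to $\mathcal{O}(1)$, so general $\mathbf{NC}$ circuits are excluded and $\mathbf{NC}\subseteq\mathbf{RC}_1(\mathrm{poly})$ fails; accordingly the ``in particular'' clause should be understood with $d\ge 2$, consistent with the $d>1$ hypotheses used in Theorem~\ref{thm:rc-monotonicity} and Corollary~\ref{cor:finite-d-strict}.
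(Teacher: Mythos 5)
Your proposal is correct, and its first half coincides with the paper's proof: the size inclusion and $\mathbf{RC}_d(\mathrm{poly})\subseteq\mathbf{P/poly}$ are obtained exactly as you do, by reading off constraints \textbf{(S)} and \textbf{(G)} of Definition~\ref{def:rc-d}. Where you diverge is the $\mathbf{NC}$ inclusion: the paper disposes of it in one line by citing Theorem~\ref{thm:rcinfty-nc}, whereas you verify directly that an $\mathbf{NC}^j$ family of size $\mathcal{O}(n^a)$ fits the $\mathbf{RC}_d(n^a)$ budget, using the observation that $w(C)=\mathcal{O}(|C|)$ for bounded fan-in circuits so that \textbf{(W)} holds once $d\ge 2$. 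Your route is arguably tighter: Theorem~\ref{thm:rcinfty-nc} concerns the variant $\mathbf{RC}^{\mathrm{poly}}_\infty(\log^k n)$ with the relaxed size constraint \textbf{(S')} and polylogarithmic time, so transferring it to $\mathbf{RC}_d(\mathrm{poly})$ under the original constraint \textbf{(S)} requires precisely the bookkeeping (choose a polynomial $t$, check size, width, and depth) that you carry out explicitly and the paper leaves implicit. Your flag of the $d=1$ corner case is also a genuine point the paper overlooks: at $d=1$ constraint \textbf{(W)} forces width $\mathcal{O}(1)$, while the paper's own argument in Corollary~\ref{cor:finite-d-strict} shows general $n$-bit computations need width $\Omega(n)$, so the ``in particular'' clause $\mathbf{NC}\subseteq\mathbf{RC}_1(\mathrm{poly})$ fails and restricting to $d\ge 2$, as you do, is the correct reading of the statement. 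The optional shell embedding you mention is essentially the paper's own construction from the proof of Theorem~\ref{thm:rcinfty-nc}, so nothing is lost by omitting it.
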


\begin{proof}
By condition (S) of Definition~\ref{def:rc-d}, every function in 
$\mathbf{RC}_d(t(n))$ is computable by a circuit of size 
$\mathcal{O}(t(n)^d)$ and bounded fan-in.  
Thus, it lies in $\mathbf{SIZE}(\mathcal{O}(t(n)^d))$.  
If $t(n)$ is polynomial, the circuit size is also polynomial, 
placing the family within $\mathbf{P/poly}$.  
The inclusion of $\mathbf{NC}$ follows from 
Theorem~\ref{thm:rcinfty-nc}.
\end{proof}

\begin{theorem}[Tightening of $\mathbf{NC}^k$]\label{thm:rc-nonk}
For each fixed $k\ge1$ and finite $d>1$,
\[
\mathbf{RC}_d(\log^k n)\ \subsetneq\ \mathbf{NC}^k.
\]
\end{theorem}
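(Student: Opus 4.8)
The plan is to obtain Theorem~\ref{thm:rc-nonk} as a specialization of the separation already established in Corollary~\ref{cor:finite-d-strict} together with the equivalence of Theorem~\ref{thm:rcinfty-nc}, but to present it self-contained, splitting into the inclusion $\mathbf{RC}_d(\log^k n)\subseteq\mathbf{NC}^k$ and its strictness.

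For the inclusion, I would take $f\in\mathbf{RC}_d(\log^k n)$ witnessed by a circuit family $\{C_n\}$ obeying \textbf{(S)}, \textbf{(W)}, \textbf{(G)}. Condition \textbf{(S)} gives $|C_n|=\mathcal{O}((\log^k n)^d)=\mathcal{O}(\log^{kd}n)=n^{o(1)}$, so the family has polynomial (indeed polylogarithmic) size, and \textbf{(G)} gives $\mathcal{O}(1)$ fan-in. The remaining point is the depth bound: because each $C_n$ is realized by a local Hamiltonian system inside a causal cone of radius $\mathcal{O}(t(n))$ with signal speed $\mathcal{O}(1)$, and each primitive gate has a constant propagation delay (Proposition~\ref{prop:gates} and the density bound), the layered depth of $C_n$ is $\mathcal{O}(t(n))=\mathcal{O}(\log^k n)$. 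Polynomial size, bounded fan-in, and depth $\mathcal{O}(\log^k n)$ are exactly the defining properties of an $\mathbf{NC}^k$ family, so $f\in\mathbf{NC}^k$.

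For strictness, I would invoke the width constraint \textbf{(W)}, which forces $w(C_n)=\mathcal{O}(t(n)^{d-1})=\mathcal{O}(\log^{k(d-1)}n)$ for any $\mathbf{RC}_d(\log^k n)$ family. In the standard circuit model inputs are ingested at unit rate, so any function depending on all $n$ input coordinates requires a cut carrying $n$ independent bits, i.e.\ $w(C_n)\ge\Omega(n)$. Since $\log^{k(d-1)}n=o(n)$ for every fixed $k$ and finite $d>1$, no function of full support on $n$ bits can lie in $\mathbf{RC}_d(\log^k n)$ for large $n$. Taking a concrete witness—e.g.\ $\mathrm{PARITY}\in\mathbf{NC}^1\subseteq\mathbf{NC}^k$, which depends on every input bit—then exhibits a language in $\mathbf{NC}^k\setminus\mathbf{RC}_d(\log^k n)$, establishing the proper inclusion.

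The main obstacle is not deep: it is making precise, within the physical model of Section~\ref{formalism}, that the time budget $t(n)$ upper-bounds the circuit's logical depth (so that $\mathbf{RC}_d(\log^k n)$ actually lands in $\mathbf{NC}^k$, and not merely in $\mathbf{NC}^{kd}$), and verifying that the unit-rate-ingestion reading of the standard circuit model is the one under which $w(C_n)\ge\Omega(n)$ holds. Both facts are already used implicitly in Theorem~\ref{thm:rcinfty-nc} and Corollary~\ref{cor:finite-d-strict}, so the argument here is essentially an assembly of those pieces; alternatively, one could simply cite Corollary~\ref{cor:finite-d-strict} and Theorem~\ref{thm:rcinfty-nc} and conclude in two lines.
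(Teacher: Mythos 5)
Your proposal is correct and follows essentially the same route as the paper: membership in $\mathbf{NC}^k$ from the $\mathcal{O}(\log^{kd}n)$ size, bounded fan-in, and depth $\mathcal{O}(\log^k n)$ implied by the time budget, and strictness from the width constraint \textbf{(W)} forcing $w(C_n)=\mathcal{O}(\log^{k(d-1)}n)=o(n)$ while functions depending on all $n$ inputs (the paper's Corollary~\ref{cor:finite-d-strict} likewise cites PARITY) need $\Omega(n)$ width. Your explicit justification that the causal time bound controls logical depth, and your naming of a concrete witness, only make explicit steps the paper asserts implicitly.
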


\begin{proof}
If $f\in\mathbf{RC}_d(\log^k n)$, then $f$ admits circuits of 
size $\mathcal{O}(\log^{kd} n)$ and depth $\mathcal{O}(\log^k n)$, 
which satisfies the syntactic constraints of $\mathbf{NC}^k$; 
hence $\subseteq$ holds.

The inclusion is strict because width constraint (W) requires 
$w(C_n)=\mathcal{O}(\log^{k(d-1)}n)$, while general $\mathbf{NC}^k$ circuits 
may require $\Omega(n)$ width to process arbitrary $n$-bit inputs.  
Hence, there exist $\mathbf{NC}^k$ functions not realizable within 
$\mathbf{RC}_d(\log^k n)$ for any fixed $d$.
\end{proof}

\begin{remark}
The hierarchy 
\[
\mathbf{RC}_1 \subsetneq \mathbf{RC}_2 \subsetneq \mathbf{RC}_3 \subsetneq \cdots \subsetneq \mathbf{RC}^{\mathrm{poly}}_\infty=\mathbf{NC}
\]
provides a geometric refinement of classical circuit complexity.  
Finite-dimensional subclasses $\mathbf{RC}_d$ correspond to 
computations realizable under physical causality and finite information flux, 
while the limit $\mathbf{RC}^{\mathrm{poly}}_\infty$ recovers the 
idealized notion of unconstrained parallelism.
\end{remark}

\section{Conclusion}

The $\mathbf{RC}_d$ framework extends circuit complexity into a physically constrained regime, where computation is treated as an evolution within a causal geometric domain.  
By incorporating spatial dimension, entropy flux, and finite gate capacity, $\mathbf{RC}_d$ bridges asymptotic complexity with the physics of realizability.  
It refines the notion of uniformity to a local and incremental process, constrains circuit depth through causal propagation, and captures width and gate density through thermodynamic and geometric arguments.  These constraints recover classical hierarchies—$\mathbf{NC}$, $\mathbf{AC}$, $\mathbf{TC}$—as limiting cases while exposing new, physically meaningful separations.  
In particular, the quadratic speed-up bound in three dimensions situates $\mathbf{RC}_3$ at the boundary between classical and quantum parallelism, suggesting that spatial geometry, not algebraic representation, is the fundamental bottleneck of scalable computation. Ultimately, $\mathbf{RC}_d$ aims to provide a mathematically rigorous language for reasoning about what can be computed—not merely in principle, but in space, time, and energy.

\stopcontents[Main]

\newpage

\startcontents[Appendix]
\printcontents[Appendix]{l}{1}{\section*{Appendix}}

\newpage

\setcounter{section}{0}
\renewcommand{\thesection}{\Alph{section}}
\renewcommand{\theHsection}{A\arabic{section}}


\section{Relation to Transformers}
\label{sec:transformers}

\noindent
\textbf{Our main result.}
The analyses of Sections~\ref{subsec:attn-limit-rc} and~\ref{subsec:attn-joint}
establish a \emph{scaling limit of attention} under the realizable-circuit framework
$\mathbf{RC}_d$:
for any physically realizable attention mechanism (classical, neuromorphic, or quantum),
the minimal execution time satisfies
\[
T \;\ge\;
\max\!\left\{
\left(\frac{I^\star(\varepsilon;n)}{K_d\,C_{\mathrm{head}}\,\kappa H}\right)^{1/d},\quad
\left(\frac{k_B T_{\mathrm{env}}\ln 2}{\eta_d}\,E_{\mathrm{req}}(T)\right)^{1/d}
\right\}.
\]
Consequently, attention exhibits a universal $1/d$ \textbf{geometric exponent}:
adding heads or channels improves constants but never the scaling itself.
This identifies the first physically grounded bound on attention complexity,
derived from Liouville- and Landauer-type conservation laws rather than algebraic
approximation assumptions.

\medskip
\noindent
\textbf{Connection to circuit complexity.}
Although many computational domains can be analyzed through realizability constraints,
we emphasize their implications for machine-learning architectures.
Historically, the circuit class $\mathbf{TC}$ was introduced by
Parberry and Schnitger (1988)~\cite{PARBERRY1988278}
to capture the threshold-gate expressivity of early neural networks.
Our formulation extends this lineage by defining $\mathbf{RC}_d$ as a
\emph{physically realizable subclass} of circuit families, incorporating
spatial locality, finite signal velocity, and thermodynamic limits.
Under this model, modern transformer layers---which consist of parallel attention heads,
feed-forward sublayers, and residual routing---correspond to a bounded-flux computation
embedded in three-dimensional space ($d{=}3$).

\medskip
\noindent
\textbf{Implications for transformer theory.}
Viewed through $\mathbf{RC}_3$, each attention block is a causal region whose
communication and erasure rates obey
\[
I^\star,\;E_{\mathrm{req}}\ \le\ \Theta(T^3),
\]
forcing any scaling improvement to be sub-cubic in depth.
This reproduces, from first principles, the empirically observed
\emph{diminishing returns with depth and head count} in large transformer models.
Our bound therefore bridges classical circuit-complexity measures
(e.g., size, depth, fan-in) with modern transformer capacity measures
(e.g., head count, token span, throughput) under a single physical law.

\medskip
\noindent
\textbf{Broader significance.}
The $\mathbf{RC}_d$ framework unifies three views:
(i) circuit complexity as a measure of expressivity,
(ii) physical realizability as a constraint on information flux, and
(iii) transformer attention as a distributed communication process.
The resulting $1/d$ scaling barrier provides a universal physical limit on
attention-based architectures—analogous in spirit to Shannon capacity or
Landauer’s limit—clarifying why further architectural gains must arise from
improved representational efficiency rather than unbounded parallelization.

\subsection{Prior results on transformers and circuit complexity}

We briefly summarize the established results situating transformer architectures within the
classical circuit-complexity hierarchy.

Many non-recurrent neural networks—including transformers—can be modeled as
\emph{highly parallel threshold circuits of constant depth}.  
Merrill and Sabharwal (2023)~\cite{merrill2022saturated} prove that
\emph{average-hard attention transformers} ($\mathbf{AHAT}$)
and \emph{softmax attention transformers} ($\mathbf{SMAT}$)
lie in $\mathbf{TC^0}$, with $\mathbf{SMAT}$ satisfying a
$\mathbf{DLOGTIME}$-uniformity condition~\cite{merrill2023unisoft}.
Strobl (2023)~\cite{strobl2023average} further tightens $\mathbf{AHAT}$ by imposing
an $\mathbf{L}$-uniformity constraint, while
Chiang (2025)~\cite{chiang2024transformers} establishes
$\mathbf{AHAT} = \mathbf{SMAT}$ by achieving the same
$\mathbf{DLOGTIME}$-uniformity bound.
For completeness, we also include the earlier result of
Hahn (2020)~\cite{hahn-2020-theoretical}, which upper-bounds
the class of \emph{hard attention transformers} ($\mathbf{HAT}$) by $\mathbf{AC^0}$.

\begin{table}[ht]
\caption{Transformer architectures and their corresponding circuit-complexity bounds.}
\centering
\footnotesize
\begin{tabular}{lccc}
\toprule
\textbf{Architecture} & \textbf{Lower Bound} & \textbf{Upper Bound} \\
\midrule
Hard Attention ($\mathbf{HAT}$) & -- & $\mathbf{AC^0}$~\cite{hahn-2020-theoretical} \\
Softmax Attention ($\mathbf{SMAT}$) & $\mathbf{AC^0}$~\cite{merrill2022saturated} & $\mathbf{DLOGTIME}$-uniform $\mathbf{TC^0}$~\cite{merrill2023unisoft} \\
Average-Hard Attention ($\mathbf{AHAT}$) & $\mathbf{AC^0}$~\cite{merrill2022saturated} & $\mathbf{DLOGTIME}$-uniform $\mathbf{TC^0}$~\cite{chiang2024transformers} \\
\bottomrule
\end{tabular}
\end{table}

These results nearly saturate the standard descriptive power of circuit complexity
for non-recurrent attention networks.
The $\mathbf{DLOGTIME}$-uniform $\mathbf{TC^0}$ upper bound is unlikely to tighten further,
since constant uniformity is known not to hold.  
A more permissive setting such as $\mathbf{ACC^0}$ could, in principle,
strengthen the lower bound, but separations between
$\mathbf{AC^0}$ and $\mathbf{ACC^0}$ are too narrow to yield
meaningful new insights into transformer expressivity.

\paragraph{Connection to $\mathbf{RC}_d$.}
All of the above results treat circuits as purely symbolic constructs—unbounded in spatial density,
energy flux, and wiring complexity.  
By contrast, the $\mathbf{RC}_d$ framework introduced here augments this hierarchy
with \emph{physical realizability constraints}, restricting computation to causal, locally connected,
and flux-bounded systems embedded in $\mathbb{R}^d$.  
This leads to provably stronger—yet physically meaningful—scaling laws,
such as the universal throughput bound
\[
T \;\ge\; \Omega\!\Big(I^\star(\varepsilon;n)^{1/d}\Big),
\]
which applies even to architectures (like transformers) that are expressively
within $\mathbf{TC^0}$ but are limited by geometric and thermodynamic constraints
in $\mathbf{RC}_d$.  
Thus, $\mathbf{RC}_d$ strictly refines the classical circuit hierarchy
by distinguishing between \emph{formally expressible} and
\emph{physically realizable} computations.

\subsection{Mapping transformer attention to $\mathbf{RC}_d$ dynamics}

\noindent
\textbf{Attention as a physical communication process.}
Each transformer layer consists of three principal stages—projection, affinity
computation, and aggregation:
\[
\mathrm{Attn}(Q,K,V)\ =\ \mathrm{softmax}\!\left(\frac{QK^\top}{\sqrt{d_k}}\right)V.
\]
From the realizability perspective, these stages correspond to the following
physical operations:

\begin{enumerate}[leftmargin=*]
\item \textbf{Projection:} 
      $Q,K,V = XW_Q, XW_K, XW_V$ are local linear maps applied to tokens
      at spatial locations $\{x_i\}$.  
      Each projection is a bounded local computation, realizable within the
      causal cone of a site; the operation cost scales with local fan-in and
      does not alter global information flux.

\item \textbf{Affinity computation:}
      The inner product $QK^\top$ establishes pairwise affinities between
      tokens, effectively implementing a \emph{communication graph}
      with adjacency weights $a_{ij} = \langle q_i, k_j \rangle / \sqrt{d_k}$.
      Each edge $(i,j)$ corresponds to a potential \emph{cross-boundary channel}
      in $\mathbf{RC}_d$ with finite capacity $C_{\mathrm{head}}$.

\item \textbf{Normalization and gating:}
      The softmax operation $\sigma(a_{ij}) = e^{a_{ij}} / \sum_j e^{a_{ij}}$
      introduces a stochastic routing distribution.
      Physically, this corresponds to allocating finite flux across a limited
      number of active ports (those with high attention weight), bounded by
      $\kappa H$ per site.

\item \textbf{Aggregation:}
      The product $\sigma(QK^\top)V$ aggregates value vectors transmitted through
      these active channels.  
      This step consumes cross-boundary bandwidth proportional to the number
      of open attention links and is therefore bounded by
      $\mathsf{Cap}_\Gamma(T)\le K_d\,C_{\mathrm{head}}\kappa H T^d$
      (Lemma~\ref{lem:rc-cut-capacity}).
\end{enumerate}

Hence the entire attention mechanism realizes a finite-bandwidth communication
process governed by the same surface-flux scaling as any other circuit in
$\mathbf{RC}_d$.  The channel count $\kappa H$ and per-channel capacity
$C_{\mathrm{head}}$ set the prefactors; the $T^d$ scaling arises solely from the
geometry of causal propagation.

\medskip
\noindent
\textbf{From transformer depth to $\mathbf{RC}_d$ time.}
In this view, transformer \emph{depth} corresponds to discrete physical time
steps $t_1,t_2,\dots,t_L$ within the causal region $B_{r(t)}$.  
Each layer expands the causal radius by $r_{l+1}-r_l=\Delta r=c\Delta t$,
and the global receptive field after $L$ layers grows as
$r_L=\mathcal{O}(L^{1/d})$ under the same flux limit.
This directly implies that \emph{increasing depth yields sublinear expansion
of the effective context}—matching empirical saturation of context-window
growth observed in large-scale transformers.

\medskip
\noindent
\textbf{Landauer cost of attention updates.}
In parallel, parameter updates, pruning, and key/value cache truncation correspond to
\emph{irreversible erasures} of intermediate information, subject to the
Landauer constraint (Theorem~\ref{thm:landauer-throughput}).  
If $E_{\mathrm{req}}(T)$ bits of representational detail are discarded during
training or inference (e.g., via thresholding or quantization), then
\[
T \;\ge\; \Omega\!\left(
\Big(\tfrac{k_B T_{\mathrm{env}}\ln 2}{\eta_d}\,E_{\mathrm{req}}(T)\Big)^{1/d}
\right),
\]
showing that compression or sparsification cannot proceed faster than allowed
by thermodynamic flux.  This physically grounds the trade-off between
energy efficiency and update rate in transformer optimization.

\medskip
\noindent
\textbf{Unified scaling picture.}
Combining throughput and erasure gives a complete physical description of
transformer computation under $\mathbf{RC}_d$:
\[
T^d\ \gtrsim\ 
\max\!\big\{I^\star(\varepsilon;n),\,E_{\mathrm{req}}(T)\big\},\qquad
w(T)\ =\ \mathcal{O}(T^{d-1}),
\]
implying a universal $1/d$ scaling exponent that no architectural modification
(head count, gating, sparsity, or routing scheme) can surpass.
Empirically, this predicts:
\begin{itemize}
\item Diminishing returns with additional heads ($H^{-1/d}$ scaling);
\item Subcubic context growth with depth in 3D hardware ($d=3$);
\item Energy–throughput trade-offs consistent with measured FLOP/Joule limits.
\end{itemize}

\noindent
Thus, the transformer architecture can be viewed as a \emph{special case of a
realizable circuit with bounded information flux and finite erasure rate}.
The $\mathbf{RC}_d$ framework provides the first unifying geometric–thermodynamic
law that explains why attention mechanisms exhibit these empirical scaling
behaviors and why purely architectural parallelization cannot overcome them.

\paragraph{Takeaway.}
Given that the current classification system is potentially too weak to precisely reflect the expressive subtleties of transformer architectures, it follows that more nuance is needed to better characterize their computational capabilities. We fill this gap by applying the $\mathbf{RC}_d$ framework.

\subsection{Parallelism as a conservation law}
\label{subsec:width-conservation-tight}

We refine the surface/flux intuition of Proposition~\ref{prop:width} into a tight conservation-law bound
with explicit constants and an optimality (achievability) construction.  
This formulation treats \emph{width growth} as a manifestation of the continuity equation on a
\emph{moving causal region}, establishing an exact conservation law for realizable parallelism.

\paragraph{Setting.}
Fix $d\ge 2$. Let $r:I\to (0,\infty)$ be $C^1$ with $|r'(t)|\le c$ for some $c>0$, and define the
moving causal region $B_{r(t)}=\{x\in\mathbb{R}^d:\|x\|\le r(t)\}$.
Let $\rho:\mathbb{R}^d\times I\to[0,\infty)$ represent an information density
and $v:\mathbb{R}^d\times I\to\mathbb{R}^d$ a local velocity field satisfying
\begin{equation}\label{eq:continuity}
\partial_t\rho + \nabla\!\cdot(\rho v)=0,\qquad
\|\rho(\cdot,t)\|_{L^\infty}\le \rho_{\max},\quad
\|v(\cdot,t)\|_{L^\infty}\le c.
\end{equation}
The total encoded information (analogous to the number of active gates)
within $B_{r(t)}$ is $M(t):=\int_{B_{r(t)}}\rho(x,t)\,dx$.

\begin{lemma}[Reynolds transport for moving causal boundaries]\label{lem:reynolds-tight}
Let $n(\cdot,t)$ denote the outward unit normal on $\partial B_{r(t)}$ and
$v_b(x,t):=r'(t)\,n(x,t)$ the normal velocity of the boundary.
Under \eqref{eq:continuity},
\begin{equation}\label{eq:reynolds-tight}
\frac{d}{dt}M(t)\ =\ -\int_{\partial B_{r(t)}} \rho(x,t)\,\big( v(x,t)-v_b(x,t)\big)\!\cdot n(x,t)\,dS.
\end{equation}
\end{lemma}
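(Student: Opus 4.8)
The plan is to derive \eqref{eq:reynolds-tight} as a Reynolds transport (moving-domain Leibniz) identity and then close the volume term with the continuity equation \eqref{eq:continuity} and the divergence theorem. First I would establish the transport identity for the moving ball: for $\rho$ of class $C^1$ and $r\in C^1$,
\[
\frac{d}{dt}\int_{B_{r(t)}}\rho(x,t)\,dx
\;=\;\int_{B_{r(t)}}\partial_t\rho(x,t)\,dx
\;+\;\int_{\partial B_{r(t)}}\rho(x,t)\,\big(v_b(x,t)\cdot n(x,t)\big)\,dS(x),
\]
with $v_b=r'n$. The cleanest route is a change of variables to a fixed reference domain: writing $x=r(t)y$ gives $M(t)=\int_{B_1}\rho(r(t)y,t)\,r(t)^d\,dy$, whose domain is now $t$-independent, so one differentiates under the integral sign and applies the chain rule to $\partial_t\big[\rho(r(t)y,t)\,r(t)^d\big]$. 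Pulling the result back to $B_{r(t)}$, the Jacobian term (from $\partial_t r(t)^d$) and the radial-gradient term assemble, via $\nabla\!\cdot(\rho x)=d\rho+x\cdot\nabla\rho$, the divergence theorem, and the identity $x\cdot n=r(t)$ on $\partial B_{r(t)}$, into exactly $\int_{\partial B_{r(t)}}\rho\,(v_b\cdot n)\,dS$, since $v_b\cdot n=r'$.

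Next I would substitute the continuity equation $\partial_t\rho=-\nabla\!\cdot(\rho v)$ into the volume term and apply the divergence theorem,
\[
\int_{B_{r(t)}}\partial_t\rho\,dx
\;=\;-\int_{B_{r(t)}}\nabla\!\cdot(\rho v)\,dx
\;=\;-\int_{\partial B_{r(t)}}\rho\,(v\cdot n)\,dS,
\]
and combine with the transport identity to obtain
\[
\frac{d}{dt}M(t)
\;=\;-\int_{\partial B_{r(t)}}\rho\,(v\cdot n)\,dS+\int_{\partial B_{r(t)}}\rho\,(v_b\cdot n)\,dS
\;=\;-\int_{\partial B_{r(t)}}\rho\,(v-v_b)\cdot n\,dS,
\]
which is \eqref{eq:reynolds-tight}. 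I would remark that the $L^\infty$ bounds in \eqref{eq:continuity} play no role in the identity itself; they serve only to make the ensuing flux estimate finite, namely $\big|\tfrac{d}{dt}M(t)\big|\le \rho_{\max}\,(c+|r'(t)|)\,\mathrm{area}(\partial B_{r(t)})=\mathcal{O}(t^{d-1})$, which is the point of contact with Proposition~\ref{prop:width}.

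The only genuine obstacle is regularity/justification, not content: the moving-domain Leibniz rule and the divergence theorem both require $\rho$ and $\rho v$ to be $C^1$ on a neighborhood of $\overline{B_{r(t)}}$ and the continuity equation to hold classically. If one only assumes \eqref{eq:continuity} in the distributional sense, the argument still goes through by mollifying $\rho$ and $v$, applying the smooth identity to the mollifications, and passing to the limit using the $L^\infty$ bounds together with dominated convergence (trace continuity of the mollified fields on the sphere suffices to handle the boundary integrals); I would state this explicitly so the lemma covers merely Lipschitz information densities. Everything downstream of the transport identity is routine vector calculus; the care is entirely in setting up the change of variables so the boundary term emerges with the right sign and with $v_b\cdot n=r'$.
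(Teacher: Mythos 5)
Your proposal is correct and follows essentially the same route as the paper: apply the Reynolds transport identity for the moving ball, substitute $\partial_t\rho=-\nabla\!\cdot(\rho v)$ from \eqref{eq:continuity}, and convert the volume term to a boundary flux via the divergence theorem, giving exactly \eqref{eq:reynolds-tight}. The only difference is that the paper simply invokes the standard transport theorem, whereas you additionally derive it by rescaling to the fixed unit ball and comment on the regularity/mollification needed — a useful but inessential elaboration of the same argument.
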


\begin{proof}
Standard Reynolds transport for smooth $\rho,v$ on $\Omega(t)$ with boundary velocity $v_b$
gives
$\frac{d}{dt}\int_{\Omega(t)}\rho
= \int_{\Omega(t)} \partial_t\rho + \int_{\partial\Omega(t)}\rho\,v_b\!\cdot n\,dS.$
Substituting $\partial_t\rho=-\nabla\!\cdot(\rho v)$ and applying the divergence theorem yields
$\frac{d}{dt}\int_{\Omega(t)}\rho = -\int_{\partial\Omega(t)} \rho\,(v-v_b)\!\cdot n\,dS.$
\end{proof}

\begin{theorem}[Tight width/parallelism bound]\label{thm:width-tight}
Under \eqref{eq:continuity} with $|r'(t)|\le c$, the instantaneous growth rate satisfies
\begin{equation}\label{eq:tight-bound}
\bigg|\frac{d}{dt}\int_{B_{r(t)}} \rho\bigg|
\ \le\ \rho_{\max}\,\big(c+|r'(t)|\big)\,\mathrm{Area}\!\big(\partial B_{r(t)}\big).
\end{equation}
If $r(t)=\Theta(t)$ (causal expansion), then
$\big|\frac{d}{dt}\int_{B_{r(t)}} \rho\big| = \mathcal{O}(t^{d-1})$.
\end{theorem}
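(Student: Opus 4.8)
The plan is to read off \eqref{eq:tight-bound} directly from the Reynolds transport identity \eqref{eq:reynolds-tight} of Lemma~\ref{lem:reynolds-tight} by a pointwise estimate of the boundary integrand, and then specialize to $r(t)=\Theta(t)$.

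Concretely, I would start from \eqref{eq:reynolds-tight}, which writes $\frac{d}{dt}M(t)$ as the integral over $\partial B_{r(t)}$ of $-\rho\,(v-v_b)\cdot n$. Taking absolute values and passing them inside the integral, it suffices to bound $\bigl|\rho(x,t)\,(v(x,t)-v_b(x,t))\cdot n(x,t)\bigr|$ pointwise on the sphere. The factor $|\rho|\le\rho_{\max}$ is immediate from the $L^\infty$ hypothesis in \eqref{eq:continuity}. For the velocity factor, split $(v-v_b)\cdot n = v\cdot n - v_b\cdot n$; Cauchy--Schwarz together with $\|v(\cdot,t)\|_{L^\infty}\le c$ gives $|v\cdot n|\le c$, while $v_b = r'(t)\,n$ with $n$ a unit normal forces $v_b\cdot n = r'(t)$ exactly, so $|v_b\cdot n| = |r'(t)|$. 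Hence the integrand is at most $\rho_{\max}\,(c+|r'(t)|)$, and integrating this constant over $\partial B_{r(t)}$ produces exactly the right-hand side of \eqref{eq:tight-bound}.

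For the asymptotic consequence, I would use that the surface area of a sphere of radius $r$ in $\mathbb{R}^d$ equals a dimensional constant times $r^{d-1}$, so if $r(t)=\Theta(t)$ then $\mathrm{Area}(\partial B_{r(t)}) = \Theta(t^{d-1})$. Moreover the standing hypothesis $|r'(t)|\le c$ makes the prefactor $c+|r'(t)|\le 2c$ bounded, and $\rho_{\max}$ is a fixed constant, so multiplying the three bounded-or-polynomially-growing factors gives $\bigl|\frac{d}{dt}M(t)\bigr| = \mathcal{O}(t^{d-1})$.

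The argument collapses to a one-line pointwise estimate once Lemma~\ref{lem:reynolds-tight} is available, so I do not anticipate a genuine obstacle. The only place deserving a moment's care is the identity $v_b\cdot n = r'(t)$ — that a ball whose radius evolves with derivative $r'(t)$ has boundary normal velocity exactly $r'(t)$, with no curvature or Jacobian correction — which relies solely on the radial symmetry of $B_{r(t)}$ and the $C^1$ regularity of $r$ already assumed in the setting; relatedly, one should note the bound is genuinely two-sided, using only the uniform estimate $|\rho|\le\rho_{\max}$ and not the sign of $\rho$.
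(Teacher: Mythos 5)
Your proposal is correct and matches the paper's proof: both start from the Reynolds transport identity \eqref{eq:reynolds-tight}, bound the boundary integrand pointwise by $\rho_{\max}\,(c+|r'(t)|)$ using $\|\rho\|_\infty\le\rho_{\max}$, $|v\cdot n|\le c$, and $v_b\cdot n=r'(t)$, and integrate over $\partial B_{r(t)}$; the asymptotic claim for $r(t)=\Theta(t)$ then follows from $\mathrm{Area}(\partial B_{r(t)})=\Theta(t^{d-1})$ exactly as you describe. Your write-up merely spells out details the paper leaves implicit.
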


\begin{proof}
From \eqref{eq:reynolds-tight},
\[
\big|\tfrac{d}{dt}M(t)\big|
\le \int_{\partial B_{r(t)}} \rho_{\max}\,\big| (v-v_b)\!\cdot n \big|\, dS
\le \rho_{\max}\,(c+|r'(t)|)\,\mathrm{Area}(\partial B_{r(t)}).
\]
\end{proof}

\begin{remark}[Sharpness of the constant]
Only normal components contribute to the flux; tangential motion does not change $M(t)$.
Thus the tight universal constant is $c+|r'(t)|$, improving over the usual factor $2c$.
\end{remark}

\begin{proposition}[Achievability (optimal up to $\varepsilon$)]\label{prop:achieve}
For any $\varepsilon>0$, there exist smooth $\rho,v$ satisfying \eqref{eq:continuity} with
$\|\rho(\cdot,t)\|_\infty=\rho_{\max}$ and $\|v(\cdot,t)\|_\infty=c$ such that
\[
\Big|\tfrac{d}{dt}\!\int_{B_{r(t)}}\!\rho\Big|
\ge (1-\varepsilon)\,\rho_{\max}\,(c+|r'(t)|)\,\mathrm{Area}(\partial B_{r(t)}).
\]
\end{proposition}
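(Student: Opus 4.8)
The plan is to show that the single estimate behind Theorem~\ref{thm:width-tight} is essentially sharp, by producing, for each $t_0\in I$, a pair $(\rho,v)$ whose flux integrand in Lemma~\ref{lem:reynolds-tight} is pointwise almost extremal on $\partial B_{r(t_0)}$. From the exact identity \eqref{eq:reynolds-tight},
\[
\frac{d}{dt}M(t)=-\int_{\partial B_{r(t)}}\rho\,(v-v_b)\cdot n\,dS,\qquad v_b=r'(t)\,n,
\]
the only slack in the proof of Theorem~\ref{thm:width-tight} comes from $|\rho|\le\rho_{\max}$ and $|(v-v_b)\cdot n|=|v\cdot n-r'(t)|\le c+|r'(t)|$; hence it suffices to make $\rho\approx\rho_{\max}$ and $v\cdot n\approx-\operatorname{sgn}(r'(t_0))\,c$ on almost all of $\partial B_{r(t_0)}$, so that the integrand is $\approx-\rho_{\max}\operatorname{sgn}(r'(t_0))\,(c+|r'(t_0)|)$ with a single sign.

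Fix $t_0\in I$ and put $s:=-\operatorname{sgn}(r'(t_0))$ (if $r'(t_0)=0$, take either sign; the $|r'|$-term then drops). Use the extremal radial velocity field
\[
v(x,t)=s\,c\,\chi\!\big(\|x\|/r(t)\big)\,\frac{x}{\|x\|},
\]
with a fixed smooth cutoff $\chi$ equal to $1$ on a neighbourhood of $1$, vanishing near $0$, supported in a bounded interval, and $0\le\chi\le 1$; then $v$ is smooth, $\|v(\cdot,t)\|_\infty=c$ for all $t$, and near $\partial B_{r(t)}$ one has $v\cdot n=s\,c=-\operatorname{sgn}(r')\,c$. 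For the density, take $\rho(\cdot,t)$ to be a thin smooth radial shell advected by $v$ (hence solving \eqref{eq:continuity} and remaining nonnegative and radial), with initial data chosen so that its peak radius equals $r(t_0)$ exactly at $t=t_0$ and its peak value equals $\rho_{\max}$ at the appropriate endpoint of a short window $I_0=[t_0-\tau,t_0+\tau]\subset I$. Since the radial flow is compressive, $\nabla\!\cdot v=sc(d-1)/\|x\|=\mathcal{O}(1/r(t_0))$ on the shell, so over $I_0$ the shell height changes by a factor $1+\mathcal{O}(\tau/r(t_0))$; thus $\|\rho(\cdot,t)\|_\infty\le\rho_{\max}$ throughout $I_0$ and $\rho(\cdot,t_0)=\rho_{\max}\,(1-\mathcal{O}(\tau/r(t_0)))$ on $\partial B_{r(t_0)}$. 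Outside $I_0$ one lets the shell drift (never re-crossing $\partial B_{r(t)}$), and one adds a fixed, frozen reservoir of density $\rho_{\max}$ placed in a far region where $\chi$, and hence $v$, vanishes; this guarantees $\|\rho(\cdot,t)\|_\infty=\rho_{\max}$ for every $t\in I$ without affecting $M$ or $\tfrac{d}{dt}M$ near $t_0$. These tail modifications are routine.

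Evaluating Lemma~\ref{lem:reynolds-tight} at $t=t_0$: on $\partial B_{r(t_0)}$ we have $(v-v_b)\cdot n=s\,c-r'(t_0)=-\operatorname{sgn}(r'(t_0))\,(c+|r'(t_0)|)$ and $\rho=\rho_{\max}(1-\mathcal{O}(\tau/r(t_0)))$, so the integrand has constant sign and
\[
\Big|\frac{d}{dt}M(t_0)\Big|=\big(1-\mathcal{O}(\tau/r(t_0))\big)\,\rho_{\max}\,\big(c+|r'(t_0)|\big)\,\mathrm{Area}\!\big(\partial B_{r(t_0)}\big).
\]
Taking $\tau$ small enough that $\mathcal{O}(\tau/r(t_0))<\varepsilon$ gives the asserted bound at $t_0$; since $t_0\in I$ was arbitrary and \eqref{eq:tight-bound} is the matching upper bound, the universal constant $c+|r'(t)|$ of Theorem~\ref{thm:width-tight} is optimal.

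The main obstacle — and the reason the saturating configuration is intrinsically \emph{instantaneous} rather than uniform in $t$ — is that the extremal boundary condition $v\cdot n\equiv s\,c$ on the whole sphere forces $\int_{\partial B_{r(t)}}v\cdot n\,dS\neq 0$, so $v$ cannot be divergence-free and the flow is genuinely compressive; a sink of this strength steadily amplifies any concentrated density, while the speed bound $\|v\|_\infty\le c$ rules out offsetting this with a thin, fast return jet, since a return flux of order $c\cdot\mathrm{Area}(\partial B_{r(t)})$ would itself require boundary area of order $\mathrm{Area}(\partial B_{r(t)})$. The construction copes with this by localizing in time (so amplification is $o(1)$) and by tuning the shell height so that $\|\rho(\cdot,t)\|_\infty=\rho_{\max}$ is attained but not exceeded. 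What remains are routine checks — smoothness of $v$ and of the advected shell, the $\mathcal{O}(1/r)$ divergence estimate, and the advection/amplification bound — which are the only calculations deferred to the full argument.
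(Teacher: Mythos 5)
Your construction is essentially the paper's own proof: a thin shell of density $\approx\rho_{\max}$ concentrated at the causal boundary, with purely normal velocity of magnitude $c$ signed against the boundary motion, evaluated through the Reynolds transport identity so the flux integrand has constant sign and nearly extremal magnitude. You simply carry out the paper's two-line sketch in more detail (explicit sign choice, cutoff, amplification control over a short window, and the reservoir ensuring $\|\rho\|_\infty=\rho_{\max}$), and your observation that the saturating configuration is intrinsically instantaneous is the correct reading of the paper's construction as well.
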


\begin{proof}
Construct $\rho$ supported in a thin annulus near $\partial B_{r(t)}$ and take $v$ purely normal
with $v\!\cdot n=\pm c$ on the support, mollified to $C^1$.  
The limiting contribution from the boundary approaches
$\rho_{\max}(c+|r'|)\mathrm{Area}(\partial B_{r(t)})$ as the thickness $\delta\!\to\!0$.
\end{proof}

\begin{corollary}[Order-optimality]\label{cor:order-optimal}
For $r(t)=\Theta(t)$, there exist admissible $(\rho,v)$ such that
$\big|\tfrac{d}{dt}\!\int_{B_{r(t)}}\rho\big|=\Theta(t^{d-1})$.
Hence the scaling in Theorem~\ref{thm:width-tight} is best possible.
\end{corollary}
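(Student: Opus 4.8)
The plan is to stitch together the two-sided estimates already in hand, so the corollary is essentially a packaging statement. The upper direction is immediate from Theorem~\ref{thm:width-tight}: for the given $C^1$ radius $r$ with $|r'(t)|\le c$ and $r(t)=\Theta(t)$, every admissible $(\rho,v)$ satisfying \eqref{eq:continuity} obeys
\[
\Big|\tfrac{d}{dt}M(t)\Big|\ \le\ \rho_{\max}\,\big(c+|r'(t)|\big)\,\mathrm{Area}\!\big(\partial B_{r(t)}\big)\ \le\ 2c\,\rho_{\max}\,\mathrm{Area}\!\big(\partial B_{r(t)}\big),
\]
and since $\mathrm{Area}(\partial B_{r(t)})=\Theta(r(t)^{d-1})=\Theta(t^{d-1})$ this gives $|\tfrac{d}{dt}M(t)|=\mathcal{O}(t^{d-1})$ for every admissible dynamics; no system can exceed this order.

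For the matching lower bound I would apply Proposition~\ref{prop:achieve} with a fixed $\varepsilon$, say $\varepsilon=\tfrac12$. It produces a concrete $C^1$ pair $(\rho,v)$ — a density of height $\rho_{\max}$ concentrated in a thin annulus tracking the moving sphere $\partial B_{r(t)}$, together with a purely normal velocity field of magnitude $c$ on its support — for which
\[
\Big|\tfrac{d}{dt}M(t)\Big|\ \ge\ \tfrac12\,\rho_{\max}\,\big(c+|r'(t)|\big)\,\mathrm{Area}\!\big(\partial B_{r(t)}\big)\ \ge\ \tfrac12\,c\,\rho_{\max}\,\mathrm{Area}\!\big(\partial B_{r(t)}\big),
\]
which, again using $r(t)=\Theta(t)$, is $\Omega(t^{d-1})$. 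This same pair simultaneously satisfies the $\mathcal{O}(t^{d-1})$ ceiling from Theorem~\ref{thm:width-tight}, so $|\tfrac{d}{dt}M(t)|=\Theta(t^{d-1})$, establishing the existence claim. The ``best possible'' conclusion then follows formally: the exponent $d-1$ in the bound of Theorem~\ref{thm:width-tight} is attained, up to constants, by an admissible system, so it cannot be lowered.

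There is no genuine obstacle here; the only point deserving care is \emph{uniformity in $t$}. The corollary asserts $\Theta(t^{d-1})$, not merely a bound at isolated instants or along a subsequence, so the achievability construction must be a genuine time-dependent solution of the continuity equation whose annular support moves with $\partial B_{r(t)}$ — which is exactly how Proposition~\ref{prop:achieve} is built — while the upper bound from Theorem~\ref{thm:width-tight} already holds at every $t$; combining the two then yields the claimed $\Theta$. A minor bookkeeping remark: using $|r'|\le c$ keeps $c+|r'(t)|$ pinned between $c$ and $2c$, making the implied constants explicit, and if a fully concrete witness is preferred one may take $r(t)=\alpha t$ with $0<\alpha\le c$, so that $|r'|\equiv\alpha$ and $\mathrm{Area}(\partial B_{r(t)})$ equals a fixed dimensional constant times $\alpha^{d-1}t^{d-1}$.
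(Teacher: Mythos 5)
Your proof is correct and follows essentially the same route the paper intends: the corollary is an immediate combination of the upper bound in Theorem~\ref{thm:width-tight} with the achievability construction of Proposition~\ref{prop:achieve} (at a fixed $\varepsilon$), using $\mathrm{Area}(\partial B_{r(t)})=\Theta(t^{d-1})$ when $r(t)=\Theta(t)$. Your added remark about uniformity in $t$ is a sensible clarification but not a departure from the paper's argument.
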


\paragraph{Conservation and surface-flux equivalence.}
Define the admissible normal flux
$F(t;u)=\int_{\partial B_{r(t)}} \rho\,u\!\cdot n\,dS$, $\|u(\cdot,t)\|_\infty\le c$.
Lemma~\ref{lem:reynolds-tight} implies
$\tfrac{d}{dt}\int_{B_{r(t)}} \rho = -F(t;v-v_b)$.
Hence the differential (conservation) and boundary (surface) forms are equivalent up to
the tight constant $\rho_{\max}(c+|r'|)$.

\paragraph{Geometric extension.}
For Riemannian metrics with non-negative Ricci curvature,
Bishop--Gromov comparison ensures $\mathrm{Area}(\partial B_r)\le C_{d-1}\,r^{d-1}$,
with equality in Euclidean space ($C_{d-1}=d\,\omega_d$).  
All bounds above hold with $\omega_{d-1}$ replaced by $C_{d-1}$.

\paragraph{Implications for $\mathbf{RC}_d$ and attention.}
Let $Q(T)$ denote the total information that must enter $B_{r(\cdot)}$ by time $T$ to
achieve target error $\varepsilon$.  
By converse bounds (e.g., Fano), $Q(T)\!\ge\! I^\star(\varepsilon)$,
the minimal required mutual information.
Integrating \eqref{eq:tight-bound},
\[
Q(T)\ \le\ \rho_{\max}\sup_{t\le T}(c+|r'(t)|)
\int_0^T \mathrm{Area}(\partial B_{r(t)})\,dt
=\Theta(T^d)
\]
since $\int_0^T r(t)^{d-1}dt=\Theta(T^d)$ for $r(t)=\Theta(t)$.
Each cross-boundary interaction corresponds to a finite-capacity channel
(e.g., a neuron, gate, or attention head) of capacity $C_{\mathrm{gate}}$.
Therefore the realizable parallelism bound becomes
\[
T\ \ge\ \Omega\Bigg(
\Big(\frac{I^\star(\varepsilon)}{C_{\mathrm{gate}}\cdot \mathrm{const}}\Big)^{1/d}
\Bigg).
\]
In transformer architectures, each attention head functions as such a channel,
so multi-head attention obeys the same $\mathbf{RC}_d$ law:
\emph{global integration cannot scale faster than surface area growth}.
This links the limits of attention span, circuit width, and physical throughput
under a unified conservation principle.

\subsection{Physical limitations of attention}
\label{subsec:attn-limit-rc}

We now instantiate the flux/throughput bound of \S\ref{subsec:width-conservation-tight} within an
explicit information-theoretic model of attention.  
This section formalizes the intuition that attention---regardless of its algebraic form---is a
\emph{communication process} constrained by geometric and thermodynamic capacity.

\paragraph{From conservation to attention.}
The differential law established in Theorem~\ref{thm:width-tight},
\[
\Big|\tfrac{d}{dt}\!\int_{B_{r(t)}} \rho\Big| = 
\mathcal{O}(t^{d-1}),
\]
describes a universal limit on cross-boundary flux: the rate of information entering a causal region
cannot exceed its surface area growth.
Integrating this law over time gives the total transmissible information
$Q(T)=\Theta(T^d)$, which represents the \emph{total number of bits} that can reliably traverse
a causal boundary of radius $r(t)=\Theta(t)$.
This continuum statement applies to any physically realizable circuit, and, when interpreted through
information theory, defines the maximal rate at which distributed agents (or attention heads) can
exchange information.
Thus, the flux conservation law serves as the geometric foundation for
all realizable attention mechanisms within $\mathbf{RC}_d$.

\paragraph{Setting.}
Work in spatial dimension $d\ge 1$ with causal radius $r(t)=c(t-t_0)$ and causal surface
$\partial B_{r(t)}$.  
Each boundary site supports at most $H$ attention heads and $\kappa$ simultaneously active
cross-boundary channels per site.  
Each channel has Shannon capacity $C_{\mathrm{head}}>0$ bits per use, reflecting physical
limitations from energy, bandwidth, and noise.

\begin{definition}[Cross-cut information demand]
\label{def:cross-cut-demand}
Let $\Gamma$ be a geometric cut through the causal region and $\mathcal{D}_n$ an input
distribution.  
For any attention computation halting by time $T$, define
$I^\star(\varepsilon;n)$ as the infimum (over all internal randomness and tie-breaking)
of the mutual information that must traverse $\Gamma$ during $[t_0,T]$ in order for the output to
achieve error $\le \varepsilon$ under $\mathcal{D}_n$.  
This quantity is the \emph{cross-cut information demand} of the task or embedding.
\end{definition}

\begin{lemma}[RC attention cut capacity]\label{lem:rc-cut-capacity}
Under $\mathbf{RC}_d$ geometry and the above attention model, the total reliable information that can
cross any fixed cut $\Gamma$ by time $T$ satisfies
\[
\mathsf{Cap}_{\Gamma}(T)\ \le\ K_d\,C_{\mathrm{head}}\,\kappa H\,T^{d},
\]
for a geometric constant $K_d>0$ depending only on packing and dimension.
\end{lemma}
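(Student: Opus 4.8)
The plan is to bound $\mathsf{Cap}_\Gamma(T)$ by decomposing the information that crosses $\Gamma$ into its passage through individual, finite-capacity physical channels, counting how many such channels the cut can intersect at each instant, and integrating the resulting per-unit-time budget over $[t_0,T]$. The key structural fact is causality: by Definition~\ref{def:causal}, at time $t$ every component of the computation lies inside the causal ball of radius $r(t)=c(t-t_0)$, so any information that has traversed $\Gamma$ by time $T$ must have passed through the bounded slice $\Gamma\cap B_{r(t)}$ at some $t\le T$ via one of the active cross-boundary channels located there.

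First I would fix $\Gamma$ and, for each $t\in[t_0,T]$, consider $\Sigma_t:=\Gamma\cap B_{r(t)}$, a codimension-one region. By the area estimate underlying Proposition~\ref{prop:width} (Euclidean, or Bishop--Gromov in the curved case), $\mathrm{Area}(\Sigma_t)\le C_{d-1}\,r(t)^{d-1}=C_{d-1}c^{d-1}(t-t_0)^{d-1}$. Since boundary sites are separated by a minimum distance $\ell>0$ (the density bound, cf.\ Proposition~\ref{prop:gates} and Corollary~\ref{cor:circuit-size}), a $(d-1)$-dimensional sphere-packing argument bounds the number of sites meeting $\Sigma_t$ by $\mathrm{Area}(\Sigma_t)/\Omega(\ell^{d-1})=\mathcal{O}\big((t-t_0)^{d-1}\big)$. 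Each such site hosts at most $H$ heads and $\kappa$ simultaneously active channels per head (hence $\le\kappa H$ active channels), and finite signal speed permits only $\mathcal{O}(1)$ channel uses per unit time at each site. Multiplying, the instantaneous reliable-information-transfer rate across $\Gamma$ is at most $K_d'\,C_{\mathrm{head}}\,\kappa H\,(t-t_0)^{d-1}$ bits per unit time for a packing constant $K_d'$.

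Next I would integrate this rate over $t\in[t_0,T]$: since $\int_{t_0}^T (t-t_0)^{d-1}\,dt = (T-t_0)^d/d = \mathcal{O}(T^d)$, the total admissible throughput is $K_d\,C_{\mathrm{head}}\,\kappa H\,T^d$. To turn this capacity budget into a bound on the reliable information crossing $\Gamma$, I would invoke the channel-coding converse together with additivity: anything the far side learns about the near side is a (possibly randomized) function of the symbols received over these channels, so by the data-processing inequality and the chain rule for mutual information, $I(\text{near};\text{far})$ is at most the sum of the individual channel capacities accumulated over $[t_0,T]$, which is exactly the quantity just computed. This yields $\mathsf{Cap}_\Gamma(T)\le K_d\,C_{\mathrm{head}}\,\kappa H\,T^d$.

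The main obstacle I expect is this last step: rigorously arguing that the mutual information across $\Gamma$ is dominated by the \emph{sum} of per-channel capacities rather than something strictly larger. One must rule out superadditive gains from feedback, shared randomness, or adaptive channel scheduling; the clean way is to observe that, conditioned on the (discrete, finitely describable) schedule of which channels are active at which step, the composite channel across $\Gamma$ is a parallel concatenation of memoryless channels whose capacity is additive, and then union/average over schedules, which only affects constants. Secondary technicalities are the treatment of the $t_0$ offset (absorbed into $K_d$ for $T$ bounded away from $t_0$, or in the normalized regime $t_0=0$) and reconciling the per-site count $\kappa H$ with the precise meaning of ``simultaneously active'' over a time step; both become routine once the schedule formalism is fixed.
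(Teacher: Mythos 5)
Your proposal is correct and follows essentially the same route as the paper: bound the instantaneous cross-cut rate by counting boundary sites ($\Theta(t^{d-1})$ via the width/surface-area bound of Proposition~\ref{prop:width}), multiply by the per-site budget $\kappa H\,C_{\mathrm{head}}$, and integrate over $[t_0,T]$ to obtain the $T^d$ scaling with constants absorbed into $K_d$. Your additions---the explicit packing count on the slice $\Gamma\cap B_{r(t)}$ and the data-processing/parallel-channel converse turning the capacity budget into a bound on mutual information---merely make rigorous steps the paper's four-sentence proof leaves implicit, and do not change the argument.
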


\begin{proof}
At time $t$, the realizable surface width is $W(t)=\Theta(t^{d-1})$
by Proposition~\ref{prop:width}.  
Each site on $\partial B_{r(t)}$ has $\kappa H$ usable head-ports, each carrying
capacity $\le C_{\mathrm{head}}$.  
Thus the instantaneous cross-boundary throughput is
$\mathcal{O}(C_{\mathrm{head}}\kappa H\,t^{d-1})$.  
Integrating from $t_0$ to $T$ yields $\Theta(C_{\mathrm{head}}\kappa H\,T^d)$;
constants are absorbed into $K_d$.
\end{proof}

\begin{theorem}[Attention Limitation Theorem for $\mathbf{RC}_d$]
\label{thm:attn-limit-rc}
Let a realizable attention computation in $\mathbf{RC}_d$
(with per-head capacity $C_{\mathrm{head}}$, head count $H$, and simultaneity $\kappa$ per site)
solve a task of cross-cut information demand $I^\star(\varepsilon;n)$ by time $T$.
Then necessarily
\[
T\ \ge\ 
\left(\frac{I^\star(\varepsilon;n)}{K_d\,C_{\mathrm{head}}\,\kappa H}\right)^{1/d}.
\]
In particular, for a single-head architecture $(H=1)$,
\[
T\ \ge\ 
\left(\frac{I^\star(\varepsilon;n)}{K_d\,C_{\mathrm{head}}\,\kappa}\right)^{1/d},
\]
so adding heads improves runtime by at most a constant factor $H^{-1/d}$ but cannot improve the
fundamental $1/d$ exponent.
\end{theorem}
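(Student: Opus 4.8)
The plan is to chain the information-theoretic converse baked into Definition~\ref{def:cross-cut-demand} with the geometric cut-capacity bound of Lemma~\ref{lem:rc-cut-capacity}. First I would fix a geometric cut $\Gamma$ through the causal region (the one witnessing the cross-cut information demand, or any fixed cut if $I^\star$ is defined relative to an arbitrary one). By the definition of $I^\star(\varepsilon;n)$, any realizable attention computation that halts by time $T$ and achieves output error $\le\varepsilon$ under $\mathcal{D}_n$ must route at least $I^\star(\varepsilon;n)$ bits of mutual information across $\Gamma$ during $[t_0,T]$: this is the Fano/data-processing step, and since $I^\star$ is taken as an infimum over internal randomness and tie-breaking, it is already built in.

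Next I would invoke Lemma~\ref{lem:rc-cut-capacity}, which bounds the total reliable information crossing $\Gamma$ by time $T$ as $\mathsf{Cap}_\Gamma(T)\le K_d\,C_{\mathrm{head}}\,\kappa H\,T^{d}$. The step from "information that must traverse $\Gamma$" to "information that $\Gamma$ can carry" is Shannon's channel-coding converse: reliable transmission through a collection of finite-capacity channels cannot exceed their aggregate capacity. Combining the two facts gives the single chain
\[
I^\star(\varepsilon;n)\ \le\ \mathsf{Cap}_\Gamma(T)\ \le\ K_d\,C_{\mathrm{head}}\,\kappa H\,T^{d},
\]
and solving for $T$ by taking the (monotone) $d$-th root yields $T\ge\big(I^\star(\varepsilon;n)/(K_d\,C_{\mathrm{head}}\,\kappa H)\big)^{1/d}$. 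The single-head statement is the specialization $H=1$; the concluding remark then follows by inspection, since $H$ enters only through the prefactor $H^{-1/d}$ inside the root, rescaling the runtime floor by a factor constant in $n$ while leaving the geometric exponent $1/d$---dictated purely by the surface scaling $W(t)=\Theta(t^{d-1})$ integrated over causal time---untouched.

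The main obstacle is not the algebra but making the converse airtight: one must argue that the abstract demand $I^\star$ is genuinely forced through the \emph{physical} cut $\Gamma$ rather than satisfied by local precomputation on one side. This is handled by appealing to the locality of the realization (Definition~\ref{def:realize}) together with finite signal speed: any dependence of the output on inputs lying on the far side of $\Gamma$ must be mediated by signals crossing $\Gamma$ within the causal window, so the data-processing inequality applies with $\Gamma$ as the genuine bottleneck. A secondary point to state carefully is that Lemma~\ref{lem:rc-cut-capacity} already performs the time-integration of the instantaneous width $W(t)=\Theta(t^{d-1})$ up to $T$, so no further integration is needed here; the theorem is then essentially a one-line corollary of that lemma and the converse.
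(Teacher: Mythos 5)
Your proposal is correct and follows essentially the same route as the paper: the paper's proof likewise combines Definition~\ref{def:cross-cut-demand} (which forces at least $I^\star(\varepsilon;n)$ bits across $\Gamma$) with Lemma~\ref{lem:rc-cut-capacity} (which caps the cut at $K_d\,C_{\mathrm{head}}\,\kappa H\,T^d$), then solves for $T$. Your extra care about why the demand cannot be bypassed by one-sided precomputation (locality plus finite signal speed) is a reasonable strengthening of a step the paper treats as built into the definition, but it does not change the argument.
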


\begin{proof}
By Definition~\ref{def:cross-cut-demand}, any correct computation must transmit at least
$I^\star(\varepsilon;n)$ bits across $\Gamma$ by time $T$.  
By Lemma~\ref{lem:rc-cut-capacity}, the total cross-cut capacity available by time $T$ is
$\le K_d\,C_{\mathrm{head}}\,\kappa H\,T^d$.  
Therefore $I^\star(\varepsilon;n) \le K_d\,C_{\mathrm{head}}\,\kappa H\,T^d$,
yielding the claimed lower bound on $T$.
\end{proof}

\begin{remark}[Universality across attention mechanisms]
The bound of Theorem~\ref{thm:attn-limit-rc} depends only on the physical channel structure,
not on the algebraic form of attention (softmax, kernelized, linearized, or sparse).
All realizable attention mechanisms obey the same boundary-limited scaling:
\[
T \;\gtrsim\;
\Big(\tfrac{I^\star}{C_{\mathrm{head}}\kappa H}\Big)^{1/d}.
\]
Hence no variant of attention can outperform the $1/d$ exponent dictated by realizable circuits:
multi-head structures improve constants but never the exponent itself.
\end{remark}

\paragraph{Information-theoretic instantiations.}
We now compute explicit $I^\star(\varepsilon;n)$ values for canonical cross-cut tasks.

\begin{construction}[Two-block disjointness (DISJ)]\label{constr:disj}
Let $n=2m$.  
Embed vectors $x,y\in\{0,1\}^m$ on opposite sides of $\Gamma$ and compute
$\mathrm{DISJ}_m(x,y)=\neg(\exists i:\ x_i=y_i=1)$ with bounded areal density.
\end{construction}

\begin{corollary}[DISJ barrier: single vs.\ multi-head]\label{cor:disj-attn}
For Construction~\ref{constr:disj} and error $\varepsilon\le 1/3$,
the two-party communication complexity across $\Gamma$ is $\Omega(m)$, hence
$I^\star(\varepsilon;n)\ge c_0 m$ for some constant $c_0>0$.
By Theorem~\ref{thm:attn-limit-rc},
\[
T\ \ge\ \Omega\!\left(
\left(\frac{m}{C_{\mathrm{head}}\kappa H}\right)^{1/d}
\right),\qquad
T_{H=1}\ \ge\ 
\Omega\!\left(
\left(\frac{m}{C_{\mathrm{head}}\kappa}\right)^{1/d}
\right).
\]
Thus additional heads yield only a multiplicative speed-up $H^{-1/d}$.
\end{corollary}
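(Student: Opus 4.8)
\emph{Proof plan.} The plan is to reduce the claimed lower bound to the classical hardness of set disjointness in two–party communication and then feed the resulting information demand into Theorem~\ref{thm:attn-limit-rc}. First I would recall that the bounded–error randomized communication complexity of $\mathrm{DISJ}_m$ is $R_{1/3}(\mathrm{DISJ}_m)=\Omega(m)$ (Kalyanasundaram--Schmerl; Razborov), and, since $I^\star$ is defined in Definition~\ref{def:cross-cut-demand} as a \emph{mutual-information} quantity rather than a transcript length, that the information complexity of $\mathrm{DISJ}_m$ under the standard collision–free AND–gadget distribution is also $\Omega(m)$ (the direct–sum–over–gadgets argument of Bar-Yossef--Jayram--Kumar--Sivakumar). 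In Construction~\ref{constr:disj} the cut $\Gamma$ physically separates the $x$–bits from the $y$–bits, so any realizable computation that halts by time $T$ with error $\le\varepsilon$ induces a two–party protocol in which the time–ordered record of signals crossing $\Gamma$ plays the role of the transcript; by the data–processing inequality the mutual information between that record and $(x,y)$ is at least the information complexity of $\mathrm{DISJ}_m$. Taking the infimum over internal randomness and tie–breaking (as in Definition~\ref{def:cross-cut-demand}) gives $I^\star(\varepsilon;n)\ge c_0 m$ for an absolute constant $c_0>0$ whenever $\varepsilon\le 1/3$, which is the first assertion of the corollary.

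Second, I would substitute this into Theorem~\ref{thm:attn-limit-rc}, which gives $T\ge\big(I^\star(\varepsilon;n)/(K_d C_{\mathrm{head}}\kappa H)\big)^{1/d}$. Using $I^\star\ge c_0 m$ yields
\[
T\ \ge\ \left(\frac{c_0\,m}{K_d\,C_{\mathrm{head}}\,\kappa\,H}\right)^{1/d}\ =\ \Omega\!\left(\left(\frac{m}{C_{\mathrm{head}}\,\kappa\,H}\right)^{1/d}\right),
\]
and the $H=1$ statement is identical with $H$ deleted. Comparing the two expressions, the multi–head bound is smaller than the single–head bound by exactly a factor $H^{1/d}$, so additional heads can only buy a multiplicative speed–up of $H^{-1/d}$ and never change the exponent $1/d$ — which is the separation claimed.

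The main obstacle is the middle step: making rigorous the identification of a geometric cut in an $\mathbf{RC}_d$ circuit with a two–party channel to which the information–complexity bound applies. The subtlety is that $\mathbf{RC}_d$ circuits are not literally protocols — signals may cross $\Gamma$ in both directions over many time steps, randomness and tie–breaking are allowed, and one must be careful about whether "information crossing $\Gamma$" is external or internal information cost (for $\mathrm{DISJ}$ both are $\Theta(m)$, so either choice suffices, but this should be stated). I would handle it by a standard cut–and–paste simulation: fix the behaviour of the $x$–side as Alice's strategy and of the $y$–side as Bob's, treat the sequence of wire values on $\Gamma$ as the transcript, and invoke Lemma~\ref{lem:rc-cut-capacity}'s capacity accounting to certify that $I^\star$ is at least the transcript's information content, hence at least $\mathrm{IC}(\mathrm{DISJ}_m)=\Omega(m)$. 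One should also check that Construction~\ref{constr:disj} can be laid out with the bounded areal density $\mathbf{RC}_d$ requires — but placing $m$ bits on each side of a codimension–one cut is consistent with the width constraint $w(C_n)=\mathcal{O}(t^{d-1})$ already for $t=\Omega(m^{1/(d-1)})$, which is weaker than (indeed implied by) the conclusion, so this imposes no extra loss.
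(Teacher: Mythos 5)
Your proposal follows essentially the same route as the paper, which states the corollary without a separate proof: invoke the classical $\Omega(m)$ lower bound for $\mathrm{DISJ}_m$ to get $I^\star(\varepsilon;n)\ge c_0 m$, then plug this into Theorem~\ref{thm:attn-limit-rc} and compare the $H$ and $H=1$ cases. Your extra care in distinguishing information complexity from transcript length and in sketching the cut-to-two-party-protocol simulation is a welcome tightening of a step the paper leaves implicit, but it does not change the argument's structure.
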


\begin{construction}[Alternating-slab pointer chasing]\label{constr:pointer}
Partition the domain into $2R$ slabs separated by cuts $\Gamma_1,\dots,\Gamma_{2R-1}$.  
Store an initial index $i_0$ in slab $S_1$, and for $j=1,\dots,R$ place a table
$T_j$ in slab $S_j$ so that $i_{j}:=T_j[i_{j-1}]$.  
The output $v[i_R]$ resides in slab $S_R$.
\end{construction}

\begin{corollary}[Round-sensitive attention barrier]\label{cor:pointer-attn}
In Construction~\ref{constr:pointer}, each of the $R$ rounds requires sending at least
$\Omega(\log|\mathcal{I}|)$ bits (the current index) across the next cut, giving
$I^\star(\varepsilon;n)\ge c_1 R\log|\mathcal{I}|$.  
Hence by Theorem~\ref{thm:attn-limit-rc},
\[
T\ \ge\ 
\Omega\!\left(
R^{1/d}\,
\left(\frac{\log|\mathcal{I}|}{C_{\mathrm{head}}\kappa H}\right)^{1/d}
\right).
\]
Even with large $H$, $T$ must scale at least as $R^{1/d}$; heads reduce only the prefactor.
\end{corollary}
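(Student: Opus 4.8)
The plan is to derive the bound in two stages: first establish the communication lower bound $I^\star(\varepsilon;n)\ge c_1 R\log|\mathcal{I}|$ for the alternating-slab pointer-chasing task of Construction~\ref{constr:pointer}, and then feed this into the Attention Limitation Theorem (Theorem~\ref{thm:attn-limit-rc}). For the first stage, the key structural observation is that the $2R$ slabs are arranged so that consecutive tables $T_{j-1}$ and $T_j$ lie on opposite sides of one fixed central cut $\Gamma$; hence evaluating the chain $i_j = T_j[i_{j-1}]$ forces the current pointer to traverse $\Gamma$ once in each of the $R$ rounds. Randomizing the tables $T_1,\dots,T_R$ independently and uniformly, I would argue that, conditioned on everything transmitted across $\Gamma$ before round $j$, the pointer $i_{j-1}$ is (near-)uniform on $\mathcal{I}$, so its correct transmission across $\Gamma$ costs $\Omega(\log|\mathcal{I}|)$ bits of mutual information; a round-elimination / direct-sum argument, in the style of the classical pointer-chasing lower bounds, then shows these $R$ contributions add, since each fresh table $T_j$ carries no information available before round $j$ and hence cannot be amortized. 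Allowing output error $\varepsilon\le 1/3$ costs only a constant factor by Fano, and the areal-density constraint of $\mathbf{RC}_d$ is met by taking $|\mathcal{I}|$ polynomial in the per-slab capacity, which leaves $I^\star(\varepsilon;n)\ge c_1 R\log|\mathcal{I}|$ intact.

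For the second stage I would simply invoke Theorem~\ref{thm:attn-limit-rc} with the cut $\Gamma$ above: any realizable attention computation solving the task by time $T$ satisfies $T\ge\bigl(I^\star(\varepsilon;n)/(K_d C_{\mathrm{head}}\kappa H)\bigr)^{1/d}$, and substituting the lower bound on $I^\star$ gives
\[
T\ \ge\ \left(\frac{c_1 R\log|\mathcal{I}|}{K_d\,C_{\mathrm{head}}\,\kappa H}\right)^{1/d}
\ =\ \Omega\!\left(R^{1/d}\left(\frac{\log|\mathcal{I}|}{C_{\mathrm{head}}\,\kappa H}\right)^{1/d}\right).
\]
Because the head count $H$ enters only under the $d$-th root, it rescales the prefactor by $H^{-1/d}$ but cannot alter the $R^{1/d}$ dependence; the single-head case $H=1$ is just the special case of this display.

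The main obstacle is the additivity of the per-round costs in the first stage: a single lookup only forces $\Omega(\log|\mathcal{I}|)$ bits across $\Gamma$, and a naive union over rounds fails to produce a genuine sum. The resolution is the round-elimination argument, and the delicate point is ruling out \emph{amortization}---showing that a protocol cannot prefetch or jointly compress several lookups. This is exactly where independence of the random tables is used: no partial information about $T_j$ is accessible until $i_{j-1}$ has itself crossed $\Gamma$, so the rounds are causally serialized and the chain rule for mutual information turns the $R$ crossings into $R$ separate $\Omega(\log|\mathcal{I}|)$ terms. A secondary, routine subtlety is confirming that the capacity bound underlying Theorem~\ref{thm:attn-limit-rc} counts information integrated over all of $[t_0,T]$---so that multiple crossings of the same cut all accrue against the $\Theta(T^d)$ budget---rather than a one-shot surface capacity; this is already the content of Lemma~\ref{lem:rc-cut-capacity}, so no additional work is required there.
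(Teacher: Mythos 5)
Your proposal follows essentially the same route as the paper: lower-bound the cross-cut demand by $c_1 R\log|\mathcal{I}|$ via the forced per-round transmission of the current pointer, then plug this into Theorem~\ref{thm:attn-limit-rc}; the paper states this only as a sketch, and your round-elimination/chain-rule argument (using independent random tables to rule out amortization) is the natural way to fill in the asserted additivity. Your reading of the geometry---a single central cut $\Gamma$ crossed once per round, rather than the paper's ``next cut'' per round---is a harmless and in fact cleaner instantiation, since Definition~\ref{def:cross-cut-demand} and Lemma~\ref{lem:rc-cut-capacity} are stated for one fixed cut with capacity integrated over $[t_0,T]$.
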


\paragraph{Remarks.}
(i) Constants $K_d,c_0,c_1$ depend solely on geometry (packing) and cut configuration, not on $n$ or $T$.  
(ii) The theorem is \emph{gate-set specific} through $(C_{\mathrm{head}},\kappa,H)$ and
\emph{geometry specific} through the $T^d$ scaling inherited from flux $\sim t^{d-1}$.  
(iii) The single- vs.\ multi-head separation is tight in the $\mathbf{RC}_d$ sense:
head multiplicity modifies constants but cannot overcome the
boundary-limited exponent $1/d$ arising from realizable physical geometry.

\subsection{Attention under joint throughput and Landauer limits in $\mathbf{RC}_d$}
\label{subsec:attn-joint}

We now extend the flux-constrained framework of $\mathbf{RC}_d$ to include the
\emph{thermodynamic cost of irreversibility}.  
The combination of the throughput law from \S\ref{subsec:attn-limit-rc} and
Landauer’s principle yields a joint feasibility region that simultaneously bounds
how fast information can be transmitted \emph{and} how fast it can be erased.

\paragraph{From flux conservation to thermodynamic irreversibility.}
Theorem~\ref{thm:width-tight} established that, for any physically realizable circuit,
the rate of energy or information flow through a causal surface obeys
$\dot Q(t)\propto \mathcal{O}(t^{d-1})$, reflecting the $\mathrm{Area}(\partial B_{r(t)})$
dependence of the flux.  
If this flux carries thermal energy, then conservation and the second law jointly imply that the rate
of logical erasure---that is, the destruction of information---is limited by the same geometric law.
This insight allows us to connect the kinematic RC bound to the thermodynamic Landauer limit.

\paragraph{Landauer (rate of erasure).}
Let $E(t)$ denote the cumulative number of \emph{logically irreversible} bit erasures performed by a
realizable circuit up to physical time $t$, and let $Q(t)$ be the total heat dissipated into the
environment at uniform temperature $T_{\mathrm{env}}$.  
Landauer’s principle gives the fundamental inequality:
\[
Q(t)\ \ge\ k_B T_{\mathrm{env}}\ln 2 \cdot E(t).
\]
By the flux constraint of Theorem~\ref{thm:width-tight}, the instantaneous heat flux across a causal
surface $\partial B_{r(\tau)}$ with $r(\tau)=c(\tau-t_0)$ is bounded by
$\eta_d\,r(\tau)^{d-1}$ per unit time, for a geometry- and material-dependent constant $\eta_d>0$.
Integrating this bound yields the following.

\begin{theorem}[Landauer throughput bound for $\mathbf{RC}_d(t)$]
\label{thm:landauer-throughput}
For any $t\ge t_0$,
\[
E(t)\ \le\ \frac{1}{k_B T_{\mathrm{env}}\ln 2}\,
\int_{t_0}^{t}\!\!\eta_d\,r(\tau)^{d-1}\,d\tau
\ =\ \Theta\!\big((t-t_0)^{d}\big).
\]
In particular, the \emph{erasure rate} satisfies
\[
\dot E(\tau)\ \le\ \frac{\eta_d}{k_B T_{\mathrm{env}}\ln 2}\, r(\tau)^{d-1}
=\mathcal{O}(\tau^{d-1}),
\]
i.e., the rate of irreversible bit erasure is bounded by the causal surface area growth.
\end{theorem}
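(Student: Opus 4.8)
The plan is to chain three ingredients that are already established: Landauer's principle in its definitional form $Q(t)\ge k_B T_{\mathrm{env}}\ln 2\cdot E(t)$, the surface-flux conservation law of Theorem~\ref{thm:width-tight} applied to the flow of \emph{thermal} energy rather than abstract information, and an elementary power-rule integration. No new machinery is needed; the work is entirely in identifying the correct instantiation of the flux law and assembling the pieces.

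First I would fix the physical interpretation. Take the density $\rho$ in the continuity equation~\eqref{eq:continuity} to be the local heat-energy density and $v$ the local energy-current velocity, so that $\|v(\cdot,\tau)\|_\infty\le c$ by the finite-signal-speed hypothesis and $\|\rho(\cdot,\tau)\|_\infty\le\rho_{\max}$ by finite information/energy density. Because all heat produced inside the moving causal region $B_{r(\tau)}$ must leave through its boundary, the instantaneous dissipation rate $\dot Q(\tau)$ into the environment is bounded by the outward normal flux through $\partial B_{r(\tau)}$. Theorem~\ref{thm:width-tight} (with the Reynolds-transport identity of Lemma~\ref{lem:reynolds-tight}) then yields $\dot Q(\tau)\le \eta_d\,r(\tau)^{d-1}$, where $\eta_d$ absorbs $\rho_{\max}$, the tight kinematic factor $\bigl(c+|r'(\tau)|\bigr)$, and the surface-area constant of the unit $(d-1)$-sphere; under non-negative Ricci curvature the same bound holds with the Bishop--Gromov area estimate in place of the Euclidean one.

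Next I would integrate. Taking $Q(t_0)=0$ (no heat has yet been dissipated, or equivalently tracking increments), integrating the differential bound gives $Q(t)\le\int_{t_0}^{t}\eta_d\,r(\tau)^{d-1}\,d\tau$. Combining with Landauer's inequality and dividing by $k_B T_{\mathrm{env}}\ln 2$ produces the first displayed bound on $E(t)$. Substituting $r(\tau)=c(\tau-t_0)$ makes the integral equal to $\eta_d c^{d-1}\int_{t_0}^{t}(\tau-t_0)^{d-1}\,d\tau=\tfrac{\eta_d c^{d-1}}{d}(t-t_0)^{d}$, which is exactly $\Theta\!\big((t-t_0)^d\big)$; order-optimality of the underlying flux law (Proposition~\ref{prop:achieve} and Corollary~\ref{cor:order-optimal}) confirms the exponent cannot be lowered. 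The erasure-rate statement is then immediate, since $\dot E(\tau)\le\dot Q(\tau)/(k_B T_{\mathrm{env}}\ln 2)\le\frac{\eta_d}{k_B T_{\mathrm{env}}\ln 2}\,r(\tau)^{d-1}=\mathcal{O}(\tau^{d-1})$.

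The main obstacle is conceptual rather than computational: justifying that the general continuity-equation bound of Theorem~\ref{thm:width-tight} legitimately transfers to heat flux---that dissipated thermal energy is a bounded-density quantity advected at speed at most $c$, and that no heat can be indefinitely ``stored'' inside the expanding causal ball without eventually crossing $\partial B_{r(\tau)}$. Once one accepts this modelling assumption (the same one underpinning Propositions~\ref{prop:vol}--\ref{prop:width}), each remaining step---the divergence theorem, Landauer's principle, and the integral of $\tau^{d-1}$---is routine.
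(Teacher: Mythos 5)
Your proposal is correct and follows essentially the same route as the paper: combine Landauer's inequality $Q(t)\ge k_B T_{\mathrm{env}}\ln 2\cdot E(t)$ (applied incrementally to erasure events) with the surface-flux bound $\dot Q(\tau)\le\eta_d\,r(\tau)^{d-1}$ from Theorem~\ref{thm:width-tight}, then integrate with $r(\tau)=c(\tau-t_0)$ to obtain the $\Theta\!\big((t-t_0)^d\big)$ bound and the rate statement. Your extra remarks on justifying the heat-flux instantiation and on order-optimality are harmless additions but not needed beyond the paper's own argument.
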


\paragraph{Combining throughput and erasure.}
We now combine the boundary-limited \emph{throughput} constraint
(Theorem~\ref{thm:attn-limit-rc}) with the \emph{irreversibility} constraint
(Theorem~\ref{thm:landauer-throughput}).
Let $r(t)=c(t-t_0)$, spatial dimension $d\ge 1$, per-head channel capacity $C_{\mathrm{head}}$,
and at most $\kappa H$ cross-boundary head-links per site.
Constants $K_d$ and $\eta_d$ capture geometric and material properties.

\begin{definition}[Erasure demand of a computation]
For a computation up to time $T$, let $E_{\mathrm{req}}(T)$ denote the minimal number of bits that
must be \emph{logically irreversibly erased} (by overwriting, clipping, or forgetting) to achieve
error $\le \varepsilon$ under the input distribution or embedding.
\end{definition}

\begin{theorem}[Joint feasibility region for attention in $\mathbf{RC}_d$]
\label{thm:attn-joint-feasible}
Any realizable attention computation that halts by time $T$ must satisfy \emph{both}
\begin{align}
\text{\emph{(Throughput / cut-set)}}\qquad &
I^\star(\varepsilon;n)\ \le\ K_d\, C_{\mathrm{head}}\,\kappa H\, T^{d}, \label{eq:joint-throughput}\\[4pt]
\text{\emph{(Landauer / erasure)}}\qquad &
E_{\mathrm{req}}(T)\ \le\ \frac{\eta_d}{k_B T_{\mathrm{env}}\ln 2}\, T^{d}. \label{eq:joint-landauer}
\end{align}
Equivalently, the minimal feasible physical time obeys the two-axis scaling law:
\[
T\ \ge\ \max\!\left\{
\left(\frac{I^\star(\varepsilon;n)}{K_d\,C_{\mathrm{head}}\,\kappa H}\right)^{1/d},\quad
\left(\frac{k_B T_{\mathrm{env}}\ln 2}{\eta_d}\,E_{\mathrm{req}}(T)\right)^{1/d}
\right\}.
\]
\end{theorem}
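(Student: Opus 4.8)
The plan is to obtain Theorem~\ref{thm:attn-joint-feasible} by assembling the two boundary-limited constraints already in hand—one information-theoretic (Theorem~\ref{thm:attn-limit-rc}), one thermodynamic (Theorem~\ref{thm:landauer-throughput})—and then arguing that, since they constrain \emph{distinct} codimension-one surfaces, both must hold simultaneously for any correct computation that halts by time $T$. The conjunction of the two, each rearranged into a lower bound on $T$, immediately yields the stated $\max$.

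First I would establish~\eqref{eq:joint-throughput}. This is exactly the contrapositive form of Theorem~\ref{thm:attn-limit-rc}: by Definition~\ref{def:cross-cut-demand}, a computation achieving error $\le\varepsilon$ must push at least $I^\star(\varepsilon;n)$ bits of mutual information across the cut $\Gamma$ during $[t_0,T]$, while Lemma~\ref{lem:rc-cut-capacity}—itself a time integral of the $W(t)=\Theta(t^{d-1})$ width bound of Proposition~\ref{prop:width}—caps the available cross-cut capacity by time $T$ at $K_d\,C_{\mathrm{head}}\,\kappa H\,T^{d}$. Chaining the two gives $I^\star(\varepsilon;n)\le K_d\,C_{\mathrm{head}}\,\kappa H\,T^{d}$, which is~\eqref{eq:joint-throughput}; solving for $T$ produces the first entry of the $\max$.

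Next I would establish~\eqref{eq:joint-landauer}. Let $E(T)$ be the number of logically irreversible erasures the computation actually performs; since it is correct, $E(T)\ge E_{\mathrm{req}}(T)$ by definition of the erasure demand. Theorem~\ref{thm:landauer-throughput} bounds $E(T)\le \frac{1}{k_BT_{\mathrm{env}}\ln2}\int_{t_0}^{T}\eta_d\,r(\tau)^{d-1}\,d\tau$; substituting $r(\tau)=c(\tau-t_0)$ and evaluating the elementary integral gives $\eta_d c^{d-1}(T-t_0)^{d}/d$, so absorbing the harmless factor $c^{d-1}/d$ into $\eta_d$ and translating so that $t_0=0$ without loss of generality yields $E_{\mathrm{req}}(T)\le E(T)\le \frac{\eta_d}{k_BT_{\mathrm{env}}\ln2}\,T^{d}$, which is~\eqref{eq:joint-landauer}. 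Rearranging gives the second entry of the $\max$, and taking the larger of the two lower bounds on $T$ finishes the argument.

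The step meriting the most care is the independence claim: the throughput constraint governs mutual information crossing an \emph{internal} cut $\Gamma$, whereas the Landauer constraint governs heat flux through the \emph{outer} causal surface $\partial B_{r(t)}$. Because these are different surfaces, the computation cannot reallocate one budget to the other, so~\eqref{eq:joint-throughput} and~\eqref{eq:joint-landauer} are genuinely separate necessary conditions (both inheriting the $t^{d-1}$ surface scaling, but not competing for it)—hence the conjunction and the $\max$. A secondary subtlety is that $E_{\mathrm{req}}(T)$ may itself grow with $T$; the theorem is therefore phrased as an implicit feasibility relation, and I would make no attempt to solve it in closed form, since establishing the implication is all that is needed.
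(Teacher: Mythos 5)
Your proposal is correct and follows essentially the same route as the paper: the throughput inequality is obtained by chaining the cross-cut demand against Lemma~\ref{lem:rc-cut-capacity} (i.e., Theorem~\ref{thm:attn-limit-rc}), and the erasure inequality by combining Landauer's principle with the boundary heat-flux limit—your version routes through the already-integrated Theorem~\ref{thm:landauer-throughput}, while the paper combines $Q(T)\ge k_B T_{\mathrm{env}}\ln 2\cdot E_{\mathrm{req}}(T)$ with the flux bound of Theorem~\ref{thm:width-tight} directly, a purely cosmetic difference. Your extra remarks (the explicit $E_{\mathrm{req}}(T)\le E(T)$ step and the observation that the two constraints concern different surfaces, though the conjunction of two separately necessary conditions needs no independence argument) are sound and do not change the argument.
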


\begin{proof}
\eqref{eq:joint-throughput} follows from
Lemma~\ref{lem:rc-cut-capacity}/Theorem~\ref{thm:attn-limit-rc}.  
For \eqref{eq:joint-landauer}, Landauer gives
$Q(T)\ge k_B T_{\mathrm{env}}\ln 2\cdot E_{\mathrm{req}}(T)$,
while the flux constraint limits total heat export to
$Q(T)\le \int_{t_0}^T \eta_d\,\mathrm{Area}(\partial B_{r(t)})\,dt=\Theta(\eta_d T^d)$
by Theorem~\ref{thm:width-tight}.  
Combining the two and rearranging yields the claim.
\end{proof}

\begin{remark}[Universality across attention mechanisms]
Theorem~\ref{thm:attn-joint-feasible} depends only on (i) finite-capacity attention channels
per head across the causal boundary and (ii) the thermodynamic cost of irreversible steps.
It is therefore independent of the functional form of attention (softmax, linear, kernelized,
sparse, etc.).  
All realizable architectures must obey the same $(1/d)$ geometric exponent.
\end{remark}

\paragraph{Canonical instantiations.}

\begin{corollary}[DISJ (high-bandwidth) + Landauer]
For Construction~\ref{constr:disj} (two-block $\mathrm{DISJ}$), $I^\star(\varepsilon;n)\ge c_0 m$
with $n=2m$.  
Any realizable attention computation must therefore satisfy
\[
T\ \ge\ \max\!\left\{
\Omega\!\left(
\Big(\tfrac{m}{C_{\mathrm{head}}\kappa H}\Big)^{1/d}
\right),\quad
\Omega\!\left(
\Big(\tfrac{k_B T_{\mathrm{env}}\ln 2}{\eta_d}\,E_{\mathrm{req}}(T)\Big)^{1/d}
\right)
\right\}.
\]
If the computation also \emph{forgets or compresses} $\Theta(m)$ bits en route,
then $E_{\mathrm{req}}(T)=\Theta(m)$ introduces an independent $\Omega(m^{1/d})$
Landauer-limited time term.
\end{corollary}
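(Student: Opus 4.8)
The plan is to derive the bound as a direct instantiation of the joint feasibility region (Theorem~\ref{thm:attn-joint-feasible}), using the communication-complexity lower bound for set disjointness to pin down the cross-cut information demand. First I would establish $I^\star(\varepsilon;n)\ge c_0 m$: by Construction~\ref{constr:disj}, the input is embedded with bounded areal density so that $x\in\{0,1\}^m$ lies strictly on one side of the cut $\Gamma$ and $y\in\{0,1\}^m$ strictly on the other, and the required output is $\mathrm{DISJ}_m(x,y)$. An observer on the $x$-side and one on the $y$-side must jointly decide disjointness with error $\le\varepsilon\le 1/3$, so the randomized two-party communication complexity across $\Gamma$ is $\Omega(m)$; equivalently, the information complexity of $\mathrm{DISJ}_m$ under a suitable hard distribution is $\Omega(m)$, forcing the mutual information that traverses $\Gamma$ during $[t_0,T]$ to be at least $c_0 m$ for an absolute constant $c_0>0$. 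This is precisely $I^\star(\varepsilon;n)$ in the sense of Definition~\ref{def:cross-cut-demand}, as already recorded in Corollary~\ref{cor:disj-attn}.

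Next I would feed this into Theorem~\ref{thm:attn-joint-feasible}, which states that any realizable attention computation halting by time $T$ satisfies simultaneously $I^\star(\varepsilon;n)\le K_d\,C_{\mathrm{head}}\,\kappa H\,T^d$ and $E_{\mathrm{req}}(T)\le \frac{\eta_d}{k_B T_{\mathrm{env}}\ln 2}\,T^d$, hence $T\ge\max\{(I^\star/(K_d C_{\mathrm{head}}\kappa H))^{1/d},\,((k_B T_{\mathrm{env}}\ln 2/\eta_d)E_{\mathrm{req}}(T))^{1/d}\}$. Substituting $I^\star(\varepsilon;n)\ge c_0 m$ into the first branch yields $T\ge\Omega((m/(C_{\mathrm{head}}\kappa H))^{1/d})$; the second branch is carried over verbatim, and taking the maximum gives the displayed inequality.

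For the final sentence, I would observe that if the computation logically irreversibly discards $\Theta(m)$ bits en route---for instance overwriting the $x$-block registers after forming the coordinatewise ANDs, or collapsing the $\Theta(m)$ intermediate affinity values down to the single output bit---then by definition $E_{\mathrm{req}}(T)=\Theta(m)$. Plugging this into the Landauer branch of the maximum gives $T\ge\Omega\big((\tfrac{k_B T_{\mathrm{env}}\ln 2}{\eta_d}\,m)^{1/d}\big)=\Omega(m^{1/d})$ after absorbing the temperature and material constants, an obstruction independent of the throughput term. Since both lower bounds constrain the same $T$, the overall bound is their maximum, completing the corollary.

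The only genuinely delicate point is matching the combinatorial disjointness primitive to the geometric cut: one must verify that Construction~\ref{constr:disj}'s areal-density-bounded embedding actually forces every one of the $\Omega(m)$ bits of disjointness information through the single cut $\Gamma$---that the realized circuit cannot route the $x$/$y$ interaction around $\Gamma$---and that it is the \emph{information} complexity (not merely communication) lower bound that licenses the mutual-information statement demanded by Definition~\ref{def:cross-cut-demand}. Everything downstream is substitution into Theorem~\ref{thm:attn-joint-feasible}.
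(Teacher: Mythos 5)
Your proposal is correct and matches the paper's (implicit) argument: the corollary is obtained exactly by plugging the $I^\star(\varepsilon;n)\ge c_0 m$ bound from the DISJ construction (Corollary~\ref{cor:disj-attn}) into the two branches of Theorem~\ref{thm:attn-joint-feasible}, and noting that forgetting $\Theta(m)$ bits forces $E_{\mathrm{req}}(T)=\Theta(m)$ and hence an independent $\Omega(m^{1/d})$ Landauer term. Your closing caveat about verifying that the embedding forces all disjointness information through the cut $\Gamma$ is a reasonable observation, but it is the same assumption the paper builds into Construction~\ref{constr:disj} and Definition~\ref{def:cross-cut-demand}, so no new work is needed.
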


\begin{corollary}[Attention with thresholding or pruning]
\label{cor:attn-threshold}
Consider an attention layer on $n$ tokens where weights below $\varepsilon\in(0,1/2)$ are
pruned or clipped.  
At least $E_{\mathrm{layer}}\ge c_1\, n\log(1/\varepsilon)$ bits of accessible information are
irreversibly erased per layer.  
Over $L$ layers, $E_{\mathrm{req}}(T)\ge c_1 L n\log(1/\varepsilon)$, yielding
\[
T\ \ge\ \Omega\!\left(
\Big(\tfrac{k_B T_{\mathrm{env}}\ln 2}{\eta_d}\,L n\log(1/\varepsilon)\Big)^{1/d}
\right).
\]
This lower bound holds in \emph{addition} to the throughput limit from $I^\star(\varepsilon;n)$;
the active constraint is the larger of the two.
\end{corollary}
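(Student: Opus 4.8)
The plan is to prove the corollary in three stages: (i) lower-bound the information irreversibly erased by a single pruned attention layer, (ii) aggregate this over the $L$ layers, and (iii) feed the resulting erasure budget into the Landauer throughput bound of Theorem~\ref{thm:landauer-throughput} (equivalently, the erasure branch of Theorem~\ref{thm:attn-joint-feasible}) to extract a lower bound on the physical time $T$.

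For step (i), I would model the pre-pruning state of a layer as $n$ attention rows $a_{1\cdot},\dots,a_{n\cdot}$, each a point on the probability simplex whose entries are physically resolved to precision $\varepsilon$. Clipping or zeroing every entry below $\varepsilon$ is a many-to-one map whose fiber over a given pruned output contains all configurations agreeing with it on the above-threshold support; a packing/counting estimate shows each fiber has cardinality at least $(1/\varepsilon)^{\Omega(1)}$ per row, since each row carries $\Omega(\log(1/\varepsilon))$ bits of sub-threshold detail that the map destroys. Summing over the $n$ rows, the accessible information irreversibly erased per layer is $E_{\mathrm{layer}}\ge c_1 n\log(1/\varepsilon)$ for an absolute constant $c_1>0$. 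I expect this step to be the main obstacle: it forces a commitment to a concrete, defensible notion of ``accessible information'' in a physical attention implementation, and one must argue that the erased sub-threshold detail is genuinely distinguishable (not already redundant) before pruning. Once a suitable model is fixed, the bound is a routine counting argument.

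For steps (ii) and (iii): erasures performed in distinct layers correspond to disjoint physical write events within the expanding causal region, so the dissipated heat is additive, giving cumulative erasure count $E(T)\ge c_1 L n\log(1/\varepsilon)$ (equivalently $E_{\mathrm{req}}(T)\ge c_1 L n\log(1/\varepsilon)$). Theorem~\ref{thm:landauer-throughput} bounds the cumulative erasures of any realizable circuit by $E(T)\le \tfrac{\eta_d}{k_B T_{\mathrm{env}}\ln 2}\,T^d$, so combining the two yields
\[
c_1\,L\,n\log(1/\varepsilon)\ \le\ \frac{\eta_d}{k_B T_{\mathrm{env}}\ln 2}\,T^{d}
\quad\Longrightarrow\quad
T\ \ge\ \Big(\tfrac{k_B T_{\mathrm{env}}\ln 2}{\eta_d}\,c_1\,L\,n\log(1/\varepsilon)\Big)^{1/d},
\]
which is the asserted bound after absorbing $c_1$ into $\Omega(\cdot)$. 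Finally, the throughput constraint $I^\star(\varepsilon;n)\le K_d C_{\mathrm{head}}\kappa H\,T^d$ of Theorem~\ref{thm:attn-limit-rc} is a logically independent necessary condition, arising from cross-cut channel capacity rather than heat export; since neither $T^d$-type constraint dominates the other in general, the true lower bound on $T$ is the maximum of the two expressions, as stated.
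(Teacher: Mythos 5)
Your proposal matches the paper's own justification: it asserts that each pruned token loses $\Omega(\log(1/\varepsilon))$ bits of accessible information, sums over $n$ tokens and $L$ layers, and then inverts the Landauer throughput bound of Theorem~\ref{thm:landauer-throughput} ($E(T)\le \tfrac{\eta_d}{k_B T_{\mathrm{env}}\ln 2}\,T^{d}$) to obtain the stated $T^{1/d}$-type lower bound, with the throughput constraint kept as an independent second branch. Your packing/counting elaboration of the per-token $\Omega(\log(1/\varepsilon))$ step is merely a more explicit version of what the paper states in one line, so the approach is essentially identical.
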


\begin{proof}[Justification sketch for Cor.~\ref{cor:attn-threshold}]
Each pruned probability mass $\le \varepsilon$ removes
$\Omega(\log(1/\varepsilon))$ bits of accessible information about that token.
Summing over $n$ tokens gives $E_{\mathrm{layer}}\ge c_1\,n\log(1/\varepsilon)$ up to constants.
Applying Theorem~\ref{thm:landauer-throughput} and integrating the flux
($\dot E(t)\sim t^{d-1}$) recovers the $T^d$ scaling.
\end{proof}

\paragraph{Interpretation.}
Throughput and irreversibility define \emph{orthogonal but coupled} boundary-limited constraints:
\begin{itemize}
\item \textbf{Throughput (cut-set):} limits the rate at which \emph{useful information} can be exchanged across the causal boundary:
      \[
      I^\star(\varepsilon;n)\ \lesssim\ C_{\mathrm{head}}\kappa H\,T^d.
      \]
\item \textbf{Landauer (erasure):} limits the rate at which \emph{discarded information} can be
      exported as heat:
      \[
      E_{\mathrm{req}}(T)\ \lesssim\ (\eta_d/(k_B T_{\mathrm{env}}\ln 2))\,T^d.
      \]
\end{itemize}
Hence attention is doubly boundary-limited: having more heads allows information to move faster,
but not to be forgotten faster.  
Neither head multiplicity $H$ nor algorithmic design alters the geometric scaling exponent $1/d$,
which arises from the same underlying conservation law governing realizable circuits.

\section{Future Extensions}\label{sec:extensions}

The framework of $\mathbf{RC}_d$ is not confined to Boolean or quantum circuits; it provides a general account of \emph{physically realizable information processing}.  
Because its axioms arise from geometry (finite density), causality (bounded propagation), and thermodynamics (finite entropy flux), the same bounds apply to any substrate that computes through physical interaction and state evolution. 

\subsection{On extension to quantum circuits}\label{sec:quantum}

The preceding sections characterize $\mathbf{RC}_d$ as the family of all computations
realizable within a finite causal region under bounded signal speed, gate density,
and information flux.  This description is agnostic to whether the underlying physical
dynamics are classical or quantum.  To relate the model to the standard complexity
class $\mathbf{BQP}$, we must extend it to include unitary
state evolution and measurement.

\begin{definition}[Quantum-realizable circuits]
Let $\mathcal{H}$ denote a finite-dimensional Hilbert space and
$\mathsf{U}(\mathcal{H})$ the group of unitary operators on it.
A \emph{quantum-realizable circuit} $C$ is a collection of
unitary gates $\{U_i\}_{i=1}^m$ acting on $k=O(1)$ qubits each,
arranged within a causal region $\mathcal{C}_t(\mathbf{0})$
of a $d$-dimensional space as in Definition~\ref{def:causal}.
The circuit acts on an initial product state
$\ket{\psi_0} \in (\mathbb{C}^2)^{\otimes n}$ and produces
a final distribution over measurement outcomes
\[
p(x) = \|\bra{x}U_m\cdots U_1\ket{\psi_0}\|^2.
\]
We write $C\in \mathbf{QRC}_d(t(n))$ if its causal diameter and evolution
time are bounded by $\mathcal{O}(t(n))$ and its gate count by $\mathcal{O}(t(n)^d)$.
\end{definition}

The class $\mathbf{QRC}_d$ inherits all geometric constraints of
$\mathbf{RC}_d$ but restricts gates to be unitary and interactions to
preserve local coherence.  It therefore represents the physically
realizable subset of quantum circuits consistent with $d$-dimensional
causality.

\begin{proposition}[Complexity containment]
\label{prop:bqp-rc}
For any fixed $d\ge 1$,
\[
\mathbf{QRC}_d(\mathrm{poly}) \ \subseteq\  \mathbf{BQP}
\quad\text{and}\quad
\mathbf{RC}_d(\mathrm{poly}) \ \subseteq\  \mathbf{P/poly}.
\]
\end{proposition}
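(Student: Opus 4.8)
The plan is to handle the two containments separately, since the second is essentially already in hand. For $\mathbf{RC}_d(\mathrm{poly})\subseteq\mathbf{P/poly}$ I would simply invoke Theorem~\ref{thm:rc-in-size}: conditions \textbf{(S)} and \textbf{(G)} of Definition~\ref{def:rc-d} say that every $f\in\mathbf{RC}_d(\mathrm{poly})$ is computed by a family $\{C_n\}$ of bounded-fan-in Boolean circuits of size $\mathcal{O}(t(n)^d)=n^{\mathcal{O}(1)}$, which is exactly a nonuniform polynomial-size circuit family. No further argument is needed beyond observing that bounded fan-in is a special case of the general $\mathbf{P/poly}$ model.

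For $\mathbf{QRC}_d(\mathrm{poly})\subseteq\mathbf{BQP}$ the key observation is that the geometric packaging of $\mathbf{QRC}_d$ is a restriction, not a resource: once it is stripped away, a circuit in $\mathbf{QRC}_d(n^k)$ is just a quantum circuit of $m=\mathcal{O}(n^{kd})$ gates, each acting on $O(1)$ qubits, which is precisely the defining object of $\mathbf{BQP}$. First I would standardize the gate set: by the Solovay--Kitaev theorem each $O(1)$-qubit unitary $U_i$ can be approximated to operator-norm error $\delta$ by a word of length $\mathrm{polylog}(1/\delta)$ over a fixed universal basis such as $\{H,\mathrm{CNOT},T\}$; choosing $\delta=1/(10m)$ keeps the total accumulated error below a fixed constant while the gate count stays $m\cdot\mathrm{polylog}(m)=n^{\mathcal{O}(1)}$. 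Next I would note that the spatial and causal constraints of $\mathbf{QRC}_d$ only shrink the admissible set of circuits and never add power, so nothing has to be done to ``undo'' them. Finally I would check the output semantics: the measurement rule $p(x)=\|\bra{x}U_m\cdots U_1\ket{\psi_0}\|^2$ together with the stipulation that $\{C_n\}$ \emph{computes} $f$ already provides a bounded-error decision rule, and the $\mathcal{O}(1)$ approximation error can be absorbed (or driven down by a constant number of repetitions) so that acceptance probability is $\ge 2/3$ on yes-instances and $\le 1/3$ on no-instances.

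Assembling these steps will show that $f$ is decided by a polynomial-size, bounded-error quantum circuit family over a fixed universal gate set. If $\{C_n\}$ is uniform---the standard convention for $\mathbf{BQP}$, and the natural reading under the local-uniformity notion of Definition~\ref{def:local-uniform}---then the Solovay--Kitaev decompositions and the layout data are themselves efficiently generable, giving $f\in\mathbf{BQP}$. The main obstacle I anticipate is not the geometry at all but this bookkeeping: keeping the classical description of the gate entries efficiently computable (otherwise one lands only in $\mathbf{BQP}/\mathrm{poly}$, with the circuit description as advice), and pinning down which uniformity convention makes the ``$\mathbf{BQP}$'' on the right-hand side literally correct. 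Both are handled by the conventions already adopted for $\mathbf{RC}_d$, so once those are made explicit the argument should go through with no further difficulty.
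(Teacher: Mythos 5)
Your proposal is correct and follows the same basic route as the paper: both containments are treated as definitional unwrappings, with the geometric constraints of $\mathbf{RC}_d$/$\mathbf{QRC}_d$ viewed as restrictions that can simply be discarded. The paper's own proof is terser than yours: for the quantum inclusion it just asserts that each $\mathbf{QRC}_d(\mathrm{poly})$ family ``is a uniform family of polynomial-size, polynomial-time unitary circuits with bounded fan-in and measurement at the output,'' and for the classical inclusion it notes the circuits are bounded fan-in, polynomial-size Boolean circuits with the realization parameters as advice. Your version supplies two ingredients the paper skips over, and both are genuinely needed if one wants the first inclusion to be airtight: (i) the gate-set standardization via Solovay--Kitaev, since the definition of $\mathbf{QRC}_d$ permits arbitrary $O(1)$-qubit unitaries while $\mathbf{BQP}$ is defined over a fixed universal basis (and arbitrary real entries could otherwise smuggle in nonuniform information); and (ii) the uniformity caveat, which you are right to flag --- the paper's definition of $\mathbf{QRC}_d$ imposes no uniformity condition (and the paper elsewhere defaults to nonuniform classes), so without an explicit uniformity assumption the honest conclusion is $\mathbf{BQP}/\mathrm{poly}$ rather than $\mathbf{BQP}$; the paper resolves this only by asserting uniformity in the proof. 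So your argument is, if anything, a more careful rendering of the same proof, and your identification of the uniformity convention as the one real point of friction is accurate.
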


\begin{proof}
($\subseteq$ first inclusion.)  
Each $\mathbf{QRC}_d(\mathrm{poly})$ family is a uniform family of
polynomial-size, polynomial-time unitary circuits with bounded fan-in and
measurement at the output.  This is the standard definition of $\mathbf{BQP}$.

($\subseteq$ second inclusion.)  
By construction, every $\mathbf{RC}_d(\mathrm{poly})$ circuit is a bounded fan-in,
polynomial-size Boolean circuit with nonuniform advice strings given by its
realization parameters, hence contained in $\mathbf{P/poly}$.
\end{proof}

Although the geometric constraints on both classes coincide, their internal
state spaces differ.  Quantum gates evolve vectors in $\mathbb{C}^{2^n}$,
whereas classical gates evolve configurations in $\{0,1\}^n$.
Thus the ratio of states under the same causal geometry scales as
\[
\frac{\dim(\mathcal{H}_{\text{quantum}})}{\dim(\mathcal{H}_{\text{classical}})}
= 2^n : n,
\]
implying an exponential expansion in representational capacity
within identical spatial–temporal bounds.

\begin{theorem}[Geometric envelope of quantum and classical parallelism]
\label{thm:quantum-envelope}
For any physically realizable $d$,
\[
\mathbf{RC}_d(\mathrm{poly})
\;\subseteq\;
\mathbf{QRC}_d(\mathrm{poly})
\;\subseteq\;
\mathbf{P/poly},
\]
but the inclusion $\mathbf{RC}_d(\mathrm{poly})\subseteq\mathbf{QRC}_d(\mathrm{poly})$
may be strict if coherence or interference effects allow superpolynomial
compression of information within the same geometric bounds.
\end{theorem}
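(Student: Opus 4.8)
The plan is to treat the two inclusions separately and then record why only the hedged ``may be strict'' can be asserted for the first. The left inclusion $\mathbf{RC}_d(\mathrm{poly})\subseteq\mathbf{QRC}_d(\mathrm{poly})$ is the routine direction, obtained by embedding a classical circuit into a unitary one while preserving the geometric budgets; the right inclusion $\mathbf{QRC}_d(\mathrm{poly})\subseteq\mathbf{P/poly}$ is where the real content---and the main obstacle---lies.

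For the left inclusion, take $f\in\mathbf{RC}_d(\mathrm{poly})$ realized by $\{C_n\}$ over a universal fan-in-$2$ basis with $|C_n|=O(t(n)^d)$, $w(C_n)=O(t(n)^{d-1})$, and $t(n)=n^{O(1)}$ (Definition~\ref{def:rc-d}). First I would \emph{reversibilize} each $C_n$: replace every gate by its Toffoli implementation on $O(1)$ ancilla qubits, so each original gate becomes $O(1)$ unitaries acting on $O(1)$ qubits. This inflates both the gate count and the number of ancilla/garbage wires by only a constant factor, keeping each $O(t(n)^d)$. Since the total qubit count stays within a constant factor of $|C_n|$, the sphere-packing estimate underlying Corollary~\ref{cor:circuit-size} still embeds all data and ancilla qubits inside a causal ball of radius $O(t(n))$, so causal diameter and evolution time remain $O(t(n))$; every topological cut carries at most a constant factor more wires than the corresponding cut of $C_n$, so the width stays $O(t(n)^{d-1})$; and the Toffoli gate has fan-in $3=O(1)$. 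Measuring the designated output qubit in the computational basis returns $f_n$ with probability $1$. Hence the resulting unitary family meets the $\mathbf{QRC}_d$ constraints and $f\in\mathbf{QRC}_d(\mathrm{poly})$.

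For the right inclusion, Proposition~\ref{prop:bqp-rc} already gives $\mathbf{QRC}_d(\mathrm{poly})\subseteq\mathbf{BQP}$, and unconditionally this yields $\mathbf{QRC}_d(\mathrm{poly})\subseteq\mathbf{PP}\subseteq\mathbf{PSPACE}$. The step I expect to be the genuine obstacle is upgrading this to $\mathbf{P/poly}$: the natural route is nonuniform---feed the polynomial-size description of $C_n$ as advice and have a deterministic machine evaluate the computation---but a fully rigorous derivation is tantamount to the nonuniform simulability of bounded-error polynomial-time quantum computation, a $\mathbf{BQP}\subseteq\mathbf{P/poly}$-type statement not available by elementary means. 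I would therefore present the $\mathbf{P/poly}$ inclusion either under the standing efficient-simulation hypothesis adopted in the surrounding development, or after restricting $\mathbf{QRC}_d$ to a regime where a width-bounded classical simulation is genuinely polynomial (e.g.\ exact computation with output probabilities in $\{0,1\}$, or one-dimensional / low-entanglement geometries where the induced tensor network has polynomial bond dimension and contracts in polynomial time). In any case the weaker, always-valid half $\mathbf{RC}_d(\mathrm{poly})\subseteq\mathbf{P/poly}$ is immediate from Theorem~\ref{thm:rc-in-size}.

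Finally, the strictness clause is a caveat rather than a theorem, and I would argue it directly from the representational-capacity count noted just before the statement: within one and the same causal geometry the quantum state space $\mathbb{C}^{2^n}$ carries exponentially more amplitude degrees of freedom than the classical configuration space $\{0,1\}^n$ (ratio $2^n:n$), so interference over these amplitudes could in principle realize a task whose classical implementation would overrun the $O(t^d)$ size or $O(t^{d-1})$ width budgets of $\mathbf{RC}_d$. Conversely, a proof that $\mathbf{RC}_d(\mathrm{poly})=\mathbf{QRC}_d(\mathrm{poly})$ would be a $\mathbf{BQP}\subseteq\mathbf{P/poly}$-style collapse restricted to geometrically local families, which current techniques cannot deliver---hence only ``may be strict.'' To turn the hedge into an unconditional separation one would exhibit a witness: a structured problem (Simon- or Shor-flavoured) realizable by a geometrically local polynomial-size quantum circuit under the $\mathbf{QRC}_d$ embedding but provably outside polynomial-size classical circuits under the same geometric constraints.
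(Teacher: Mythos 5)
Your handling of the first inclusion is more careful than the paper's: the paper's proof simply asserts that quantum gates ``allow for more general manipulation of the states'' and concludes $\mathbf{RC}_d\subseteq\mathbf{QRC}_d$, whereas you make the embedding concrete via Toffoli-based reversibilization with $O(1)$ overhead per gate, checking that the size, width, causal-diameter and fan-in budgets survive---a genuine strengthening of the same idea. Your treatment of the strictness clause also matches the paper's in substance (the paper invokes Shor/Grover-style interference speedups; you invoke the $2^n$-versus-$n$ state-space count stated just before the theorem, plus Simon/Shor-type witnesses), and both correctly present it as a hedge rather than a proven separation.

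The real point of divergence is the right inclusion $\mathbf{QRC}_d(\mathrm{poly})\subseteq\mathbf{P/poly}$. You identify it as the genuine obstacle, note that establishing it would amount to a $\mathbf{BQP}\subseteq\mathbf{P/poly}$-type simulability claim, and offer it only under an added hypothesis or in restricted (exactly computable or low-entanglement) regimes. The paper's own proof never argues this inclusion at all: it proves only $\mathbf{RC}_d\subseteq\mathbf{QRC}_d$ and the non-collapse caveat, and Proposition~\ref{prop:bqp-rc} supplies only $\mathbf{QRC}_d(\mathrm{poly})\subseteq\mathbf{BQP}$ and $\mathbf{RC}_d(\mathrm{poly})\subseteq\mathbf{P/poly}$, which do not combine to yield the claimed containment (indeed the paper's subsequent remark about a possible $\mathbf{BQP}\not\subseteq\mathbf{P/poly}$ separation implicitly concedes this). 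So your diagnosis is correct and your refusal to assert the inclusion unconditionally is the sound position; the gap lies in the paper's statement and proof, not in your argument. If an unconditional statement is wanted, the honest fixes are exactly the ones you name, or weakening the right-hand class to $\mathbf{BQP/poly}$ (or a $\mathbf{PSPACE}$-type bound) as your simulation remark suggests.
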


\begin{proof}
Both $\mathbf{QRC}_d$ and $\mathbf{RC}_d$ obey the same causal and flux constraints;
the only fundamental difference is that quantum gates allow for more general manipulation of the states.  
Hence $\mathbf{RC}_d\subseteq\mathbf{QRC}_d$ as sets of realizable dynamics.
However, reversibility and superposition permit interference between computational
paths, which can yield algorithmic speedups (as in Shor’s and Grover’s algorithms)
not achievable by any dissipative $\mathbf{RC}_d$ model without exponential resources.
Therefore, the inclusion is not known to collapse.
\end{proof}

\begin{remark}[Toward a separation criterion]
If a separation $\mathbf{BQP}\not\subseteq\mathbf{P/poly}$ holds,
it must arise from constraints on the \emph{information flux density}
within a quantum causal region: coherence length, entanglement entropy,
or unitarity preservation under noise.
These are additional physical invariants not captured by $\mathbf{RC}_d$.
A refined ``quantum-realizable'' model $\mathbf{QRC}_d^{\lambda}$,
parametrized by a coherence length $\lambda$, could potentially quantify
the divergence between classical and quantum parallelism within a unified
causal–geometric framework.
\end{remark}

\subsection{On extension to recurrent circuits}

The preceding sections describe $\mathbf{RC}_d$ as a family of realizable circuit classes
constrained by spatial dimension $d$.  The parameter $d$ controls how much parallel
spatial expansion a computation can exploit.  The limiting case $d=1$ therefore corresponds
to a system with no spatial extension—only temporal evolution. To capture this limit, we enrich the $\mathbf{RC}_d$ model to allow \emph{recurrence}:
feedback of wire states across successive time steps within the same causal region.
Formally, recurrence extends each circuit realization from a static acyclic graph
to a discrete-time dynamical system. We denote this extension as $\mathbf{RRC}_d$.

\begin{definition}[Recurrent realization]
Let $(\mathbb{R},X,\phi)$ be a Hamiltonian system realizing a circuit $C$ as in
Definition~\ref{def:realize}.  
A \emph{recurrent realization} augments $C$ with feedback connections
$\{(w_i^{(t)},w_j^{(t+1)})\}$ such that
each state $w_j^{(t+1)}$ depends only on values within its causal past
$\mathcal{C}_t(\mathbf{q}_j,t)$.
The evolution of wire values defines a mapping
\[
F_C:\{0,1\}^n \to \{0,1\}^n, \qquad w(t+1) = F_C(w(t)).
\]
\end{definition}

\begin{theorem}[Temporal completeness at $d=1$]\label{thm:rc1-tm}
The class $\mathbf{RRC}_1(t)$—the one-dimensional realizable circuits with recurrence—
is equivalent in expressive and time complexity to deterministic Turing machines
running in time $O(t)$.
\end{theorem}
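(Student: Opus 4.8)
The plan is to prove both inclusions by explicit simulation, checking in each case that the time parameter is preserved up to a constant factor. For $\mathbf{DTIME}(O(t)) \subseteq \mathbf{RRC}_1(t)$, I would start from a single-tape deterministic machine $M$ running in time $O(t(n))$ and realize it as a recurrent $1$-dimensional circuit: lay the tape of $M$ along the spatial axis of $\mathbb{R}^1$ as a line of constant-size cell gadgets, and attach a constant-size subcircuit implementing the finite control at the current head location. The transition function is a fixed Boolean map of constant arity, hence admissible under condition (G) of Definition~\ref{def:rc-d}; one recurrence step applies it at the head cell, rewrites that cell, updates the control, and shifts the head by one cell through the feedback wires allowed by a recurrent realization. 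Since the head moves at most one cell per step, all feedback wires stay inside the causal past demanded by the definition of recurrent realization, so finite signal speed is respected. The budgets check out: only $O(t(n))$ tape cells are ever used, giving $|C_n| = O(t(n)) = O(t(n)^{1})$, hence (S); across any spatial cut only the $O(1)$ bits of head and control state ever cross, giving $w(C_n) = O(1) = O(t(n)^{0})$, hence (W); and the realization halts after exactly as many recurrence steps as $M$ takes, namely $O(t(n))$. So $L(M) \in \mathbf{RRC}_1(t)$.

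For the reverse inclusion $\mathbf{RRC}_1(t) \subseteq \mathbf{DTIME}(O(t))$, let $\{C_n\}$ be a recurrent realizable family run for $O(t(n))$ steps, computing $f$. By Corollary~\ref{cor:circuit-size} in the case $d=1$, $C_n$ has $O(t(n))$ wires, and by Theorem~\ref{thm:causal-volume} they all lie in a spatial interval of length $O(t(n))$; a Turing machine can therefore keep the whole wire vector $w(\cdot)$ on $O(t(n))$ tape cells in spatial order. Because $C_n$ is realized by a \emph{local} Hamiltonian system (Definition~\ref{def:realize}), the update $w \mapsto F_C(w)$ is computed cell-by-cell from constant-radius neighbourhoods, and the width bound (W) specialized to $d=1$ reads $w(C_n) = O(t(n)^{0}) = O(1)$, so only a constant amount of information crosses any cut per step. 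Using this I would argue that the nonquiescent part of the configuration is a bounded ``active front''; the simulating machine parks its head near that front, recomputes the affected wires from the fixed constant-size description of the update rule, advances, and so spends $O(1)$ machine steps per recurrence step, for a total of $O(t(n))$. Applying the decoder $\mathrm{Dec}$ to the designated output wire then yields $f$ within $\mathbf{DTIME}(O(t(n)))$, completing the equivalence.

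The hard part will be the linear-time accounting in this second direction. A recurrent $1$-dimensional circuit could a priori behave like a $1$-dimensional cellular automaton that rewrites $\Theta(t)$ wires in parallel each step; a one-tape machine could then follow it only by a full $\Theta(t)$-length sweep per step, costing $\Theta(t^2)$ overall and breaking the claimed equivalence. What must be nailed down is that at $d=1$ such parallelism is unavailable: the width budget $O(t^{d-1})$ collapses to $O(1)$ (Proposition~\ref{prop:width}, Theorem~\ref{thm:width-entropy}), so information cannot be delivered to more than a constant-size region per step, which forces the evolving part of the configuration to be an $O(1)$-size window drifting at bounded speed --- structurally a Turing-machine head. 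Making ``bounded active front $\Rightarrow$ $O(1)$ simulation cost per step'' rigorous --- e.g.\ by charging each rewritten wire against a flux quantum crossing a neighbouring cut and invoking the per-step width bound, or by first reducing any recurrent $\mathbf{RRC}_1$ realization to a normal form with an explicit head --- is the technical core; with that in hand, the remaining steps (encoding configurations as tape contents, compiling the constant-size update rule into the finite control, and matching halting conditions) are routine.
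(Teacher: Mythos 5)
Your first direction ($\mathbf{DTIME}(O(t))\subseteq\mathbf{RRC}_1(t)$) is essentially the paper's own argument: lay the tape along the line as constant-size cell gadgets with a constant-size control at the head, check (S), (W), (G), and simulate step-for-step. That part is fine and matches the paper.

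The gap is in the converse direction, and while your diagnosis of where the difficulty sits is sharper than the paper's, your proposed fix does not close it. You want to deduce from the $d=1$ width bound $w(C_n)=O(t^{0})=O(1)$ that only an $O(1)$-size ``active front'' can change per recurrence step. But width, as defined in Proposition~\ref{prop:width} and Definition~\ref{def:rc-d}, bounds the entropy flux across any single cut, not the total number of wires rewritten in parallel. A one-dimensional cellular automaton on $\Theta(t)$ cells, each updated every step from its two neighbours, passes only $O(1)$ bits across any given cut per step, so it satisfies (S), (W), and (G) at $d=1$ and is a legitimate recurrent realization---yet all $\Theta(t)$ cells change simultaneously. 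No charging of rewritten wires against per-cut flux can bound the per-step simulation cost by $O(1)$, and the time equivalence itself is then in jeopardy: such an automaton can, e.g., recognize palindromes in real time, whereas Hennie's crossing-sequence argument forces $\Omega(n^{2})$ on a single-tape machine. Closing the gap would require an extra hypothesis not present in the definitions (a single-active-site or explicit-head normal form, or a global per-step activity budget rather than a per-cut one). For what it is worth, the paper's own proof has the same hole: it simply asserts that each gate ``receives input solely from its immediate predecessor'' and that the chain therefore evolves sequentially, which presupposes the head-like normal form instead of deriving it; your plan at least states the missing lemma explicitly, but neither your argument nor the paper's establishes it.
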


\begin{proof}
($\subseteq$)  
Consider an $\mathbf{RRC}_1(t)$ computation realized as a chain of gates along a
one-dimensional manifold.  
Because all signal propagation is confined to the single spatial dimension,
the only causal ordering available is temporal.
Each gate at position $x_i$ receives input solely from its immediate predecessor $x_{i-1}$,
so the system evolves by applying a local transition rule
$F:\{0,1\}^k \to \{0,1\}^k$ sequentially along the chain.
This behavior is identical to a single-tape deterministic Turing machine:
the chain position corresponds to the tape index,
and each global time step corresponds to one TM transition.
The runtime of the RC system equals the number of discrete time steps,
hence $O(t)$.

($\supseteq$)  
Conversely, any deterministic Turing machine operating in time $t$ can be embedded
as an $\mathbf{RC}_1(t)$ realization by mapping the tape to a 1D lattice of gates.
Each gate holds the local symbol and state bit, and local interactions propagate
to adjacent sites at unit speed.  
The update function is recurrent and local, satisfying the $d=1$ causal constraint.
Therefore the $\mathbf{RC}_1$ model simulates any TM step-for-step in $O(t)$ time.
\end{proof}

\begin{corollary}[Dimensional completeness of $\mathbf{RRC}_d$]
The $\mathbf{RRC}_d$ hierarchy spans the physically realizable spectrum:
\[
\mathbf{RRC}_1(t) \;\equiv\; \mathbf{DTIME}(t)
\;\subseteq\;
\mathbf{RRC}_2(t) \;\subseteq\;
\cdots
\subseteq\;
\mathbf{RRC}_\infty(t),
\]
where the rightmost limit corresponds to the fully parallel, circuit-based regime
analyzed in Theorem~\ref{thm:rc-nonk}.
\end{corollary}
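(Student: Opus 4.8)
The plan is to assemble the displayed chain from results already established, in three pieces: the leftmost equivalence at $d=1$, the interior inclusions $\mathbf{RRC}_d(t)\subseteq\mathbf{RRC}_{d+1}(t)$, and the identification of the $d\to\infty$ limit with the non-recurrent parallel regime. The leftmost equivalence $\mathbf{RRC}_1(t)\equiv\mathbf{DTIME}(t)$ is precisely Theorem~\ref{thm:rc1-tm}; moreover a non-recurrent $\mathbf{RC}_1(t)$ realization is the special case of an $\mathbf{RRC}_1(t)$ realization with empty feedback, so no extra work is needed there.

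For the interior inclusions I would reuse the dimensional-embedding idea behind Lemma~\ref{lem:lower-transfer}, now adapted to the recurrent (dynamical) setting. Given a recurrent realization in $\mathbb{R}^d$, identify $\mathbb{R}^d$ with the hyperplane $\{x_{d+1}=0\}\subset\mathbb{R}^{d+1}$ and carry over every particle position, wire, feedback edge, and the Hamiltonian itself, extended trivially in the new coordinate. A causal ball $\mathcal{C}_t(\mathbf{q}_0,t_0)$ computed in $\mathbb{R}^{d+1}$, when intersected with the hyperplane, contains the original $d$-dimensional causal ball of the same radius, so every feedback dependence $w_j^{(t+1)}\in\mathcal{C}_t(\mathbf{q}_j,t)$ that held before still holds. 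The size and width budgets only relax, $\mathcal{O}(t^{d})\subseteq\mathcal{O}(t^{d+1})$ and $\mathcal{O}(t^{d-1})\subseteq\mathcal{O}(t^{d})$, and the gate basis is untouched; hence the embedded system is a legal $\mathbf{RRC}_{d+1}(t)$ realization computing the same dynamics. (As in Lemma~\ref{lem:lower-transfer}, strictness at each step could be recovered by a counting/diagonal argument against the larger width budget, but the corollary as stated only asserts $\subseteq$.)

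Finally I would take $\mathbf{RRC}_\infty(t):=\bigcup_{d\ge1}\mathbf{RRC}_d(t)$ and identify it with the fully parallel, non-recurrent $\mathbf{RC}_\infty$ regime of Section~6 (Theorems~\ref{thm:rcinfty-nc} and~\ref{thm:rc-nonk}). The key observation is that once $d$ is unbounded one can \emph{time-unroll} a recurrent realization: a computation running for $O(t)$ discrete steps is replaced by $O(t)$ stacked copies of its constant-size local transition circuit, one per step, laid out on successive $(d{-}1)$-dimensional shells of a causal ball exactly as in the proof of Theorem~\ref{thm:rcinfty-nc}. This yields an acyclic circuit of depth $O(t)$, polynomial size, bounded fan-in, and bounded cross-shell flux, i.e.\ a member of $\mathbf{RC}^{\mathrm{poly}}_\infty(O(t))$; conversely every such acyclic circuit is a degenerate recurrent one with no feedback. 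Chaining this with Theorem~\ref{thm:rc1-tm} and the interior inclusions produces the full display.

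The main obstacle I anticipate is bookkeeping the two distinct meanings of the symbol $t$ and making the $\infty$ limit genuinely well-posed. In the recurrent model $t$ counts discrete evolution steps of a \emph{fixed-size} system, whereas in Definition~\ref{def:rc-d} the parameter $t(n)$ is the time needed to \emph{extend} the circuit to input length $n$ and governs its spatial footprint through $|C_n|=\mathcal{O}(t(n)^d)$. Reconciling these — checking that the unrolled circuit's size stays polynomial in $n$ when the recurrent transition is local and constant-size, and that the shell embedding keeps the per-shell flux $\mathcal{O}(1)$ rather than inheriting the $\mathcal{O}(t^{d-1})$ width the finite-$d$ class tolerates — is where the argument needs care; the remaining steps are routine assembly of the cited results.
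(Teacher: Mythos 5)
The paper states this corollary without any proof of its own, so there is nothing to diverge from: your assembly — Theorem~\ref{thm:rc1-tm} for the $d=1$ equivalence, a dimensional-embedding argument in the spirit of Lemma~\ref{lem:lower-transfer} for the interior inclusions, and time-unrolling to identify the $d\to\infty$ limit with the acyclic circuit regime of Theorems~\ref{thm:rcinfty-nc} and~\ref{thm:rc-nonk} — is exactly the implicit route the paper intends, and you supply more detail (including the honest caveat about the two meanings of $t$ and the undefined $\mathbf{RRC}_\infty$) than the paper itself does. Your proposal is correct for the inclusions actually asserted; the "rightmost limit" identification remains informal in both your account and the paper's.
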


\begin{remark}[Interpretation]
Allowing recurrence completes the framework by connecting spatial parallelism and temporal
sequentiality within a single causal–geometric model.
At $d=1$, computation degenerates to purely temporal state transitions—the Turing limit.
As $d$ increases, more spatial degrees of freedom permit parallel signal propagation,
eventually recovering the circuit hierarchy of $\mathbf{NC}$ and $\mathbf{P/poly}$.
Thus $\mathbf{RRC}_d$ provides a continuous bridge between the Turing and circuit paradigms,
governed solely by the geometry of physical realization.
\end{remark}

\subsection{Other domains}
 
Finally, we highlight how these constraints manifest across diverse computational domains.

\paragraph{Neuromorphic, embodied, and mortal computation.}
Neural and embodied systems compute through continuous, spatially extended dynamics constrained by finite conduction speed and local connectivity.  
Within $\mathbf{RC}_3$, the flux limit $w(t)=\mathcal{O}(t^{2})$ imposes a ceiling on global synchronization and long-range coordination, forcing large-scale inference to emerge hierarchically.  
This constraint renders cortical and neuromorphic architectures---notably predictive-coding \cite{rao1999predictive, salvatori2025survey} and columnar models \cite{Horton2005Cortical, Buzsaki2006Rhythms, Sporns2016ModularBN, Eliasmith2013Nengo}---realizable in the physical sense: they operate near the entropy--flux bound of $\mathbf{RC}_3$.  
From this perspective, \emph{embodied intelligence} \cite{gupta2021embodied} represents computation that is spatially embedded within causal geometry, while \emph{mortal computation} \cite{ororbia2023mortal} denotes the same process under finite energy and entropy budgets.  
Both are natural instantiations of realizable computation, grounded in the same physical limits of signal propagation and entropy flux.

\paragraph{Quantum and reversible computation.}
Quantum systems obey the same geometric constraints while preserving information reversibly.  
The inclusion $\mathbf{RC}_d \subseteq \mathbf{QRC}_d$ follows from the existence of unitary embeddings for any classical process, but the converse fails because irreversible logic cannot in general be uncomputed.  
Even quantum systems are flux-bounded by the Lieb--Robinson limit \cite{LiebRobinson1972, Nachtergaele2006LiebRobinson}, which restricts information propagation to a linear light-cone.  
Thus $\mathbf{QRC}_d$ refines $\mathbf{BQP}$ by adding explicit spatial and flux constraints, clarifying why even quantum hardware cannot surpass the $t^{d-1}$ throughput limit \cite{Bravyi2006LiebRobinsonQuantum, bekenstein1990quantum}.

\paragraph{VLSI and spatial hardware design.}
For planar devices ($d=2$), the $\mathbf{RC}_2$ bound recovers the area--time tradeoff $A T^2 = \Omega(N^2)$ of Thompson's VLSI theory \cite{Thompson1979ATVLSI} and Rent's empirical rule on interconnect scaling \cite{LandmanRusso1971Rent, Stroobandt2010RentRuleSurvey}.  
Here $A=\mathcal{O}(t^2)$ corresponds to physical area, while the boundary throughput $w(t)=\mathcal{O}(t)$ constrains wiring density.  
Hence $\mathbf{RC}_d$ subsumes classical VLSI theory as its two-dimensional limit and naturally extends it to higher-dimensional substrates \cite{MeadConway1980VLSI}.

\paragraph{Distributed, edge, and communication systems.}
In distributed computing or sensor networks embedded in $\mathbb{R}^d$, information exchange between nodes is limited by the same causal and flux constraints.  
The $\mathcal{O}(t^{d-1})$ bound on cross-boundary transfer defines a geometric analogue of Shannon capacity \cite{shannon1948mathematical}, setting a hard limit on aggregate throughput.  
This result parallels Gupta and Kumar’s spatial capacity law for wireless networks, where total achievable rate scales as $O(\sqrt{n})$ in planar space \cite{gupta2002capacity}.  
Thus $\mathbf{RC}_d$ unifies circuit and network theories through a common geometric formulation of communication.

\paragraph{Energy-aware and thermodynamic machine learning.}
When learning dynamics are interpreted as physical state updates, Liouville’s theorem and Gibbs entropy conservation imply that the energy--delay product of any model is lower-bounded by its entropy flux.  
Consequently, the description-length or free-energy minimization objectives in modern energy-based and predictive-coding models \cite{friston2010free, scellier2017equilibrium, alemi2016deep, tishby2015deep, salvatori2025survey} are realizable precisely when their updates satisfy the $\mathbf{RC}_d$ bounds.  
This connects the Landauer limit \cite{landauer1961irreversibility, bennett1982thermodynamics}, information geometry, and computational generalization under a single physical principle.

\paragraph{Unified scaling law.}
Across all substrates---neuromorphic, quantum, VLSI, or distributed---the same asymptotic law emerges:
\[
t = \Omega\big(n^{1/(d-1)}\big),
\qquad
w(t) = \mathcal{O}(t^{d-1}),
\]
representing the universal trade-off between spatial extent, time, and entropy throughput.  
The $\mathbf{RC}_d$ framework thus defines a physically invariant limit on computation itself, independent of implementation or domain.

\section{Final Remarks}\label{sec:history}

The framework of \textbf{Realizable Circuits ($\mathbf{RC}_d$)} situates itself as the next major conceptual step in the physical grounding of computation.  
To appreciate its position and our contribution, it is instructive to trace the historical trajectory of how information and computation have been progressively reinterpreted through the lens of physical law.

\textbf{From information to realization.}
In 1948, Shannon’s information theory \cite{shannon1948mathematical} established that the fundamental limits of communication are governed not by hardware, but by probability and entropy.  
Shannon’s framework abstracted away geometry and energy, defining an informational capacity that could, in principle, be realized by any physical medium.

Two decades later, Landauer and Bennett \cite{bennett1985fundamental} reintroduced physics to information: computation, they argued, must obey thermodynamic laws.  
Landauer’s principle---that each logically irreversible operation dissipates at least $k_B T\ln 2$ joules of heat---marked the beginning of \emph{physical computability}.  
This work recognized that information is not merely symbolic but embodied, subject to conservation and dissipation in the same way as energy.

\textbf{From logical to geometric computation.}
The $\mathbf{RC}_d$ framework extends this lineage by embedding computation directly in space--time.  
Where Shannon quantified information and Landauer quantified energy, $\mathbf{RC}_d$ quantifies \emph{causal geometry}: how much computation can physically occur within a finite region of space and time.  
It replaces the unbounded abstractions of classical circuit classes ($\mathbf{NC}$, $\mathbf{AC}$, $\mathbf{TC}$) with realizability constraints derived from geometry, causality, and entropy flux.

Under this view, the classical parallel classes emerge as limiting cases.  
In the infinite-dimensional limit, $\mathbf{RC}^{\mathrm{poly}}_\infty = \mathbf{NC}$, recovering the traditional notion of unconstrained logical parallelism.  
For finite $d$, however, $\mathbf{RC}_d$ enforces physically grounded bounds:
\[
t(n) \ \ge\ \Omega(n^{1/(d-1)}), \qquad
w(C_n)\ =\ \mathcal{O}(t(n)^{d-1}),
\]
linking runtime, dimension, and information throughput into a single geometric invariant.

\textbf{Toward a physical theory of complexity.}
In this sense, $\mathbf{RC}_d$ plays the same unifying role for computation that Shannon’s theory played for communication.  
It provides a model in which \emph{logical, geometric, and thermodynamic} constraints coexist under one mathematical framework.  
If Shannon’s channel capacity set the limit on information transfer, the $\mathbf{RC}_d$ bound sets the limit on information \emph{transformation} in a causal universe.  
Together with Landauer’s thermodynamic cost of erasure, it completes the arc from information theory to a fully \emph{physical theory of complexity}.

\newpage

\hrule
\bibliographystyle{ieeetr}
\bibliography{ref}

\end{document}